\documentclass[webpdf,traditional,large,numbered]{oup-authoring-template}
\makeatletter
\@numbibtrue
\makeatother
\onecolumn

\usepackage{stmaryrd}       
\makeatletter
\@ifundefined{description}{%

}{}
\@ifundefined{appendix}{%
  \newcommand{\appendix}{%
    \setcounter{section}{0}%
    \setcounter{subsection}{0}%
    \gdef\thesection{\Alph{section}}%
    \gdef\theHsection{appendix.\Alph{section}}%
    \gdef\theHsubsection{appendix.\Alph{section}.\arabic{subsection}}}
}{}
\makeatother
\usepackage{enumitem}       
\usepackage{tikz}
\usetikzlibrary{arrows.meta,positioning,calc}

\definecolor{leanblue}{RGB}{0,0,160}
\definecolor{leangreen}{RGB}{0,128,0}
\definecolor{leangray}{RGB}{100,100,100}
\lstdefinelanguage{Lean}{
  morekeywords={def,theorem,structure,where,noncomputable,abbrev,import,
    namespace,open,universe,Sort,Prop,Type,Nat,inductive,extends,
    by,exact,intro,apply,simp,rfl,fun,let,have,match,with,
    axiom,variable,section,end,class,instance,if,then,else,return},
  sensitive=true,
  morecomment=[l]{--},
  morecomment=[n]{/-}{-/},
  morestring=[b]",
  basicstyle=\ttfamily\small,
  keywordstyle=\color{leanblue}\bfseries,
  commentstyle=\color{leangreen}\itshape,
  stringstyle=\color{leangray},
  breaklines=true,
  showstringspaces=false,
  columns=flexible,
  frame=single,
  backgroundcolor=\color{gray!5},
  xleftmargin=2em,
}
\theoremstyle{thmstyleone}
\newtheorem{theorem}{Theorem}[section]
\newtheorem{proposition}[theorem]{Proposition}
\newtheorem{lemma}[theorem]{Lemma}
\newtheorem{corollary}[theorem]{Corollary}

\theoremstyle{thmstyletwo}
\newtheorem{example}[theorem]{Example}
\newtheorem{remark}[theorem]{Remark}

\theoremstyle{thmstylethree}
\newtheorem{definition}[theorem]{Definition}

\newcommand{\Kinf}{K_\infty}

\newcommand{\THeq}{\mathrm{TH}_{\lambda =}}
\newcommand{\HoTFT}{\mathrm{HoTFT}}
\newcommand{\chpo}{\mathrm{c.h.p.o.}}
\newcommand{\ReductionSeq}{\mathrm{RedSeq}}

\newcommand{\HigherDeriv}{\mathrm{HigherDeriv}}
\newcommand{\whiskerLeft}{\mathrm{wl}}
\newcommand{\whiskerRight}{\mathrm{wr}}
\newcommand{\ass}{\mathrm{ass}}
\newcommand{\realize}{\mathrm{realize}}

\AtBeginDocument{%
\makeatletter
\def\ps@opening{%
  \def\@oddhead{\raisebox{\if@large-6pt\else\if@medium-6pt\else-6pt\fi\fi}[0pt][0pt]{%
            \hbox to \textwidth{{%
            \parbox{0.8\textwidth}{\raggedright\fontsize{7bp}{10bp}\selectfont
            \textit{\@journaltitle},\ \@copyrightyear, pp. \thepage--\thelastpage\par
            \@DOI\par\ifx\@access\@empty\else\@access\par\fi%
            \ifx\@appnotes\@empty\else\textbf{\@appnotes}\par\fi}
              }%
            \hfill\hbox{\color{black!20}\rule[-7.5pt]{55pt}{20pt}}}}}%
  \def\@evenhead{\raisebox{\if@large-6pt\else\if@medium-6pt\else-6pt\fi\fi}[0pt][0pt]{%
            \hbox to \textwidth{{%
            \parbox{0.8\textwidth}{\raggedright\fontsize{7bp}{10bp}\selectfont
            \textit{\@journaltitle},\ \@copyrightyear, pp. \thepage--\thelastpage\par
            \@DOI\par\ifx\@access\@empty\else\@access\par\fi%
            \ifx\@appnotes\@empty\else\textbf{\@appnotes}\par\fi}
              }%
            \hfill\hbox{\color{black!20}\rule{55pt}{20pt}}}}}%
  \def\@evenfoot{\raisebox{0mm}[0pt][0pt]{\parbox{\textwidth}{\noindent\rule{\textwidth}{.5pt}\par%
            \vskip2.5pt\fontsize{6.25bp}{8.25bp}\selectfont\@history\par\raggedright%
            \noindent\copyright\space The Author(s) \@copyrightyear.}}}%
  \let\@oddfoot\@evenfoot
}
\makeatother
}

\begin{document}

\journaltitle{Journal of Logic and Computation}
\DOI{}
\copyrightyear{2026}
\pubyear{2026}
\vol{00}
\issue{0}
\access{}
\appnotes{Paper}

\firstpage{1}

\title[Recursive Completion in Higher $\lambda$-Models]{%
  Recursive Completion in Higher
  \texorpdfstring{$\lambda$}{Lambda}-Models:
  Front-Seed Semantics, Proof-Relevant Witnesses, and the
  \texorpdfstring{$K_\infty$}{K-infinity} Model}

\author[1]{Daniel O. Mart\'{\i}nez-Rivillas}
\author[2,$\ast$]{Arthur F. Ramos}
\author[3]{Ruy J.G.B. de~Queiroz}

\address[1]{\orgdiv{Departamento de Matem\'{a}ticas},
  \orgname{Universidad Militar Nueva Granada (UMNG)},
  \orgaddress{\country{Colombia}}}

\address[2]{\orgname{Microsoft},
  \orgaddress{\country{USA}}}

\address[3]{\orgdiv{Centro de Inform\'{a}tica (CIn)},
  \orgname{Universidade Federal de Pernambuco (UFPE)},
  \orgaddress{Recife, \state{Pernambuco}, \country{Brazil}}}

\corresp[$\ast$]{Corresponding author. \href{mailto:arfreita@microsoft.com}{arfreita@microsoft.com}}


\abstract{Mart\'{\i}nez-Rivillas and de~Queiroz gave extensional Kan semantics
for the untyped $\lambda$-calculus and later constructed the concrete
$K_\infty$ homotopy $\lambda$-model.  The two main mathematical results of the
present paper are these.  First, we show that a smaller front-seed coherence
package---WLWR together with an inner-right-front pentagon contraction---already
suffices to recover the associator comparison, semantic pentagon, and bridge
theorems used in the later semantic arguments.  Second, we prove explicit
global reify, reflect, and application formulas for $K_\infty$, with exact
coordinatewise identities at every finite stage.  We also record two structural
clarifications: the recursive all-dimensional continuation of the explicit
low-dimensional tower is obtained by a finite packaging phase followed by a
uniform equality-generated recursion; and, on a deliberately fixed forward
witness language for the classical $\beta/\eta$ separation span, the canonical
identity-type higher tower on $K_\infty$ forces all higher non-connection once
the two witness classes land at distinct points.  The paper is fully
formalized in Lean~4, and the project sources contain no local uses of
\texttt{sorry}, \texttt{admit}, or \texttt{axiom}.}

\keywords{higher lambda-calculus, proof relevance, extensional Kan complexes,
  globular coherence, denotational semantics, formal verification}

\maketitle

\noindent\textbf{2020 Mathematics Subject Classification.}
03B40, 68Q55, 18N50.
\par\medskip

\section{Introduction}\label{sec:intro}

\subsection{Background and lineage}

The untyped $\lambda$-calculus has served since its creation as a meeting ground
for logic, computation, and algebra.  The classical denotational tradition,
beginning with Scott's construction of continuous
lattices~\cite{Scott:continuous} and codified in
Barendregt's~\cite{Barendregt:lambda} systematic treatment, interprets
$\lambda$-terms as elements of reflexive domains---ordered structures carrying
reify and reflect maps.  In that tradition the key equation
$F(G(f))(x)=f(x)$ (and its dual) collapses the distinction between data and
function, and the resulting models are \emph{extensional} precisely when
$G(F(x))=x$.
This interplay between the logical structure of convertibility and the
computational content of reduction witnesses is a central theme of the present
paper.

A broader motivation is the question whether conversion in a logical calculus
can be organized into explicit higher witness data rather than only into a
proposition-level equality.  Identity types in homotopy type
theory~\cite{HoTT} and the computational-paths treatment of rewriting
systems~\cite{deVeras:computational} provide familiar proof-relevant
precedents, but we use them here only as orientation.  The technical starting
point is instead the work of Mart\'{\i}nez-Rivillas and de~Queiroz for the
\emph{untyped} calculus~\cite{MRdQ:homotopydomain,MRdQ:topological,MRdQ:theory,MRdQ:Kinfinity},
which interprets $\lambda$-terms in extensional Kan complexes, organizes higher
$\beta\eta$-conversions into explicit $n$-cells, and recovers the classical
conversion theory as the $0$-truncation of that richer structure.

Against that background, the present paper asks a precise completion question:
once Paper~I provides the explicit low-dimensional higher-conversion core and
Paper~II provides the concrete $\Kinf$ model together with its distinguished
$\beta/\eta$ witness pair, what all-dimensional structure is forced, what
semantic coherence data are actually needed, and which parts of the
$\Kinf$ story can be made exact rather than merely existential?  Two of the
answers are substantive semantic/model-theoretic theorems (front-seed
sufficiency and exact packaging), while the other two are structural
clarifications about how the recursive completion and the canonical
$\Kinf$ tower behave once the baseline data are fixed.

\subsection{The two baseline papers}

We rely on two specific antecedents as a baseline; all results quoted
from them are taken as established.

\medskip
\noindent\textbf{Paper~I}~\cite{MRdQ:theory} introduced extensional Kan
semantics for the untyped $\lambda$-calculus, defined the explicit tower of
higher $\beta$- and $\eta$-conversions through dimension~3, and proved the
central witness-lifting theorem: for every extensional Kan complex $K$, every
explicit reduction witness lifts to a semantic $1$-cell, and explicit
syntactic $2$- and $3$-cells lift to semantic $2$- and $3$-cells between the
corresponding interpreted witnesses.
The proof proceeds by induction on the structure of reduction sequences,
lifting individual-step $\beta$- and $\eta$-soundness to a full semantic
interpretation of each dimension of the explicit tower.  The resulting theory
enriches the classical proposition-level equality $M=_{\beta\eta}N$ with a
proof-relevant layer of semantic $n$-cells.  Paper~I also established
$\beta\eta$-confluence via Hindley--Rosen and proved the invertibility of
reduction sequences via Church--Rosser.

\medskip
\noindent\textbf{Paper~II}~\cite{MRdQ:Kinfinity} constructed the $\Kinf$
homotopy $\lambda$-model as the inverse limit
$\Kinf=\varprojlim_n K_n$ of a tower of iterated function spaces over a
flat base $K_0=N^+$, where $N=\bigsqcup_{n\ge 0} S^n$.
The construction follows the classical Scott tradition but enriches it with an
explicit homotopical layer.  Paper~II proved that $\Kinf$ is an algebraic
bounded-complete domain (Proposition~4.1) and exhibited a separation
phenomenon, recalled below in Example~\ref{ex:witness-span}: the
$\beta$-contraction and the $\eta$-contraction
of a common redex land at the same target but receive semantically distinct
images in~$\Kinf$.  That example established the nontriviality of the
proof-relevant structure but did not analyze the witness language at the level
of explicit reduction terms or trace the separation through the recursively
completed higher-cell tower.
For provenance we continue to cite Paper~II, but Sections~\ref{sec:Kinfty}
and~\ref{sec:witness} restate the specific ingredients used below: the
projection-pair tower, the algebraicity/bounded-completeness proposition, and
the fixed witness-separation span.

\medskip
\noindent
Table~\ref{tab:paper-delta} makes the delta over the two baseline papers
explicit.

\begin{table}[htbp]
\centering
\small
\begin{tabular}{p{0.12\textwidth}p{0.31\textwidth}p{0.41\textwidth}}
\textbf{Package} & \textbf{Inherited baseline} & \textbf{Gap filled here} \\
\hline
Theorem~A &
Paper~I supplies the explicit low-dimensional tower and recursive
higher-derivation machinery. &
We compare the explicit low-dimensional presentation with its
all-dimensional continuation and prove strict globular boundary compatibility. \\
Theorem~B &
Associator/pentagon reasoning was available only through a stronger coherent
semantic interface. &
We isolate a smaller chosen-data interface sufficient for the later semantic
comparison arguments. \\
Theorem~C &
Paper~II isolated a distinguished $\beta/\eta$ witness pair in $\Kinf$. &
We isolate a fixed forward witness language on that span and show that, in the
canonical equality-generated tower on $\Kinf$, the point-level separation
persists automatically in every higher dimension. \\
Theorem~D &
Paper~II constructed the inverse-limit model and its basic reflexive structure. &
We prove exact global reify, reflect, and application formulas with two-sided
equations. \\
\end{tabular}
\caption{Exact delta over the inherited baseline.}
\label{tab:paper-delta}
\end{table}

\subsection{Contributions of the present paper}

Starting from those baseline results, this paper fills four
specific gaps left by Papers~I and~II.  The four theorem packages are not all
of the same kind.  Theorems~\ref{thm:B} and~\ref{thm:D} are the primary new
generic-semantic and concrete-model results.  Theorems~\ref{thm:A}
and~\ref{thm:C} are structural clarifications: they explain what follows once
the low-dimensional syntax and the point-level $\Kinf$ separation are fixed.

\medskip
\noindent\textbf{Main contributions.}
\begin{enumerate}[label=(\Alph*)]
  \item \textbf{Comparison with the recursive completion
     (Section~\ref{sec:all-dim}).}
    We record a structural comparison theorem for the explicit higher
    $\lambda$-conversion tower.  Paper~I supplies explicit data through
    dimension~$3$, and the recursive higher-derivation machinery already
    suggests how to continue above that range.  The content of
    Theorem~\ref{thm:A} is to compare the explicit low-dimensional presentation
    with the equality-generated continuation, identify the intermediate
    dimensions in which those presentations meet, and prove strict globular
    boundary compatibility for the resulting all-dimensional tower.

  \item \textbf{Front-seed sufficient semantic coherence
    (Section~\ref{sec:front-seed}).}
    We isolate the precise refinement of the coherent semantic interface needed
    later: a bare extensional Kan complex equipped with a
    whiskering-left/whiskering-right comparison and only the
    inner-right-front pentagon contraction already suffices to recover the
    recursive associator comparison theorem, the semantic pentagon comparison
    for interpreted reduction sequences, and the explicit source bridge together
    with the target/shell bridges to the mixed target shell.  The gain is that
    one specific smaller front-facing seed is sufficient for the later
    arguments while keeping WLWR explicit.

  \item \textbf{Fixed-span witness classification and separation
     (Section~\ref{sec:witness}).}
     We refine the distinguished $\beta/\eta$ witness pair already isolated in
     Paper~II on the witness-separation span of
    Example~\ref{ex:witness-span} to a theorem about a deliberately fixed
    forward witness language on that span.  The main new input is a canonical
    classification of those witnesses into $\beta$- and $\eta$-classes together
    with their canonical $\Kinf$ interpretations.  This is intentionally a
    fixed-span theorem rather than a global witness calculus: once the
    point-level dichotomy is fixed on that span, the higher-dimensional
    non-connection statements become structural consequences of the canonical
    equality-generated $\Kinf$ tower.

  \item \textbf{Exact $\Kinf$ reflexive packaging
     (Section~\ref{sec:packaging}).}
    We establish that the inverse limit $\Kinf$ carries globally continuous
    reify, reflect, and application operations satisfying explicit stagewise
    formulas, yielding a complete reflexive $\chpo$.  The novelty is not the
    abstract existence of a reflexive inverse limit, but the concrete formulas
    and exact coordinate identities needed later in the witness analysis.
\end{enumerate}

\noindent
The theorem labels are ordered by conceptual role, not by the narrative route
through the paper.  Theorems~\ref{thm:B} and~\ref{thm:D} carry the principal new
semantic and model-theoretic input.  Theorem~\ref{thm:A} isolates the finite
packaging boundary between the explicit low-dimensional syntax and the recursive
continuation, while Theorem~\ref{thm:C} records what the fixed-span point-level
distinction implies inside the canonical higher tower on $\Kinf$.
Theorem~\ref{thm:D} is developed before Theorem~\ref{thm:C} because the
witness-persistence argument in Section~\ref{sec:witness} uses the exact
$\Kinf$ packaging proved in Section~\ref{sec:packaging}.

\subsection{Structure of the paper}
Sections~\ref{sec:prelim}--\ref{sec:tower} collect definitions and
restate the baseline results of Papers~I and~II.
Sections~\ref{sec:all-dim}--\ref{sec:witness} contain the new theorems,
presented with detailed proof arguments.  The narrative order is deliberate:
Section~\ref{sec:all-dim} proves the structural tower comparison
(Theorem~\ref{thm:A}), Section~\ref{sec:front-seed} isolates the smaller
semantic interface (Theorem~\ref{thm:B}), Section~\ref{sec:packaging}
establishes the exact $\Kinf$ model theorem (Theorem~\ref{thm:D}), and
Section~\ref{sec:witness} returns to the fixed witness span to prove the final
persistence theorem (Theorem~\ref{thm:C}).  Section~\ref{sec:discussion} gives
a detailed comparison with related work and a list of open problems.

Sections~\ref{sec:all-dim} and~\ref{sec:witness} should be read as structural
completion and persistence theorems: they explain what follows once the
low-dimensional core and the point-level $\Kinf$ separation are fixed.
Sections~\ref{sec:front-seed} and~\ref{sec:packaging} isolate the semantic
interface and the model-specific inverse-limit structure on which those
structural arguments rest.

For quick reference: Theorem~\ref{thm:A} concerns only the syntactic tower,
Theorem~\ref{thm:B} concerns the generic semantic interface, Theorem~\ref{thm:D}
concerns the concrete inverse-limit model, and Theorem~\ref{thm:C} combines the
fixed-span witness tags with Theorem~\ref{thm:D} inside the canonical
$\Kinf$ tower.

The mathematical content comes first throughout.  The formal
development~\cite{Lean4} plays an evidential role: a brief verification note at
the end of the paper records that all stated results have been mechanically
checked, but the paper is self-contained and can be read without reference to
the proof assistant.
The constructive character of the proofs---particularly the explicit witness
constructions in Theorems~\ref{thm:A} and~\ref{thm:C}---reflects the paper's
orientation toward the logic-and-computation tradition, where the computational
content of proofs is a first-class object of study.

\section{Preliminaries and Setup}\label{sec:prelim}

\subsection{Lambda-terms and reduction}

We work with the untyped $\lambda$-calculus using de~Bruijn indices. A
\emph{term} is generated by:
\[
  M, N ::= \mathtt{var}\;n \mid M\;N \mid \lambda\,M
\]
where $n \in \mathbb{N}$ is a de~Bruijn index. We write $M\;N$ for application
and $\lambda\,M$ for abstraction. Substitution $M[N]$ denotes the capture-free
replacement of the variable at index~$0$ in $M$ by $N$, with the standard
shifting conventions.

A \emph{$\beta$-step} is a single reduction of the form $(\lambda\,M)\,N
\to_\beta M[N]$, closed under subterm contexts. An \emph{$\eta$-step} is a
single reduction of the de~Bruijn form
\[
  \lambda\,((\mathsf{shift}(1,0)M)\,(\mathtt{var}\;0)) \to_\eta M,
\]
equivalently
\[
  \lambda\,(M\,(\mathtt{var}\;0)) \to_\eta \mathsf{shift}(-1,0)M,
\]
when $\mathtt{var}\;0$ does not occur free in $M$. A \emph{$\beta\eta$-step} is
either a $\beta$-step or an $\eta$-step.

\begin{definition}[Reduction sequences]
  A \emph{reduction sequence} from $M$ to $N$ is a finite sequence of
  $\beta\eta$-steps and their formal inverses:
  \[
    M = M_0 \leftrightarrow M_1 \leftrightarrow \cdots \leftrightarrow M_k = N.
  \]
  We write $\ReductionSeq(M,N)$ for the type of all such sequences. Each
  reduction sequence is an explicit proof-relevant witness for the
  $\beta\eta$-convertibility of $M$ and $N$.
\end{definition}

\subsection{Extensional Kan complexes}

The semantic setting is that of \emph{Kan complexes} in the simplicial sense.

\begin{definition}[Kan complex]
  A \emph{Kan complex} $K$ is a simplicial set equipped with a chosen horn
  filler: for every horn $\Lambda \subset \Delta^{n+1}$ with a missing face, a
  chosen $(n+1)$-simplex $\sigma$ extending the given faces. Formally, $K$
  carries a function
  \[
    \mathsf{fill} : \mathrm{Horn}(K, n, j) \to K_{n+1}
  \]
  satisfying the face-matching specification for all non-missing faces.
\end{definition}

\begin{definition}[Reflexive and extensional Kan complexes]
  A Kan complex $K$ is \emph{reflexive} if it is equipped with maps $F: |K|
  \to [K \to K]$ and $G: [K \to K] \to |K|$ satisfying the $\eta$-law
  $F(G(f))(x) = f(x)$. It is \emph{extensional} if additionally $G(F(x)) = x$
  for all objects $x$. Here $[K \to K]$ denotes the endofunction space on
  $0$-simplices used by the semantic interpretation, and $|K|$ is the set of
  $0$-simplices.
\end{definition}

An extensional Kan complex $K$ interprets $\lambda$-terms via:
\[
  \llbracket\mathtt{var}\;n\rrbracket_\rho = \rho(n),\qquad
  \llbracket M\;N \rrbracket_\rho = F(\llbracket M\rrbracket_\rho)
    (\llbracket N\rrbracket_\rho),\qquad
  \llbracket \lambda\,M \rrbracket_\rho =
    G\bigl(x \mapsto \llbracket M \rrbracket_{\rho[0 \mapsto x]}\bigr).
\]

\begin{definition}[The theories $\THeq$, $\mathrm{TheoryEq}_K$, and the proof-relevant tower over $K$]\label{def:theories}
  Fix an extensional Kan complex $K$.  The \emph{model-specific equality
  theory} is
  \[
    \mathrm{TheoryEq}_K(M,N)
    \;:\equiv\;
    \forall\,\rho,\;\llbracket M\rrbracket_\rho = \llbracket N\rrbracket_\rho.
  \]
  Separately, the \emph{classical $\lambda$-theory} $\THeq$ is the syntactic
  relation $M =_{\beta\eta} N$.  For the fixed model $K$, the proof-relevant
  layers $\mathrm{Theory}_1(K)$, $\mathrm{Theory}_2(K)$, and
  $\mathrm{Theory}_3(K)$ enrich $\mathrm{TheoryEq}_K$ by retaining explicit
  semantic $1$-, $2$-, and $3$-cell witnesses.  The global paper-I assertion
  $\HoTFT$ is obtained by quantifying these model-specific layers over all
  extensional Kan complexes.
\end{definition}

\begin{remark}[Comparing the three theories]
The three notions should be kept distinct.  The classical theory $\THeq$ is
syntactic and propositional: the assertion $M =_{\beta\eta} N$ records only
that some finite conversion chain exists, without remembering any particular
chain.  For a fixed model $K$, the relation $\mathrm{TheoryEq}_K$ forgets
  witnesses in a different way: it asks only that all environments give equal
  denotations.  By contrast, the proof-relevant tower over $K$ retains the
  \emph{explicit witness}: a pair $(M,N)$ in $\mathrm{Theory}_1(K)$ is
  accompanied by a specified semantic path, and the higher layers are indexed by
  parallel lower-dimensional witnesses.  The passage from $\THeq$ to the global
  $\HoTFT$ statement is therefore not merely a set-theoretic strengthening but a
  genuinely proof-relevant enrichment.  This enrichment is not an artefact of
  the formalism.  Theorem~\ref{thm:C} below demonstrates that the $\Kinf$ model
  separates $\beta$-witnesses from $\eta$-witnesses even when they connect the
  same pair of terms, so the additional structure carried by the proof-relevant
  tower is genuinely non-degenerate.
\end{remark}

\subsection{Higher conversions}

\begin{definition}[Explicit higher cells]
  Given terms $M, N$, a \emph{2-cell} $\alpha: p \Rightarrow q$ between
  reduction sequences $p, q : \ReductionSeq(M,N)$ is a formal witness of
  homotopy between $p$ and $q$, built inductively from reflexivity, symmetry,
  transitivity, congruence under whiskering, horizontal composition, and the
  structural associator and unitor operations.

  A \emph{3-cell} $\Theta: \alpha \Rrightarrow \beta$ between 2-cells is built
  one dimension higher, with interchange, pentagon, triangle, and the higher
  whiskering/composition witnesses appearing at this level.
\end{definition}

The tower of $n$-cells is organized into \emph{$n$-conversions} $\Sigma_n$
(reduction sequences at dimension~1, homotopies at dimension~2, etc.) and
\emph{$n$-terms} $\Pi_n$ (explicit witnesses for $n$-conversions).
Here $\Sigma_n$ denotes the packed $n$th level of the globular tower; the
boundary-indexed fibres $\ReductionSeq(M,N)$, $\mathrm{Homotopy}_2(p,q)$, and
$\mathrm{Homotopy}_3(\alpha,\beta)$ are the corresponding slices over chosen
source/target data.

\section{Baseline: Soundness and \texorpdfstring{$\THeq \subseteq \HoTFT$}{TH\_lambda= subset HoTFT}}
\label{sec:baseline}

The results of this section are not new: they were established in
Paper~I~\cite{MRdQ:theory} and are restated here in the notation of the present
paper so that the subsequent sections can cite them precisely.  The reader
already familiar with~\cite{MRdQ:theory} may safely skip to
Section~\ref{sec:tower}.

\medskip
The foundation of the semantic theory is the soundness of $\beta$- and
$\eta$-reduction, and the architecture by which that foundation lifts through
the explicit tower is worth stating precisely, both because it fixes notation
and because the subsequent sections extend the same inductive strategy to
higher-dimensional data.

The logical structure of the lifting is as follows.  At the base, one proves
that a single $\beta$-step (respectively $\eta$-step) forces equality of
semantic interpretations in any extensional Kan complex; equivalently, it
supplies the degenerate semantic $1$-cell determined by that equality.  This is
the content of the two soundness theorems below.  A key supporting ingredient is the
\emph{substitution lemma}: the semantic interpretation commutes with syntactic
substitution,
\[
  \llbracket M[N/0] \rrbracket_\rho
  \;=\;
  \llbracket M \rrbracket_{\rho[0\mapsto \llbracket N\rrbracket_\rho]},
\]
which reduces soundness of the $\beta$-rule to a computation in the function
space.  From these individual-step results one passes to composite reduction
sequences by induction on sequence length: concatenation of sequences lifts to
composition of semantic paths, and formal inversion lifts to symmetry of paths.

At the next level, each inductive constructor of a $2$-cell---left whiskering,
right whiskering, horizontal composition, associators, and unitors---is shown
to lift to a semantic $2$-cell, using the composition and whiskering operations
of the Kan complex.  The $3$-cell constructors (interchange, pentagon,
triangle, and the higher whiskering/composition witnesses) then lift by the
same scheme one dimension higher, using the $3$-dimensional horn-filling
structure of the Kan complex to produce the required higher homotopies.  In
each case the inductive step is structural: the semantic image of a composite
constructor is the corresponding composite of the semantic images of its
constituents.

The three theorems below record the cumulative outcome of this lifting
procedure.

\begin{theorem}[$\beta$-soundness]
  For every extensional Kan complex $K$, every $\beta$-step $M \to_\beta N$
  yields equality
  \[
    \llbracket M \rrbracket_\rho = \llbracket N \rrbracket_\rho
  \]
  for every environment $\rho$, and hence the corresponding degenerate semantic
  $1$-cell.
\end{theorem}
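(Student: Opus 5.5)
The argument has two parts: first the root $\beta$-redex, using the substitution lemma stated above, then an induction over the one-hole context in which the redex sits. The degenerate $1$-cell is then just the identity $1$-simplex (the degeneracy $s_0$) at the common denotation, transported along the equality.

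\emph{Root case.} Take $M = (\lambda\,P)\,Q$ and $N = P[Q]$. Unfolding the interpretation gives
\[
\llbracket (\lambda P)\,Q\rrbracket_\rho
= F\bigl(G(x\mapsto \llbracket P\rrbracket_{\rho[0\mapsto x]})\bigr)\bigl(\llbracket Q\rrbracket_\rho\bigr).
\]
The $\eta$-law $F(G(f))(x)=f(x)$ of a reflexive Kan complex then rewrites this to $\llbracket P\rrbracket_{\rho[0\mapsto \llbracket Q\rrbracket_\rho]}$. The substitution lemma identifies that with $\llbracket P[Q]\rrbracket_\rho$. Only reflexivity is used here; extensionality is not needed.

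\emph{Congruence cases.} I would induct on the derivation of the step $M\to_\beta N$, proving the equality for all $\rho$ simultaneously.
\begin{itemize}
\item If the step is inside the left of an application, $M_1M_2\to M_1'M_2$, the induction hypothesis gives $\llbracket M_1\rrbracket_\rho=\llbracket M_1'\rrbracket_\rho$. Rewriting under $F(\cdot)(\llbracket M_2\rrbracket_\rho)$ finishes the case.
\item The right-argument case is symmetric.
\item Under an abstraction, $\lambda M\to\lambda M'$, the induction hypothesis is applied at every extended environment $\rho[0\mapsto x]$. This gives equality of the two functions $x\mapsto\llbracket M\rrbracket_{\rho[0\mapsto x]}$ and $x\mapsto\llbracket M'\rrbracket_{\rho[0\mapsto x]}$ in $[K\to K]$ by function extensionality, and applying $G$ yields the claim.
\end{itemize}
Because the quantification over $\rho$ sits inside the induction hypothesis, the abstraction case goes through.

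\emph{Main obstacle.} The substantive part is the substitution lemma itself under de~Bruijn conventions, which I take as given from the text. Proving it would require a generalized statement for substitution at index $k$ together with a shifting lemma of the form
\[
\llbracket\mathsf{shift}(1,k)M\rrbracket_{\rho'} = \llbracket M\rrbracket_{\rho},
\]
where $\rho'$ is $\rho$ with an extra slot inserted at position $k$. Both would be proved by induction on $M$, with the variable case split into the three subcases according to how the index compares with $k$. Given the lemma, the remaining argument is routine rewriting.
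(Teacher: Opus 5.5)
Your proposal is correct and follows essentially the route the paper indicates for this baseline result inherited from Paper~I: the root redex is reduced by the law $F(G(f))(x)=f(x)$ plus the substitution lemma (the ``computation in the function space''), and the result is extended through the contextual closure by induction with $\rho$ generalized, the degenerate $1$-cell sitting on the common denotation. Your remarks that only reflexivity (not extensionality) is used, and that the real work lies in the de~Bruijn substitution/shifting lemma, are both accurate.
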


\begin{theorem}[$\eta$-soundness]
  For every extensional Kan complex $K$, every $\eta$-step $M \to_\eta N$
  yields equality
  \[
    \llbracket M \rrbracket_\rho = \llbracket N \rrbracket_\rho
  \]
  for every environment $\rho$, and hence the corresponding degenerate semantic
  $1$-cell.
\end{theorem}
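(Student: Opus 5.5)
The plan is to argue by induction on the derivation of the $\eta$-step $M \to_\eta N$, viewed as the root $\eta$-contraction closed under the three subterm contexts (application on the left, application on the right, abstraction). The statement is quantified over all environments $\rho$, so the induction hypothesis will be stated for every environment. This matters in the abstraction case, where the environment changes. Once equality $\llbracket M\rrbracket_\rho=\llbracket N\rrbracket_\rho$ is established, the degenerate semantic $1$-cell is simply the degeneracy $s_0$ of that $0$-simplex, transported along the equality. I treat that last step as routine.

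\textbf{Root case.} The contraction is $\lambda\,((\mathsf{shift}(1,0)M)\,(\mathtt{var}\;0)) \to_\eta M$. By the interpretation clauses, the left side denotes
\[
G\bigl(x\mapsto F(\llbracket \mathsf{shift}(1,0)M\rrbracket_{\rho[0\mapsto x]})(x)\bigr).
\]
The key auxiliary fact is a \emph{shifting lemma}:
\[
\llbracket \mathsf{shift}(1,0)M\rrbracket_{\rho[0\mapsto x]}=\llbracket M\rrbracket_\rho .
\]
Here $\rho[0\mapsto x]$ must mean the cons-extension, which pushes the old indices up by one; this is exactly the environment convention already used in the $\lambda$ clause and in the substitution lemma. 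I will prove the more general statement
\[
\llbracket \mathsf{shift}(1,c)M\rrbracket_{\rho'}=\llbracket M\rrbracket_{\rho}
\]
by structural induction on $M$, where $\rho'$ inserts an arbitrary value at position $c$. The cutoff $c$ has to be generalized because passing under a $\lambda$ increments it. Granting the lemma, the denotation becomes $G(x\mapsto F(\llbracket M\rrbracket_\rho)(x))$, and the inner function is $F(\llbracket M\rrbracket_\rho)$ itself. Here I use that $[K\to K]$ is a plain function space, so function extensionality gives equality of the two functions. The extensionality law $G(F(y))=y$ then yields $\llbracket M\rrbracket_\rho$, as required.

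If the $\eta$-step is instead presented in the second form, $\lambda\,(M\,(\mathtt{var}\;0))\to_\eta \mathsf{shift}(-1,0)M$ with $\mathtt{var}\;0$ not free in $M$, I first rewrite $M=\mathsf{shift}(1,0)(\mathsf{shift}(-1,0)M)$. This is a standard de~Bruijn identity under the freeness side condition, and it reduces this form to the root case above.

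\textbf{Congruence cases.} For application, $\llbracket M\,P\rrbracket_\rho=F(\llbracket M\rrbracket_\rho)(\llbracket P\rrbracket_\rho)$. Rewriting with the induction hypothesis for the reduced component gives the result, and the right-argument case is symmetric. For abstraction, the induction hypothesis at every environment $\rho[0\mapsto x]$ gives pointwise equality of the two functions $x\mapsto\llbracket M\rrbracket_{\rho[0\mapsto x]}$ and $x\mapsto\llbracket N\rrbracket_{\rho[0\mapsto x]}$. Function extensionality makes them equal, and applying $G$ to equal arguments gives equal results.

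\textbf{Main obstacle.} I expect the only real obstacle to be the shifting lemma and the de~Bruijn bookkeeping around it: the generalized cutoff and the insertion environment, and, for the second presentation, the identity $\mathsf{shift}(1,0)\circ\mathsf{shift}(-1,0)=\mathrm{id}$ on terms not containing index $0$. Everything else uses only the $\lambda$ interpretation clause, extensionality $G\circ F=\mathrm{id}$, and function extensionality. Notably, the $\eta$-law $F(G(f))=f$ is not needed here; that law is what drives $\beta$-soundness, together with the substitution lemma.
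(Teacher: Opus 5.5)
Your proposal is correct and follows the route the paper has in mind. The paper gives no detailed proof here and defers to Paper~I, but its sketch fits your argument: a shift/weakening lemma does for $\eta$ what the substitution lemma does for $\beta$, the extensionality law $G(F(x))=x$ closes the root case, and the context cases are congruences with the induction hypothesis quantified over all environments, after which the degenerate $1$-cell is obtained from the equality as you describe.
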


By induction on the structure of reduction sequences and their higher-cell
constructors, these individual-step soundness results lift to witness-level
interpretations of the explicit tower:

\begin{theorem}[{$\THeq \subseteq \mathrm{Theory}_1(K)$,
  $\Sigma_2 \subseteq \mathrm{Theory}_2(K)$,
  $\Sigma_3 \subseteq \mathrm{Theory}_3(K)$}]
\label{thm:baseline}
  Let $K$ be an extensional Kan complex. Then:
  \begin{enumerate}[label=(\roman*)]
    \item Every reduction sequence $p : \ReductionSeq(M,N)$ induces a semantic
      path in $K$.  In particular, whenever $M =_{\beta\eta} N$ in the
      proposition-level theory $\THeq$, choosing any explicit
      reduction-sequence representative yields a proof-relevant semantic lift in
      $\mathrm{Theory}_1(K)$.
    \item If $p,q : \ReductionSeq(M,N)$ and $\alpha : p \Rightarrow q$, then
      $\alpha$ induces a semantic 2-cell between the interpreted $1$-cells of
      $p$ and $q$.
    \item If $p,q : \ReductionSeq(M,N)$, $\alpha,\beta : p \Rightarrow q$, and
      $\Theta : \alpha \Rrightarrow \beta$, then $\Theta$ induces a semantic
      3-cell between the interpreted $2$-cells of $\alpha$ and $\beta$.
  \end{enumerate}
\end{theorem}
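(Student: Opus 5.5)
The proof is by structural induction through the three layers of the explicit tower, with each layer's interpretation defined before the next is lifted.

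\textbf{Dimension 1.} For (i) we define the interpretation $\llbracket p\rrbracket_\rho$ of a reduction sequence by recursion on its length. The empty sequence goes to the degenerate $1$-simplex on $\llbracket M\rrbracket_\rho$. A forward $\beta$- or $\eta$-step goes to the degenerate $1$-cell supplied by $\beta$-soundness and $\eta$-soundness respectively. A formal inverse step goes to the Kan-filled inverse of that path, obtained from a chosen filler of a $\Lambda^2_0$-horn. Concatenation goes to the chosen composite, obtained by filling a $\Lambda^2_1$-horn and taking the missing face. For the ``in particular'' clause, $M=_{\beta\eta}N$ is witnessed by some finite conversion chain. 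Classical choice yields a representative $p:\ReductionSeq(M,N)$, and $(M,N,\llbracket p\rrbracket)$ is then an element of $\mathrm{Theory}_1(K)$. We must check that the endpoints of $\llbracket p\rrbracket_\rho$ are $\llbracket M\rrbracket_\rho$ and $\llbracket N\rrbracket_\rho$; this follows by induction from the face specifications of the chosen fillers.

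\textbf{Dimension 2.} For (ii) we proceed by induction on the constructors of $\alpha : p\Rightarrow q$ and assign a semantic $2$-cell with the required boundary to each one.
\begin{itemize}
  \item Reflexivity, symmetry and transitivity go to the degenerate $2$-simplex and to the inverse and composite of $2$-cells obtained by horn filling one dimension up.
  \item Left and right whiskering by a $1$-cell $r$ go to the semantic whiskering of the lifted $2$-cell by $\llbracket r\rrbracket$, again constructed by horn filling. Horizontal composition is defined as a whiskering composite.
  \item The associator, for $p\cdot(q\cdot r)$ versus $(p\cdot q)\cdot r$, comes from filling a $\Lambda^3_1$ or $\Lambda^3_2$ horn assembled from the chosen composition triangles. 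The unitors come from degenerate faces together with a $3$-horn filler.
\end{itemize}
Each case needs the boundary lemma that the constructed cell has faces $\llbracket p\rrbracket$ and $\llbracket q\rrbracket$, and this is verified from the filler specifications.

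\textbf{Dimension 3.} Item (iii) follows the same scheme one level higher. Reflexivity, symmetry, transitivity and the whiskering/composition witnesses for $3$-cells are handled by the dimension-$4$ analogues of the constructions above. The interchange, pentagon and triangle witnesses are produced by assembling a $4$-dimensional horn from the already-constructed $2$-cells and associator/unitor $3$-simplices, then taking the missing face of the chosen filler.

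\textbf{Main obstacle.} The delicate step is the boundary bookkeeping in dimension $3$, especially for the pentagon. The faces of the assembled horn have to match exactly the interpreted $2$-cells $\llbracket\alpha\rrbracket$ and $\llbracket\beta\rrbracket$, and these were themselves defined through chosen fillers. My first attempt is to prove a general lemma stating that any two semantic $2$-cells with the same boundary are connected by a $3$-cell obtained from a $\Lambda^3$-horn filler. Every $3$-cell constructor then reduces to exhibiting a common boundary, which the dimension-$2$ boundary lemma already supplies. This avoids handling the pentagon combinatorics directly, at the cost of losing canonicity of the chosen $3$-cell. That loss is acceptable here, because the statement only asserts existence of an induced $3$-cell.
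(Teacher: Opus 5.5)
Your overall plan of structural induction through the three layers, lifting each constructor, is the paper's. Your first sentence on dimension~$3$, which assembles a horn from already-constructed cells and takes the missing face, is also on the paper's track.

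\textbf{The gap.} It is the shortcut you adopt for the ``main obstacle''. The lemma that any two semantic $2$-cells with the same boundary are joined by a $3$-cell obtained from a $\Lambda^3$-horn filler is false in a general Kan complex. Put $\alpha$, $\beta$ and a degenerate $2$-simplex into a $\Lambda^3_k$-horn. The chosen filler supplies a fourth face about which you know only that its edges are degenerate. Its class in $\pi_2$ is exactly the obstruction to $\alpha\simeq\beta$ rel boundary, and nothing forces it to be degenerate. So the filler is not a $3$-cell $\alpha\Rrightarrow\beta$.

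\textbf{Counterexample.} Take the Dold--Kan model of $K(\mathbb{Z},2)$. It has one vertex, one edge, and $2$-simplices indexed by $\mathbb{Z}$. The $2$-cells $0$ and $1$ are parallel. However, the alternating face sum of every $3$-simplex vanishes (since $\pi_2=\mathbb{Z}$ forces the image of the Moore boundary $K_3\to K_2$ to be zero). Hence no $3$-simplex has $0$ and $1$ as two faces with its remaining faces degenerate. This $K$ has a single $0$-simplex, so it satisfies the $0$-simplex-level extensionality of Section~\ref{sec:prelim} trivially, and the hypotheses do not exclude it. Semantic composites, whiskerings and associators are defined through arbitrary chosen fillers, so nothing prevents interpreted $2$-cells from carrying such $\pi_2$-classes. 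What your shortcut loses is therefore not canonicity but existence.

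\textbf{What is needed.} You must do what the paper's proof does and lift each $3$-cell constructor separately: interchange, pentagon, triangle, and the higher whiskering/composition witnesses. For each one, fill a horn whose faces are the specific simplices whose missing faces \emph{defined} the semantic composites, whiskerings, associators and unitors one dimension lower. Then the horn's boundary agrees with $\llbracket\alpha\rrbracket$ and $\llbracket\beta\rrbracket$ by construction, and the missing face of the chosen filler is the required $3$-cell. Concretely, you must carry, alongside each interpreted $2$-cell, the higher simplex that produced it, and do the pentagon, triangle and interchange bookkeeping explicitly. A boundary-only argument cannot work, since it would make every pair of parallel $2$-cells $3$-connected.

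\textbf{A related issue in dimension~$2$.} Reduction sequences are finite lists and $p\cdot q$ is literal concatenation. So $\llbracket p\cdot q\rrbracket$, defined by recursion on the list, is in general not the chosen Kan composite of $\llbracket p\rrbracket$ and $\llbracket q\rrbracket$. Cases such as right whiskering, horizontal composition and the associator therefore need comparison $2$-cells, built by induction on the first sequence, before their boundaries match. A boundary check read off the filler specifications is not enough. This is the structural-versus-equality-generated distinction that resurfaces in Section~\ref{sec:front-seed}.
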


Universal quantification over $K$ recovers the global paper-I theorem
$\THeq \subseteq \HoTFT$ together with its $2$- and $3$-dimensional companions.
These results are exactly those established in~\cite{MRdQ:theory}.  We take
them as the starting point for the new theorems presented below.

\section{Higher Conversions and the Explicit Tower}\label{sec:tower}

\subsection{The globular tower of \texorpdfstring{$n$}{n}-conversions}

The $n$-conversion tower is a \emph{globular set}: at dimension~0 one has
terms, at dimension~1 reduction sequences between terms, at dimension~2
homotopies between reduction sequences, at dimension~3 higher homotopies
between homotopies, and so on. Source and target maps $s, t : \Sigma_{n+1} \to
\Sigma_n$ record the boundary of each cell, and the globularity conditions
$s \circ s = s \circ t$ and $t \circ s = t \circ t$ hold by construction.

Through dimensions~$0$--$3$, the cells are built by explicit inductive types
carrying the full combinatorial structure of $\lambda$-reduction: whiskering,
horizontal composition, associators, pentagon and triangle coherences,
interchange, and so on. At dimension~$4$ and above, the cells are generated by
a uniform \emph{recursive completion} principle.

\subsection{The low-dimensional coherence core}

The low-dimensional structure through dimension~3 forms the core
$\mathscr{G}_\lambda$ used later for the recursive completion. Concretely, the
following data are available:

\begin{itemize}
  \item \emph{Composition}: concatenation of reduction sequences.
  \item \emph{Inverses}: formal reversal of reduction sequences.
  \item \emph{Whiskering}: left and right whiskering of 2-cells along 1-cells.
  \item \emph{Horizontal composition}: horizontal composition of 2-cells.
  \item \emph{Associator}: a 2-cell
    $\alpha_{p,q,r} : (p \cdot q) \cdot r \Rightarrow p \cdot (q \cdot r)$.
  \item \emph{Unitors}: 2-cells $\ell_p : \mathrm{id} \cdot p \Rightarrow p$
    and $\rho_p : p \cdot \mathrm{id} \Rightarrow p$.
  \item \emph{Pentagon}: a 3-cell witnessing the standard pentagon identity for
    four composable 1-cells.
  \item \emph{Triangle}: a 3-cell witnessing the triangle identity relating the
    associator and the unitors.
  \item \emph{Interchange}: 3-cells witnessing both parenthesizations of
    horizontal composition.
\end{itemize}

All whiskering operations preserve identity, composition, symmetry, and
congruence up to explicit 3-cells.

\begin{definition}[Low-dimensional $\lambda$-coherence core]
  The core $\mathscr{G}_\lambda$ is the structure
  $\mathscr{G}_\lambda$ with objects~$= \mathrm{Term}$, 1-cells~$=
  \ReductionSeq$, 2-cells~$= \mathrm{Homotopy}_2$, 3-cells~$=
  \mathrm{Homotopy}_3$, and the above operations, satisfying the globularity
  conditions at all low dimensions.
\end{definition}

\begin{definition}[Reflexive $6$-cell extension]\label{def:pack6}
  For the comparison theorem below we extend the low-dimensional core through
  dimensions~$4$, $5$, and $6$ by adjoining only the formal
  reflexive--symmetric--transitive layers generated from the $3$-cell core:
  \[
    \mathscr{G}^{\le 6}_{\lambda,4}(\eta,\theta):=\HigherDeriv(\eta,\theta),
    \qquad
    \mathscr{G}^{\le 6}_{\lambda,5}(\omega,\xi):=\HigherDeriv(\omega,\xi),
  \]
  \[
    \mathscr{G}^{\le 6}_{\lambda,6}(\mu,\nu):=\HigherDeriv(\mu,\nu),
  \]
  for parallel $3$-, $4$-, and $5$-cells respectively.  No new primitive
  coherence generators are added in these dimensions.  We refer to the resulting
  $0$--$6$ dimensional package as the \emph{reflexive $6$-cell extension} of
  $\mathscr{G}_\lambda$.
\end{definition}

\section{Comparison with the Recursive Completion}\label{sec:all-dim}

\subsection{Recursive completion of the tower}

This section has one input, one output, and one genuinely non-formal step.
The input is the explicit low-dimensional $\lambda$-conversion tower together
with the reflexive $6$-cell extension from
Definition~\ref{def:pack6}; the output is a realization map from the recursive
completion into the explicit all-dimensional tower.  The only substantive
comparison work occurs in dimensions $4$--$6$, after which the construction is
uniform.

Four objects remain distinct throughout what follows: the explicit syntactic
tower $\Sigma_\bullet$; the low-dimensional core $\mathscr{G}_\lambda$; its
recursive completion $\mathscr{R}_\bullet(\mathscr{G}_\lambda)$; and, later,
the semantic towers living in a model $K$.  Theorem~\ref{thm:A} compares
$\mathscr{R}_\bullet(\mathscr{G}_\lambda)$ with $\Sigma_\bullet$; Theorem~\ref{thm:B}
concerns the semantic side.

The weak $\omega$-groupoid $\mathscr{G}_\lambda$ constructed from explicit
$\beta\eta$-paths already gives concrete data in low dimensions:
\[
\Sigma_0=\mathrm{Term},\qquad
\Sigma_1=\ReductionSeq,\qquad
\Sigma_2=\mathrm{Homotopy}_2,\qquad
\Sigma_3=\mathrm{Homotopy}_3.
\]
Three layers are in play throughout this section: the primitive syntactic tower
$\Sigma_0$--$\Sigma_3$; the reflexive $6$-cell extension used by the generic
coherence API; and the recursive completion that continues uniformly above
dimension~$6$.  The point of the argument is to identify these layers where
they overlap and then pass to the common recursive continuation.

Moreover, the low-dimensional core also carries the next layers of reflexive
higher coherence needed to iterate the construction.  The point of this section
is therefore structural rather than model-specific: once this finite core is
fixed, no new primitive coherence operations are needed in higher dimension,
and one can continue the tower uniformly by freely adding formal reflexivity,
symmetry, and transitivity witnesses.  The non-formal part of the argument is
to show that this equality-generated continuation agrees with the explicit
low-dimensional presentation and preserves strict globular boundaries.
Figure~\ref{fig:tower-architecture} summarizes the three phases of that
comparison.

\begin{figure}[htbp]
\centering
\begin{tikzpicture}[
  phase/.style={
    draw,
    rounded corners,
    align=center,
    minimum width=3.5cm,
    minimum height=1.45cm,
    fill=gray!8
  },
  flow/.style={-{Latex[length=2mm]}, thick},
  note/.style={align=center}
]
  \node[phase] (explicit) {$\Sigma_0,\Sigma_1,\Sigma_2,\Sigma_3$\\explicit witness language};
  \node[phase, right=1.8cm of explicit] (bridge) {$4 \le n \le 6$\\comparison and packaging\\of the two presentations};
  \node[phase, right=1.8cm of bridge] (recursive) {$n \ge 7$\\uniform recursive completion\\by formal higher derivations};

  \draw[flow] (explicit) -- node[above] {identify} (bridge);
  \draw[flow] (bridge) -- node[above] {continue} (recursive);

  \node[note, below=0.95cm of bridge] {$\mathscr{R}_\bullet(\mathscr{G}_\lambda)$ admits a strict boundary-preserving realization\\inside the explicit tower $\Sigma_\bullet$.};
\end{tikzpicture}
\caption{Architecture of the proof of Theorem~\ref{thm:A}.  The explicit low-dimensional tower is first matched with the equality-generated continuation in the bridge dimensions, after which the recursive completion proceeds uniformly.}
\label{fig:tower-architecture}
\end{figure}

\begin{remark}[Which layers are explicit]
Two different presentations are in play here.  The syntactic tower
$\Sigma_\bullet$ is explicit through dimension~$3$, and for $n\ge 4$ its cells
are already given recursively as higher-derivation data between parallel
$(n-1)$-cells.  Separately, the shared simplicial core used by the generic
coherence packaging is a reflexive $6$-cell core: its $4$-, $5$-, and $6$-cells
are packaged reflexivity layers attached to the low-dimensional
$\lambda$-coherence interface.  The role of dimensions $4$--$6$ in
Theorem~\ref{thm:A} is precisely to identify these two presentations before the
uniform recursive continuation begins above dimension~$6$.
\end{remark}

\begin{definition}[Higher derivation]\label{def:higherderiv}
Let $X$ be a type.  The type $\HigherDeriv_X(x,y)$ of \emph{higher
derivations} between $x,y\in X$ is generated inductively by
\[
\mathsf{refl}_x : \HigherDeriv_X(x,x),\qquad
\mathsf{symm} : \HigherDeriv_X(x,y)\to \HigherDeriv_X(y,x),
\]
and
\[
\mathsf{trans} : \HigherDeriv_X(x,y)\to \HigherDeriv_X(y,z)\to
\HigherDeriv_X(x,z).
\]
Thus $\HigherDeriv_X$ is the reflexive--symmetric--transitive closure of the
equality data already present on $X$.
\end{definition}

The type $\HigherDeriv_X(x,y)$ does not introduce new primitive generators
between unrelated objects.  In particular, any element of
$\HigherDeriv_X(x,y)$ determines that
the endpoints agree.  Accordingly, above the explicit core the recursive tower
is \emph{equality-generated}: higher cells are added by iterating formal
reflexivity, symmetry, and transitivity on previously constructed parallel
cells.  Its central structural property is still \emph{functoriality}: every
map on the underlying type transports higher derivations covariantly.

\begin{proposition}[Functoriality of $\HigherDeriv$]\label{prop:hd-functor}
  Let $f:X\to Y$ be any map.  There is a canonical map
  \[
    \HigherDeriv(f) : \HigherDeriv_X(x,y) \longrightarrow
    \HigherDeriv_Y\bigl(f(x),\,f(y)\bigr),
  \]
  defined by recursion on the constructors of\/ $\HigherDeriv$:
  \begin{align*}
    \HigherDeriv(f)(\mathsf{refl}_x)
      &= \mathsf{refl}_{f(x)},\\
    \HigherDeriv(f)(\mathsf{symm}(h))
      &= \mathsf{symm}\bigl(\HigherDeriv(f)(h)\bigr),\\
    \HigherDeriv(f)(\mathsf{trans}(h_1,h_2))
      &= \mathsf{trans}\bigl(\HigherDeriv(f)(h_1),\,
         \HigherDeriv(f)(h_2)\bigr).
  \end{align*}
  Moreover, the assignment $f\mapsto \HigherDeriv(f)$ respects identity and
  composition:
  \[
    \HigherDeriv(\mathrm{id}_X) = \mathrm{id},\qquad
    \HigherDeriv(g\circ f) = \HigherDeriv(g)\circ \HigherDeriv(f).
  \]
\end{proposition}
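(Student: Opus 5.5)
The plan is to define $\HigherDeriv(f)$ by structural recursion on the inductive family $\HigherDeriv_X$ of Definition~\ref{def:higherderiv}, with the three displayed equations as the defining clauses. The only care needed is in the index bookkeeping. Since $\HigherDeriv_X(x,y)$ is an indexed family, the recursion is really a map $\prod_{x,y}\HigherDeriv_X(x,y)\to\HigherDeriv_Y(f x,f y)$ defined by the eliminator of the family, so we must check that each clause lands in the correctly indexed fibre. This check is immediate:
\begin{itemize}
  \item $\mathsf{refl}_{f(x)}$ lies in $\HigherDeriv_Y(f x,f x)$;
  \item if $h:\HigherDeriv_X(x,y)$, the recursive call gives an element of $\HigherDeriv_Y(f x,f y)$, and applying $\mathsf{symm}$ lands in $\HigherDeriv_Y(f y,f x)$, as required;
  \item for $h_1:\HigherDeriv_X(x,y)$ and $h_2:\HigherDeriv_X(y,z)$, the two recursive calls have the shared middle index $f(y)$, so $\mathsf{trans}$ applies and lands in $\HigherDeriv_Y(f x,f z)$.
\end{itemize}
Because the clauses are given by the eliminator, the three displayed equations hold definitionally (as computation rules). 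Canonicity here simply means that this is the unique constructor-preserving map.

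For the functor laws we prove pointwise equalities $\HigherDeriv(\mathrm{id}_X)(h)=h$ and $\HigherDeriv(g\circ f)(h)=\HigherDeriv(g)(\HigherDeriv(f)(h))$ by induction on $h$, and then conclude the function equalities by function extensionality over the indices $x,y$ and over $h$.

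\emph{Identity law.}
\begin{itemize}
  \item Case $\mathsf{refl}_x$: both sides reduce to $\mathsf{refl}_x$, since $\mathrm{id}(x)\equiv x$ definitionally.
  \item Case $\mathsf{symm}(h)$: we get $\mathsf{symm}(\HigherDeriv(\mathrm{id})(h))=\mathsf{symm}(h)$ by the inductive hypothesis and congruence.
  \item Case $\mathsf{trans}$: the same argument, using both inductive hypotheses.
\end{itemize}

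\emph{Composition law.*}
\begin{itemize}
  \item Case $\mathsf{refl}_x$: both sides compute to $\mathsf{refl}_{g(f(x))}$, using $(g\circ f)(x)\equiv g(f(x))$ definitionally.
  \item Case $\mathsf{symm}(h)$: the left side unfolds to $\mathsf{symm}(\HigherDeriv(g\circ f)(h))$. The right side unfolds in two steps, first to $\HigherDeriv(g)(\mathsf{symm}(\HigherDeriv(f)(h)))$ and then to $\mathsf{symm}(\HigherDeriv(g)(\HigherDeriv(f)(h)))$. The inductive hypothesis closes the case.
  \item Case $\mathsf{trans}$: analogous, using both inductive hypotheses.
\end{itemize}

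The one point needing attention, which I expect to be the main (mild) obstacle, is that the index types must match definitionally so that the equations are stated between elements of the same fibre without transport. We need $\HigherDeriv_Y(\mathrm{id}\,x,\mathrm{id}\,y)\equiv\HigherDeriv_X(x,y)$ and $\HigherDeriv_Z((g\circ f)x,(g\circ f)y)\equiv\HigherDeriv_Z(g(fx),g(fy))$. Both hold by $\beta$-reduction of the identity and composition functions, so no transport along index equalities is needed and the induction goes through by direct rewriting.

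I would therefore state the laws pointwise for each fixed pair of indices, which avoids any dependent-function extensionality subtleties. Only if a global equality of dependent functions is wanted would I invoke (dependent) function extensionality.
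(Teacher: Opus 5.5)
Your proposal is correct and takes essentially the same route as the paper: define $\HigherDeriv(f)$ by structural recursion on the three constructors, then prove both functor laws by induction on the derivation. The $\mathsf{refl}$ case uses $\mathrm{id}(x)\equiv x$ and $(g\circ f)(x)\equiv g(f(x))$, and the $\mathsf{symm}$ and $\mathsf{trans}$ cases are closed by the inductive hypotheses; your extra bookkeeping about index fibres and function extensionality only makes explicit what the paper leaves implicit.
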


\begin{proof}
  The recursive clauses define $\HigherDeriv(f)$ by structural induction on
  derivation terms.  The identity law is immediate:
  $\HigherDeriv(\mathrm{id})$ fixes every constructor pointwise, since
  $\mathsf{refl}_x\mapsto \mathsf{refl}_{\mathrm{id}(x)}=\mathsf{refl}_x$,
  $\mathsf{symm}(h)\mapsto \mathsf{symm}(\HigherDeriv(\mathrm{id})(h))$,
  and the induction hypothesis closes the recursion.

  For the composition law, the base case is:
  \[
    \begin{aligned}
      \HigherDeriv(g)(\HigherDeriv(f)(\mathsf{refl}_x))
        &= \HigherDeriv(g)(\mathsf{refl}_{f(x)}) \\
        &= \mathsf{refl}_{g(f(x))} \\
        &= \HigherDeriv(g\circ f)(\mathsf{refl}_x).
    \end{aligned}
  \]
  The inductive cases for $\mathsf{symm}$ and $\mathsf{trans}$ follow by the
  same structural unwinding: the outer map $\HigherDeriv(g)$ distributes
  across the constructors produced by $\HigherDeriv(f)$, and the induction
  hypothesis matches each sub-derivation.
\end{proof}

Proposition~\ref{prop:hd-functor} is the engine that drives the realization
theorem below: at each dimension, the realization of a higher derivation is
obtained by applying $\HigherDeriv$ to the endpoint realization map.  The
identity and composition laws ensure that this process is compatible with the
globular boundary structure and with the recursive definition of source and
target.

Starting from the reflexive $6$-cell extension of $\mathscr{G}_\lambda$, we
define a recursive globular tower $\mathscr{R}_\bullet(\mathscr{G}_\lambda)$ as
follows.
For dimensions $0,\dots,6$ we keep the given low-dimensional cells of the core
explicitly.  For every $n\ge 6$, an $(n+1)$-cell is a triple
\[
(x,y,h),\qquad x,y\in \mathscr{R}_n(\mathscr{G}_\lambda),\quad
h\in \HigherDeriv(x,y).
\]
Its source is $x$ and its target is $y$.  The globularity equalities are then
automatic: if $h$ is a higher derivation from $x$ to $y$, the two iterated
sources and the two iterated targets agree because every constructor of
$\HigherDeriv$ preserves equality of endpoints.

This recursive completion should be compared with the explicit tower
$\Sigma_\bullet$ of higher $\lambda$-conversions.  The latter is already
constructive in all dimensions, but its presentation is not literally the same
as the generic recursive completion in dimensions $4,5,6$: there one must
package the previously constructed $3$-cells, $4$-cells, and $5$-cells as
explicit cells of the target tower.  Once these dimensions are handled, however,
the higher stages on both sides are governed by the same rule: ``a higher cell
is a higher derivation between parallel cells of the preceding dimension.''

The interface between the explicit low-dimensional tower and the recursive
completion in the intermediate dimensions is captured by the following
observation, which we state as a separate lemma because it isolates the only
non-trivial step in passing from the low-dimensional core to the recursive
continuation.

\begin{lemma}[Low-dimensional packaging]\label{lem:packaging}
  For $d\in\{3,4,5\}$, let $\Sigma_d$ denote the explicit $d$-cells of the
  $\lambda$-$\omega$-groupoid, and let $\mathscr{R}_{d+1}(\mathscr{G}_\lambda)$
  denote the $(d+1)$-cells of the recursive completion.  There is a canonical
  \emph{packaging map}
  \[
    \mathsf{pack}_{d+1}:\mathscr{R}_{d+1}(\mathscr{G}_\lambda)
    \longrightarrow \Sigma_{d+1}
  \]
  defined by:
  \begin{enumerate}[label=(\alph*)]
    \item embedding the two endpoint $d$-cells into $\Sigma_d$ via the identity
      (for $d\le 3$) or via the inductively defined packaging map
      $\mathsf{pack}_d$ (for $d\in\{4,5\}$), and then
    \item transporting the higher derivation datum along the resulting endpoint
      map, using the functoriality of $\HigherDeriv$
      (Proposition~\textup{\ref{prop:hd-functor}}).
  \end{enumerate}
  No new primitive coherence operations beyond those already present in the
  $3$-cell core enter the construction.
\end{lemma}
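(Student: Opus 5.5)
I will construct $\mathsf{pack}_{d+1}$ by a finite induction on $d\in\{3,4,5\}$, keeping the $3$-cell core fixed as the base. The invariant carried through the induction is this: $\mathsf{pack}_{d}$ is a function of the packed cells that commutes strictly with source and target, so $s\circ\mathsf{pack}_d=\mathsf{pack}_{d-1}\circ s$ and likewise for $t$, with $\mathsf{pack}_3:=\mathrm{id}_{\Sigma_3}$.

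For the base case $d=3$, a cell of $\mathscr{R}_4(\mathscr{G}_\lambda)$ is by Definition~\ref{def:pack6} a triple $(\eta,\theta,h)$. Here $\eta,\theta$ are parallel $3$-cells and $h\in\HigherDeriv(\eta,\theta)$. Since $\Sigma_4$ is, by the Remark on which layers are explicit, again higher-derivation data between parallel $3$-cells, I set
\[
\mathsf{pack}_4(\eta,\theta,h):=\bigl(\eta,\theta,\HigherDeriv(\mathrm{id})(h)\bigr).
\]
By the identity law of Proposition~\ref{prop:hd-functor}, this is simply $(\eta,\theta,h)$. Boundary compatibility then holds by definition of source and target as first and second projections.

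For the inductive step $d\in\{4,5\}$, assume $\mathsf{pack}_d$ is defined and boundary-preserving. Given a cell $(x,y,h)$ with $h\in\HigherDeriv_{\mathscr{R}_d}(x,y)$, define
\[
\mathsf{pack}_{d+1}(x,y,h):=\bigl(\mathsf{pack}_d(x),\,\mathsf{pack}_d(y),\,\HigherDeriv(\mathsf{pack}_d)(h)\bigr).
\]
This is exactly clause (b) of the statement. The functoriality clauses of Proposition~\ref{prop:hd-functor} guarantee that the transported datum lives in $\HigherDeriv_{\Sigma_d}(\mathsf{pack}_d x,\mathsf{pack}_d y)$.

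It remains to check that the result is a legitimate $(d+1)$-cell of $\Sigma$. The two endpoints must be parallel in $\Sigma_d$, which follows from the inductive boundary invariant applied to the parallel pair $x,y$. The new source and target equations again hold by projection. Canonicity follows because the definition uses only the recursor of $\HigherDeriv$ and the previously defined maps. The final clause, that no new primitive coherence operations are used, is then a syntactic observation: the recursion clauses emit only $\mathsf{refl}$, $\mathsf{symm}$ and $\mathsf{trans}$, so every generator in the image already lies in the $3$-cell core.

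The main obstacle I expect is the typing issue in the inductive step. Parallelism in $\Sigma_d$ is a propositional equality, namely that $s(\mathsf{pack}_d x)=s(\mathsf{pack}_d y)$ in $\Sigma_{d-1}$. The packed cell's dependent boundary indices must be transported along the equalities supplied by the invariant. For $d=4$ this is harmless, since $\mathsf{pack}_4$ is definitionally the identity. At $d=5$ one composes two such transports, and I would first try stating the invariant as definitional (pointwise $\mathsf{rfl}$) equalities so that no transport appears. Failing that, I would use the composition law $\HigherDeriv(g\circ f)=\HigherDeriv(g)\circ\HigherDeriv(f)$ to collapse the iterated transport into a single one.
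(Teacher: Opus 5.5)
Your proposal is correct and follows essentially the same route as the paper. In both, $\mathsf{pack}_4$ is $\HigherDeriv(\mathrm{id})$ applied to the derivation datum, hence the identity by the identity law; $\mathsf{pack}_5,\mathsf{pack}_6$ apply $\HigherDeriv(\mathsf{pack}_4)$ and $\HigherDeriv(\mathsf{pack}_5)$ to the datum; and the ``no new generators'' clause is read off from the $\mathsf{refl}/\mathsf{symm}/\mathsf{trans}$ recursion clauses. The one difference is that you get boundary compatibility from an explicit inductive invariant plus projection, whereas the paper credits it to the composition law of Proposition~\ref{prop:hd-functor}, which you use only as a fallback for collapsing iterated transports.
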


\begin{proof}
  At dimension $4$, the endpoints are $3$-cells, which belong to the explicit
  core and therefore need no packaging: $\mathsf{pack}_4$ applies
  $\HigherDeriv(\mathrm{id})$ to the derivation datum, which by the identity law
  of Proposition~\ref{prop:hd-functor} is the identity.  Concretely, a $4$-cell
  of the recursive completion consists of two parallel $3$-cells $\eta,\theta$
  together with a higher derivation $\omega:\HigherDeriv(\eta,\theta)$, and
  $\mathsf{pack}_4$ sends this to the explicit $4$-cell $(\eta,\theta,\omega)$
  verbatim.

  At dimension $5$, the endpoints are $4$-cells.  Each such $4$-cell was itself
  produced by $\mathsf{pack}_4$, so $\mathsf{pack}_5$ applies
  $\HigherDeriv(\mathsf{pack}_4)$ to the derivation data.  The composition law
  of Proposition~\ref{prop:hd-functor} guarantees that iterated packaging
  respects the boundary maps.  The argument for dimension $6$ is identical, with
  $\mathsf{pack}_5$ replacing $\mathsf{pack}_4$.

  The key observation is that every constructor invoked---$\mathsf{refl}$,
  $\mathsf{symm}$, $\mathsf{trans}$, and the endpoint maps---already exists in
  the $3$-cell core.  The packaging maps merely \emph{compose} these existing
  operations through the functoriality of $\HigherDeriv$; they do not introduce
  any new generating data.
\end{proof}

\subsection{The realization theorem}

\begin{definition}[Boundary-preserving realization package]
Let $\mathscr{C}$ be a reflexive low-dimensional globular core and let
$\mathscr{T}$ be a globular tower.  A \emph{boundary-preserving realization package}
from $\mathscr{C}$ to $\mathscr{T}$ consists of maps
\[
\realize_n:\mathscr{R}_n(\mathscr{C})\longrightarrow \mathscr{T}_n
\qquad (n\in\mathbb N),
\]
where $\mathscr{R}_\bullet(\mathscr{C})$ is the recursive completion of
$\mathscr{C}$, such that for every $(n+1)$-cell $x$,
\[
s(\realize_{n+1}(x))=\realize_n(sx),
\qquad
t(\realize_{n+1}(x))=\realize_n(tx).
\]
\end{definition}

\begin{theorem}[Comparison with the recursive completion]
\label{thm:A}
The explicit higher $\lambda$-conversion tower $\Sigma_\bullet$ admits an
boundary-preserving realization package from the recursive completion of the
low-dimensional core $\mathscr{G}_\lambda$.  In particular:
\begin{enumerate}[label=(\roman*)]
  \item For every $n\in\mathbb N$, there is a realization map
    \[
      \realize_n:\mathscr{R}_n(\mathscr{G}_\lambda)\to \Sigma_n.
    \]
  \item These realization maps commute strictly with source and target maps.
  \item Hence $\Sigma_\bullet$ carries a strict boundary-preserving realization
    of $\mathscr{R}_\bullet(\mathscr{G}_\lambda)$ extending the low-dimensional
    core.
\end{enumerate}
\end{theorem}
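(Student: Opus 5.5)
We define the maps $\realize_n$ by recursion on $n$, in three ranges matching Figure~\ref{fig:tower-architecture}, and prove the boundary identities as we go. In dimensions $0\le n\le 3$ the recursive completion keeps the explicit core cells verbatim, so $\mathscr{R}_n(\mathscr{G}_\lambda)=\Sigma_n$ and we set $\realize_n:=\mathrm{id}$. Since the source and target maps on both sides are literally the same functions, the two commutation squares hold by reflexivity. In particular, for $n\le 2$ the identities $s(\realize_{n+1}x)=\realize_n(sx)$ and $t(\realize_{n+1}x)=\realize_n(tx)$ are definitional.

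For $4\le n\le 6$ we take $\realize_n:=\mathsf{pack}_n$ from Lemma~\ref{lem:packaging}. A cell of $\mathscr{R}_{d+1}$ is a triple $(x,y,h)$ with $h\in\HigherDeriv(x,y)$. The map $\mathsf{pack}_{d+1}$ sends it to $(\realize_d x,\realize_d y,\HigherDeriv(\realize_d)(h))$, which is well typed by Proposition~\ref{prop:hd-functor}. Boundary preservation then holds by construction: the source of the packaged triple is its first component, namely $\realize_d(x)=\realize_d(s(x,y,h))$, and dually for targets. The only check is that the explicit tower $\Sigma_{d+1}$ for $d+1\ge 4$ is indeed given as higher-derivation triples over parallel $d$-cells with first and second projections as boundaries; the Remark on which layers are explicit asserts exactly this.

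For $n\ge 7$ we proceed uniformly. Both towers now obey the same rule, so we define
\[
\realize_{n+1}(x,y,h):=\bigl(\realize_n x,\ \realize_n y,\ \HigherDeriv(\realize_n)(h)\bigr).
\]
The boundary identities again hold definitionally, by the same projection argument. A strong induction on $n$, with the finite cases $n\le 6$ as base, then assembles the family $(\realize_n)_{n\in\mathbb N}$ together with clauses (i) and (ii). Clause (iii) follows immediately, since $\realize_n$ is the identity for $n\le 3$ and therefore extends the core.

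The main obstacle will be the bridge dimensions $4$--$6$, and specifically the well-typedness of the triples there. The triple $(\realize_d x,\realize_d y,\dots)$ is a legitimate cell of $\Sigma_{d+1}$ only if $\realize_d x$ and $\realize_d y$ are parallel in $\Sigma_d$, i.e.\ have equal sources and equal targets. We get this by transporting the parallelism of $x$ and $y$ in $\mathscr{R}_d$ along the boundary identities already proved at level $d$. This step has to be done carefully, rewriting along propositional equalities rather than relying on definitional unfolding, because the $4$-cells of the reflexive $6$-cell extension and the $4$-cells of $\Sigma_4$ are different presentations. We would first try to make the packaging maps defeq-transparent so that parallelism transports by $\mathsf{rfl}$. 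If that fails, we would carry the parallelism proofs as explicit components and use the identity and composition laws of Proposition~\ref{prop:hd-functor} to discharge the resulting transport coherences.
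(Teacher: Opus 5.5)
Your proposal is correct and follows the paper's proof essentially step for step: identity maps in dimensions $0$--$3$, the packaging maps $\mathsf{pack}_4,\mathsf{pack}_5,\mathsf{pack}_6$ of Lemma~\ref{lem:packaging} in the bridge dimensions, and the uniform $\HigherDeriv(\realize_n)$ recursion above dimension~$6$. Boundary compatibility is then read off from the first two components of each triple, exactly as in the paper. Your extra care about transporting parallelism of the realized endpoints in dimensions $4$--$6$ refines the paper's brief ``checked directly from the formulas'' step rather than departing from it.
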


\begin{proof}
We divide the construction into the explicit core, the finite packaging phase,
and the uniform recursive continuation.

\smallskip
\noindent
\emph{Step 1: dimensions $0$--$3$.}
At the first four levels there is nothing to prove beyond unpacking the
definitions.  The objects of $\mathscr{G}_\lambda$ are $\lambda$-terms, its
$1$-cells are explicit reduction zigzags, its $2$-cells are explicit
homotopies between parallel zigzags, and its $3$-cells are explicit homotopies
between parallel $2$-cells.  Thus
\[
\realize_0,\realize_1,\realize_2,\realize_3
\]
are simply the identity identifications between the recursively completed tower
and the explicit tower.

\smallskip
\noindent
\emph{Step 2: dimensions $4$--$6$.}
By Definition~\ref{def:pack6}, the intermediate dimensions are already the
formal reflexive layers generated from the $3$-cell core.
Lemma~\ref{lem:packaging} therefore provides canonical packaging maps
\[
\mathsf{pack}_4,\qquad \mathsf{pack}_5,\qquad \mathsf{pack}_6
\]
from the recursive side to the corresponding explicit cells of
$\Sigma_4,\Sigma_5,\Sigma_6$.  We define
\[
\realize_4:=\mathsf{pack}_4,\qquad
\realize_5:=\mathsf{pack}_5,\qquad
\realize_6:=\mathsf{pack}_6.
\]
The source/target commutation in these dimensions is exactly the content of
Lemma~\ref{lem:packaging} together with the functoriality of $\HigherDeriv$ from
Proposition~\ref{prop:hd-functor}: packaging an endpoint first and then taking
its boundary agrees with taking the boundary in the reflexive $6$-cell
extension first and then packaging the result.

\smallskip
\noindent
\emph{Step 3: recursive continuation above dimension $6$.}
Suppose now that $\realize_n$ has already been defined for some $n\ge 6$.
An $(n+1)$-cell of the recursive completion is, by definition, a triple
$(x,y,h)$ with $x,y\in \mathscr{R}_n(\mathscr{G}_\lambda)$ and
$h\in \HigherDeriv(x,y)$.  We define
\[
\realize_{n+1}(x,y,h)
:=
\bigl(\realize_n(x),\,\realize_n(y),\,
\HigherDeriv(\realize_n)(h)\bigr).
\]
Here $\HigherDeriv(\realize_n)(h)$ denotes the image of $h$ under the map
obtained by recursion on the constructors of $\HigherDeriv$:
\[
\mathsf{refl}(x)\mapsto \mathsf{refl}(\realize_n(x)),\qquad
\mathsf{symm}(h)\mapsto \mathsf{symm}(\HigherDeriv(\realize_n)(h)),
\]
\[
\mathsf{trans}(h_1,h_2)\mapsto
\mathsf{trans}(\HigherDeriv(\realize_n)(h_1),
               \HigherDeriv(\realize_n)(h_2)).
\]
Thus the realization in dimension $n+1$ is forced by the realization in
dimension $n$.

The conceptual point is worth highlighting: no new coherence operations,
packaging maps, or case analyses enter the construction above dimension~$6$.
The reason is structural.  Below dimension~$4$, the explicit tower is defined by
a bespoke inductive type carrying whiskering, interchange, pentagon, and
triangle constructors.  In dimensions $4$--$6$, one must translate between that
  bespoke presentation and the generic recursive completion, which requires the
  packaging maps of Lemma~\ref{lem:packaging}.  But from dimension~$7$ onward,
  \emph{both} towers are defined by the same rule: a cell is a higher derivation
  between parallel cells of the preceding dimension.  Above that range, the
  realization map becomes uniform: it is given by the same
  higher-derivation recursion on both sides, so no new comparison data are
  needed beyond the inductive hypothesis.  In
  particular, no finite list of ``exceptional dimensions'' grows as one climbs the
  tower: the base and packaging phases are completed once and for all in
  dimensions $0$--$6$, and the rest is pure induction.

\smallskip
\noindent
\emph{Step 4: compatibility with source and target.}
For dimensions $0$--$6$, compatibility is checked directly from the formulas:
both the recursive tower and the explicit $\lambda$-tower compute source by
forgetting the last boundary component, and compute target by forgetting the
corresponding source component.

For dimensions above $6$, let $x=(u,v,h)$ be an $(n+1)$-cell.  Then, by
definition of the recursive source and target maps,
\[
s(x)=u,\qquad t(x)=v.
\]
Applying the recursive formula for realization gives
\[
\realize_{n+1}(x)=
\bigl(\realize_n(u),\realize_n(v),\HigherDeriv(\realize_n)(h)\bigr).
\]
Hence its source in the explicit tower is exactly $\realize_n(u)$ and its target
is exactly $\realize_n(v)$, so
\[
s(\realize_{n+1}(x))=\realize_n(sx),\qquad
t(\realize_{n+1}(x))=\realize_n(tx).
\]
This proves strict source and target compatibility in all dimensions.

\smallskip
\noindent
\emph{Step 5: conclusion.}
The realization maps realize the recursively completed tower of the
$\lambda$-coherence core inside the explicit higher
$\lambda$-conversion tower at every dimension.  Since the recursive completion
is globular by construction and the realization preserves boundaries strictly,
the explicit tower inherits the boundary-preserving realization package of the
theorem statement.
\end{proof}

\subsection{Truncation and the classical theory}

An immediate corollary of Theorem~\ref{thm:A} is that the recursive completion
extends, rather than replaces, the ordinary theory of $\beta\eta$-equality.

\begin{corollary}[0-truncation]
The $0$-truncation of the low-dimensional core---and therefore also of its
recursive completion---recovers the classical $\beta\eta$-equality:
\[
\pi_0(\mathscr{G}_\lambda)(M,N)
\;\Longleftrightarrow\;
M =_{\beta\eta} N.
\]
\end{corollary}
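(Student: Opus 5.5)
The proof splits into two implications, with the recursive-completion part handled afterwards. Throughout, I read $\pi_0(\mathscr{G}_\lambda)(M,N)$ as the proposition ``there exists a $1$-cell of $\mathscr{G}_\lambda$ from $M$ to $N$'', i.e.\ $\ReductionSeq(M,N)$ is inhabited. That is, I quotient the objects by the relation ``connected by some $1$-cell'' and forget which cell witnesses it.

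\emph{Forward direction.} Given a $1$-cell $p:\ReductionSeq(M,N)$, I induct on the length of $p$. The empty sequence gives $M=N$, hence $M=_{\beta\eta}N$ by reflexivity. A sequence $M\leftrightarrow M_1\leftrightarrow\cdots$ has a first step that is a $\beta$-step, an $\eta$-step, or a formal inverse of one. Each of these contributes a $\beta\eta$-conversion, using symmetry for the inverse steps. The induction hypothesis covers the remainder, and transitivity of $=_{\beta\eta}$ closes the step.

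\emph{Backward direction.} Here I take $=_{\beta\eta}$ as the least equivalence relation containing the compatible closure of the $\beta$ and $\eta$ rules, and induct on its derivation. Each case is matched by a constructor of $\ReductionSeq$:
\begin{itemize}
\item a single step gives a one-step sequence;
\item reflexivity gives the empty sequence;
\item symmetry gives the formal reversal of the sequence (listed under \emph{Inverses} in the core);
\item transitivity gives concatenation of sequences (\emph{Composition}).
\end{itemize}
These are exactly the operations listed for $\mathscr{G}_\lambda$, so the two relations coincide. No higher cells are needed in either direction, which is why this is a genuine corollary.

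\emph{Recursive completion.} By construction $\mathscr{R}_n(\mathscr{G}_\lambda)=\mathscr{G}_\lambda$ in dimensions $0$--$6$, so the objects and $1$-cells of $\mathscr{R}_\bullet(\mathscr{G}_\lambda)$ and of $\mathscr{G}_\lambda$ are literally the same. Theorem~\ref{thm:A} also makes $\realize_0$ and $\realize_1$ the identity, so the same $\pi_0$ is recovered. Higher cells never affect $\pi_0$, because they only relate parallel lower cells, whose endpoints are fixed.

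The only real obstacle is bookkeeping. First, I must make sure the notion of $\beta\eta$-step used in $\ReductionSeq$ is closed under the same contexts as $=_{\beta\eta}$; both are stated to be closed under subterm contexts, so this should be definitional. Second, in the de~Bruijn setting the $\eta$-rule must be literally the same shifted form on both sides. I expect both checks to be routine case analyses, with the main risk lying in the $\eta$-rule's shift conventions.
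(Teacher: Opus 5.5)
Your proposal is correct and follows the same route as the paper. Forgetting the witness $p:\ReductionSeq(M,N)$ gives $M=_{\beta\eta}N$; choosing an explicit reduction-sequence representative of a conversion gives the converse; and the recursive completion is covered because its $0$- and $1$-cells are those of the core, with $\realize_0,\realize_1$ the identity. Your inductions on sequence length and on the derivation of $=_{\beta\eta}$ just spell out what the paper states in one line each.
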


\begin{proof}
The proposition $\pi_0(\mathscr{G}_\lambda)(M,N)$ asserts only that there exists
a $1$-cell from $M$ to $N$, that is, an explicit reduction sequence
$p:\ReductionSeq(M,N)$.  Forgetting the witness $p$ yields an ordinary
$\beta\eta$-conversion from $M$ to $N$.  Conversely, every ordinary
$\beta\eta$-conversion admits an explicit reduction-sequence representative.
Thus existence of a $1$-cell in the recursive completion---and hence in its
realization inside the explicit tower---is equivalent to the classical
proposition-level equality relation.
\end{proof}

\section{Sufficient Semantic Coherence Data}\label{sec:front-seed}

This section asks a narrow semantic question: how much chosen coherence data
are actually needed for the later associator and pentagon arguments?  The input
is an extensional Kan complex equipped with a specific small front-facing
package of $3$-cells; the output is the recursive associator comparison, the
semantic pentagon comparison on interpreted reduction sequences, and the bridge
theorems used later.

\subsection{Three semantic regimes}

The semantic development separates three logically distinct layers.

\begin{enumerate}[label=(\arabic*)]
  \item \textbf{Bare semantics.}
    An extensional Kan complex suffices for the interpretation of terms,
    $\beta$- and $\eta$-soundness, the interpretation of explicit reduction
    sequences, and the semantic images of the explicit $2$- and $3$-cells
    already present in the syntax.

  \item \textbf{Coherent semantics.}
    A coherent extensional Kan complex adds explicit $3$-dimensional comparison
    data for associativity and whiskering.  This makes it possible to compare
    different semantic parenthesizations of composite interpreted paths.

  \item \textbf{Strict semantics.}
    In a strict extensional Kan complex, the required coherent comparison data
    are derived axiom-freely from uniqueness of horn fillers, so no additional
    chosen coherence fields are needed beyond the extensional $\lambda$-model
    structure.
\end{enumerate}

We now show that a smaller front-facing seed already generates the full
associator and pentagon package needed later.

\subsection{The front-seed boundary data}

\begin{definition}[Front-seed coherent extensional Kan complex]
A \emph{front-seed coherent extensional Kan complex} is an extensional Kan
complex $K$ equipped with the following additional $3$-cell data.

\begin{enumerate}[label=(FS\arabic*)]
  \item \textbf{WLWR comparison.}
    For every composable $1$-cells
    \[
    a \xrightarrow{\alpha} b \xrightarrow{\beta,\gamma} c \xrightarrow{\delta} d
    \]
    and every $2$-cell $\eta:\beta\Rightarrow\gamma$, a chosen $3$-cell
    \[
    \whiskerRight(\whiskerLeft(\alpha,\eta),\delta)
    \;\Longrightarrow\;
    \ass(\alpha,\beta,\delta)\;\cdot\;
    \whiskerLeft(\alpha,\whiskerRight(\eta,\delta))\;\cdot\;
    \ass(\alpha,\gamma,\delta)^{-1}.
    \]
    In words: right-whiskering a left-whiskered $2$-cell agrees, up to the
    appropriate associator corrections, with first right-whiskering and then
    left-whiskering.

  \item \textbf{Inner-right-front pentagon contraction.}
    For every composable quadruple $p,q,r,s$, a chosen $3$-cell contracting the
    inner-right-front face of the semantic pentagon to the reflexive $2$-cell on
    the edge labeled by $q\cdot(r\cdot s)$ before left-whiskering by $p$.
\end{enumerate}
\end{definition}

The second datum is deliberately asymmetric: it does not assert the entire
pentagon at once, but only contracts one distinguished front face.  The theorem
below says that, together with WLWR, this already forces the higher coherence
needed for associator comparison and for the full pentagon at interpreted
reduction sequences.

\medskip
It is useful to describe the geometric content of the two seeds more precisely.

\smallskip
\noindent
\textbf{FS1 (WLWR comparison): front-face horn filling for mixed whiskering.}\;
Given a $2$-cell $\eta:\beta\Rightarrow\gamma$ flanked by $1$-cells $\alpha$ on
the left and $\delta$ on the right, there are two natural ways to extend $\eta$
to the full composite boundary: first left-whisker by $\alpha$ and then
right-whisker by $\delta$, or first right-whisker by $\delta$ and then
left-whisker by $\alpha$.  These two expressions land at the same pair of
endpoints $\alpha\cdot\beta\cdot\delta$ and $\alpha\cdot\gamma\cdot\delta$, but
they are not \emph{a priori} the same $2$-cell because the whiskering operations
interact with the associativity of path composition.  Seed~(FS1) asserts that
the discrepancy is controlled by the associator: the two mixed whiskerings
differ by exactly the pair of associator $2$-cells that rebrackets the
composite.  In simplicial terms, FS1 fills a specific $3$-horn whose two
front faces are the two mixed-whiskering composites and whose remaining faces
are associator instances.

\smallskip
\noindent
\textbf{FS2 (inner-right-front pentagon contraction): partial boundary
seeding.}\;
The pentagon coherence for four composable $1$-cells $p,q,r,s$ is a $3$-cell
whose boundary consists of five associator $2$-cells arranged around the
boundary of a $4$-simplex.  Rather than positing the full pentagon $3$-cell as
primitive, FS2 provides only the contraction of the \emph{inner-right-front}
face---the face built from the associator $\ass(q,r,s)$ whiskered by $p$ on
the left.  This contracts a particular sub-composite of the pentagon boundary
to the identity $2$-cell on the edge labeled by $q\cdot(r\cdot s)$ before
left-whiskering by $p$.  Geometrically, FS2 is a partial horn-filling datum:
it specifies one non-degenerate face of the pentagon $4$-horn, from which the
remaining faces can be assembled using the WLWR comparison of FS1 together with
the standard Kan-complex horn-filling machinery.

\begin{definition}[Structural pentagon data]\label{def:structural-pentagon}
For composable reduction sequences $p,q,r,s$, write
\[
P_0:=(((p\cdot q)\cdot r)\cdot s),\qquad
P_1:=((p\cdot q)\cdot(r\cdot s)),\qquad
P_2:=((p\cdot(q\cdot r))\cdot s),
\]
\[
P_3:=p\cdot((q\cdot r)\cdot s),\qquad
P_4:=p\cdot(q\cdot(r\cdot s)).
\]
The two $2$-dimensional boundary composites of the structural pentagon are
\[
L_{p,q,r,s}:=\ass(p\cdot q,r,s)\;\cdot\;\ass(p,q,r\cdot s)
\quad\text{and}\quad
R_{p,q,r,s}:=(\ass(p,q,r)\ast s)\;\cdot\;\ass(p,q\cdot r,s)\;\cdot\;(p\ast\ass(q,r,s)),
\]
both regarded as $2$-cells from $P_0$ to $P_4$.  We call the partial
$4$-dimensional boundary built from the FS2 inner-right-front contraction, the
two WLWR comparison faces, and the remaining back associator face the
\emph{structural pentagon horn} on $p,q,r,s$; its missing face is the desired
$3$-cell $L_{p,q,r,s}\Rightarrow R_{p,q,r,s}$.
\end{definition}

\medskip
The significance of Theorem~\ref{thm:B} is a precise reduction of the semantic
coherence package actually used later.  In the formal development, the larger
coherent interface with a full pentagon witness and the smaller front-seed
interface are both kept explicitly.  The front-seed theorem shows that the
  smaller chosen-data interface already suffices: the full pentagon field may be
   replaced by the inner-right-front contraction, from which one derives the
   recursive associator comparison, the structural semantic pentagon comparison on
   interpreted reduction sequences, and the source/target/shell bridge theorems
   for the explicit syntactic pentagon, with the target and shell bridges landing
   in the mixed target shell determined by the front-seed associator comparison.
   Thus the gain is not an appeal to an unspecified larger
   coherence package, but a concrete reduction from a full pentagon datum to a
   smaller front-facing seed, while keeping WLWR explicit.

\begin{lemma}[WLWR head-step normalization]\label{lem:frontseed-wlwr-normalization}
Let $K$ be a front-seed coherent extensional Kan complex, let $\alpha$ be the
semantic image of a generating head step, let $\eta:\beta\Rightarrow\gamma$ be
a semantic $2$-cell, and let $\delta$ be a further semantic edge.  Then the
mixed whiskering composite
\[
\whiskerRight(\whiskerLeft(\alpha,\eta),\delta)
\]
is connected by a chosen semantic $3$-cell to
\[
\ass(\alpha,\beta,\delta)\;\cdot\;
\whiskerLeft(\alpha,\whiskerRight(\eta,\delta))\;\cdot\;
\ass(\alpha,\gamma,\delta)^{-1}.
\]
\end{lemma}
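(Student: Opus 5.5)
This is essentially a direct instance of seed (FS1). My plan is to observe that the hypotheses of Lemma~\ref{lem:frontseed-wlwr-normalization} place $\alpha$, $\eta$, $\delta$ in exactly the configuration that (FS1) quantifies over, and to read off the chosen $3$-cell.

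First I will fix the endpoints. The semantic image of a generating head step is, by Theorem~\ref{thm:baseline}(i) applied to a one-step reduction sequence, a semantic $1$-cell $\alpha : a \to b$ with $a = \llbracket M\rrbracket_\rho$ and $b = \llbracket N\rrbracket_\rho$. By $\beta$- or $\eta$-soundness this $1$-cell is the degenerate one, although this fact is not needed. The $2$-cell $\eta:\beta\Rightarrow\gamma$ has parallel $1$-cells $\beta,\gamma : b' \to c$, and $\delta : c \to d$. For the whiskerings $\whiskerLeft(\alpha,\eta)$ and $\whiskerRight(\cdot,\delta)$ to be defined at all, we need composability, so $b'=b$. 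I will therefore read the lemma with these implicit composability hypotheses, which are exactly the hypotheses of (FS1).

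Next I will instantiate (FS1) at the quadruple $(\alpha,\beta,\gamma,\delta)$ and the $2$-cell $\eta$. This yields a chosen $3$-cell
\[
\whiskerRight(\whiskerLeft(\alpha,\eta),\delta)\Longrightarrow
\ass(\alpha,\beta,\delta)\cdot\whiskerLeft(\alpha,\whiskerRight(\eta,\delta))\cdot\ass(\alpha,\gamma,\delta)^{-1},
\]
which is precisely the required $3$-cell. The word ``chosen'' is justified because the $3$-cell is the structure field itself, so no horn filling and no choice beyond the front-seed data is involved. The head-step restriction on $\alpha$ only specializes the statement; it is presumably recorded because the later recursive associator comparison will induct on reduction sequences and apply this normalization step by step at the head.

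The only point needing care is bookkeeping. One must check that the semantic associator and whiskering operations in the lemma are literally the ones used in stating (FS1), so that no transport along definitional unfoldings of $\ass$ or of the interpretation of a one-step sequence is needed. If the interpreted head step is presented as a composite, for example a step composed with a reflexivity, rather than as a bare edge, I would first rewrite it to a bare edge using the unitor identification. If that identification only holds up to a $3$-cell, I would compose the (FS1) cell with the corresponding whiskered unitor $3$-cells on both sides. I expect this alignment to be the main obstacle; the mathematical content is already entirely in (FS1).
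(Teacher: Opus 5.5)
Your proposal is correct and matches the paper's proof, which simply observes that the lemma is exactly the WLWR comparison field (FS1) instantiated at $\alpha,\beta,\gamma,\delta,\eta$. The paper records it as a standalone lemma only because it is the one non-formal head-step input used in the later associator and pentagon arguments. Your contingency about realigning a composite presentation of the head step via unitor $3$-cells is unnecessary: (FS1) quantifies over arbitrary composable semantic edges, and the lemma's conclusion is stated in terms of that same edge $\alpha$, so (FS1) applies directly.
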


\begin{proof}
This is exactly the WLWR comparison supplied by Seed~(FS1), recorded here as a
standalone normalization lemma because it is the only genuinely non-formal
head-step input used in the associator and pentagon arguments below.
\end{proof}

\begin{lemma}[Triple associator comparison]\label{lem:frontseed-assoc}
Let $K$ be a front-seed coherent extensional Kan complex and let $p,q,r$ be
reduction sequences.  Then the structural associator shell of the interpreted
composites $((p\cdot q)\cdot r)$ and $p\cdot(q\cdot r)$ is connected by a
semantic $3$-cell to the equality-generated associator comparison coming from
literal concatenation.
\end{lemma}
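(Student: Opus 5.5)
I plan to argue by structural induction on the first reduction sequence $p$, keeping $q$ and $r$ fixed. A reduction sequence is either empty (reflexive) or a head step $\alpha$ (a $\beta$-step, an $\eta$-step, or a formal inverse of one) followed by a tail $p'$. Interpretation sends literal concatenation to composition of semantic paths (Theorem~\ref{thm:baseline}(i)). So the interpreted composite $\llbracket (p\cdot q)\cdot r\rrbracket$ and the interpreted composite $\llbracket p\cdot(q\cdot r)\rrbracket$ are images of syntactically equal sequences, because concatenation of lists is strictly associative. The \emph{equality-generated associator comparison} is therefore the $\HigherDeriv$-type datum obtained by transporting $\mathsf{refl}$ along that list equality, using Proposition~\ref{prop:hd-functor}. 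The goal is to connect this datum by a semantic $3$-cell to the structural shell $\ass(\llbracket p\rrbracket,\llbracket q\rrbracket,\llbracket r\rrbracket)$.

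\emph{Base case} ($p$ empty). Both composites reduce to $\llbracket q\rrbracket\cdot\llbracket r\rrbracket$ preceded by a degenerate edge. The structural associator with a degenerate first argument is then compared to the reflexive $2$-cell using the unitor data already available in the bare Kan structure, via the $3$-cell images of the syntactic triangle constructor (Theorem~\ref{thm:baseline}(iii)). The equality-generated comparison is literally $\mathsf{refl}$ in this case, so I would close the base case with a degenerate $3$-cell glued to that unitor comparison.

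\emph{Inductive step} ($p=\alpha\cdot p'$). First unfold $\ass(\alpha\cdot p',q,r)$ into a composite of three pieces: $\ass(\alpha,p'\cdot q,r)$, the left whiskering $\whiskerLeft(\alpha,\ass(p',q,r))$, and a correction face. That correction face is $\whiskerRight(\whiskerLeft(\alpha,\ass(p',q)),r)$ after rebracketing, that is, a mixed whiskering of a $2$-cell $\eta$ flanked by the head edge $\alpha$ and the edge $\delta=\llbracket r\rrbracket$. Lemma~\ref{lem:frontseed-wlwr-normalization} (Seed FS1) rewrites this mixed whiskering as $\ass\cdot\whiskerLeft(\alpha,\whiskerRight(\eta,\delta))\cdot\ass^{-1}$. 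The associator corrections introduced by that rewrite are exactly the associators on a head-step quadruple $\alpha,p',q,r$. Seed FS2 contracts their inner-right-front face, namely the left-whiskered $\ass(p',q,r)$, to the reflexive cell on $p'\cdot(q\cdot r)$ before whiskering by $\alpha$. The remaining faces are then handled as follows. The induction hypothesis for $p'$ supplies a $3$-cell identifying $\ass(p',q,r)$ with the equality-generated comparison, and I transport it under $\whiskerLeft(\alpha,-)$ by the functoriality of whiskering on $3$-cells. Finally, Kan filling in dimension~$3$ composes all these pieces into the required $3$-cell. The equality-generated side also decomposes on the head step, as $\HigherDeriv(\whiskerLeft(\alpha,-))$ applied to the tail comparison, so both sides have matching shape.

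\emph{Main obstacle.} I expect the hardest step to be the middle bookkeeping: showing that the associator corrections produced by FS1 cancel exactly against the face contracted by FS2, so that the boundaries of the assembled $3$-horn match strictly. I would first try to write the five faces explicitly as the structural pentagon horn of Definition~\ref{def:structural-pentagon}, specialized to $(\alpha,p',q,r)$, and then fill its missing face. If the boundaries fail to agree on the nose, the fallback is to insert degenerate $3$-cells coming from $\ass(\alpha,\gamma,\delta)\cdot\ass(\alpha,\gamma,\delta)^{-1}\Rrightarrow\mathrm{id}$, which the bare Kan structure provides.
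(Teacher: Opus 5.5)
\textbf{Verdict: genuine gap.} Your skeleton (induction on $p$, FS1 at the head step, induction hypothesis on the tail) is right, but the way you close the inductive step is circular and the base case uses the wrong tool.

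\textbf{Inductive step.} The structural pentagon horn of Definition~\ref{def:structural-pentagon} is a well-defined horn only by Lemma~\ref{lem:frontseed-pentagon}. That proof invokes the very lemma you are proving, to identify the remaining boundary edges of the WLWR faces with the associator composites in $L$ and $R$. Filling the horn on $(\alpha,p',q,r)$ would not help anyway:
\begin{enumerate}[label=(\alph*)]
  \item The horn has $\ass(\alpha\cdot p',q,r)$ among its faces, and its missing face is the pentagon comparison $L_{\alpha,p',q,r}\Rightarrow R_{\alpha,p',q,r}$. Filling it gives a pentagon, not the associator comparison.
  \item Reading the associator comparison off would need comparisons for the other faces, including $\ass(\alpha,p'\cdot q,r)$. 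For $p'=\mathsf{refl}$ this is literally the instance being proved, so the induction is not well-founded.
  \item You also spend FS2 and the induction hypothesis on the same face $\whiskerLeft(\alpha,\ass(p',q,r))$. This leaves the boundary of the assembled cell undetermined.
\end{enumerate}

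None of this is needed. As recorded in the proof of Lemma~\ref{lem:frontseed-wlwr-normalization}, FS1 is the only non-formal head-step input. In the head-step case the single obstruction is $\whiskerRight(\whiskerLeft(\alpha,\eta),\delta)$. Here $\eta:\beta\Rightarrow\gamma$ is the tail comparison, not a binary ``$\ass(p',q)$'', and $\delta=\llbracket r\rrbracket$. FS1 replaces this term by $\ass(\alpha,\beta,\delta)\cdot\whiskerLeft(\alpha,\whiskerRight(\eta,\delta))\cdot\ass(\alpha,\gamma,\delta)^{-1}$, and the boundary then already matches the structural associator shell. You finish by applying the induction hypothesis to the tail inside $\whiskerLeft(\alpha,-)$, as you intended, and composing the resulting $3$-cells vertically. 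FS2 enters only later, as face~$0$ of the horn filled in Step~2 of the proof of Theorem~\ref{thm:B}, and that step depends on this lemma.

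\textbf{Base case.} At $p=\mathsf{refl}$ the structural and equality-generated comparisons coincide, because concatenating the empty sequence introduces no leading degenerate edge. So the reflexive $3$-cell suffices. The tool you reach for would not work in any case:
\begin{enumerate}[label=(\alph*)]
  \item The triangle constructor concerns a unit in the middle slot, $\ass(p,\mathrm{id},q)$, not a leading unit.
  \item Its image under Theorem~\ref{thm:baseline}(iii) is a $3$-cell between interpreted syntactic $2$-cells, not between the Kan-structural associator and unitors. Relating those two presentations is exactly what the bridges of Lemmas~\ref{lem:frontseed-source-bridge} and~\ref{lem:frontseed-target-bridge} do, and those are derived from this lemma.
  \item A leading-unit coherence for $\ass(\mathrm{id},q,r)$ is classically derived from the pentagon together with the triangle. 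The pentagon is not available at this stage.
\end{enumerate}
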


\begin{proof}
Proceed by induction on the first reduction sequence $p$.

\smallskip
\noindent
\emph{Base case.} If $p=\mathsf{refl}$, then the structural and
equality-generated associator comparisons coincide, so the required $3$-cell is
reflexive.

\smallskip
\noindent
\emph{Head-step case.} Suppose $p$ begins with a generating step whose semantic
image is $\alpha$, and write the remaining tail comparison as
$\eta:\beta\Rightarrow\gamma$.  The only non-formal obstruction is the nested
whiskering term
\[
\whiskerRight(\whiskerLeft(\alpha,\eta),\delta),
\]
where $\delta$ is the semantic image of $r$.  By
Lemma~\ref{lem:frontseed-wlwr-normalization}, this term is replaced by the
normalized composite
\[
\ass(\alpha,\beta,\delta)\;\cdot\;
\whiskerLeft(\alpha,\whiskerRight(\eta,\delta))\;\cdot\;
\ass(\alpha,\gamma,\delta)^{-1}.
\]
After this explicit head-step normalization, the boundary matches the structural
associator shell.

\smallskip
\noindent
\emph{Inductive step.} Write $p$ as its head step followed by a shorter tail.
Apply the WLWR normalization above to the head step and the induction
hypothesis to the tail.  Vertical composition of the resulting $3$-cells
produces a comparison from the structural associator shell of $(p,q,r)$ to the
equality-generated associator.
\end{proof}

\begin{lemma}[Front-seed pentagon horn completion]\label{lem:frontseed-pentagon}
Let $K$ be a front-seed coherent extensional Kan complex and let $p,q,r,s$ be
reduction sequences.  Then the structural pentagon horn of
Definition~\ref{def:structural-pentagon} is a well-defined $4$-dimensional horn
in $K$.
\end{lemma}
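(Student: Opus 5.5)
To show that the structural pentagon horn of Definition~\ref{def:structural-pentagon} is a well-defined $4$-dimensional horn, I would do two things. First, exhibit each of its non-missing faces as an actual $3$-simplex of $K$. Second, check the simplicial face-matching identities that a horn $\Lambda\subset\Delta^4$ requires, namely $d_i\sigma_j = d_{j-1}\sigma_i$ for $i<j$ among the given faces. No filling is claimed here; that is the next step, which uses the chosen filler $\mathsf{fill}$. So the content is entirely a boundary computation.

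\emph{Step 1: producing the faces.} I would take the faces in the order the definition lists them.
\begin{enumerate}[label=(\roman*)]
  \item The inner-right-front face is the chosen FS2 $3$-cell for $(p,q,r,s)$. Its boundary is the left-whiskering by $p$ of $\ass(q,r,s)$ against the reflexive $2$-cell on $q\cdot(r\cdot s)$.
  \item The two WLWR faces are instances of Lemma~\ref{lem:frontseed-wlwr-normalization}, applied with $\eta=\ass(q,r,s)$ and $\eta=\ass(p,q,r)$ respectively. The flanking edges are $p$ and $s$, so the whiskered associators $p\ast\ass(q,r,s)$ and $\ass(p,q,r)\ast s$ appearing in $R_{p,q,r,s}$ are each adjacent to a normalized composite.
  \item The back associator face is the triple comparison of Lemma~\ref{lem:frontseed-assoc}, applied to the triple $(p\cdot q,r,s)$ or $(p,q,r\cdot s)$, whichever the definition places at the back. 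This realizes the face whose boundary carries one of the two factors of $L_{p,q,r,s}$ together with its equality-generated counterpart.
\end{enumerate}
Every interpreted composite $P_0,\dots,P_4$ is a $1$-cell, because Theorem~\ref{thm:baseline}(i) interprets reduction sequences as semantic paths. Every associator $2$-cell is available in the bare semantics.

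\emph{Step 2: face matching.} For each pair of given faces, I would compute the shared $2$-dimensional face and verify that the two computations agree on the nose. The $1$-dimensional skeleton is automatic, since all faces are built over the same five vertices $P_0,\dots,P_4$ and the interpreted edges are fixed. The substantive checks are on shared $2$-faces.
\begin{itemize}
  \item The FS2 face and the first WLWR face share the $2$-cell $p\ast\ass(q,r,s)$ (respectively its left-whiskered form). These agree because FS2 is stated precisely on the edge $q\cdot(r\cdot s)$ before left-whiskering by $p$.
  \item The two WLWR faces share an associator instance $\ass(p,q\cdot r,s)$ or its inverse, arising from the $\ass(\alpha,\beta,\delta)$ and $\ass(\alpha,\gamma,\delta)^{-1}$ factors.
  \item The back face shares an associator with each WLWR face.
\end{itemize}
I would discharge these by unfolding the definitions of $\whiskerLeft$, $\whiskerRight$ and $\ass$ as chosen horn fillers in $K$. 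I would then use that chosen fillers are determined by their horn data, so equal horn data give literally equal simplices.

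\emph{Main obstacle.} I expect the hard part to be the matching between the WLWR faces and the back associator face. Lemma~\ref{lem:frontseed-assoc} produces its $3$-cell by induction on $p$ through vertical compositions, so its boundary is only equal to the structural associator shell after unfolding that recursion. My first attempt would be to choose the back face to be the literal structural associator shell rather than the inductively built comparison. Its $2$-faces are then syntactically the same $\ass(-,-,-)$ fillers appearing in the WLWR normal forms, and the matching becomes definitional. If that fails, I would instead transport along the equality supplied by the base case of Lemma~\ref{lem:frontseed-assoc}. This is an equality, not merely a $3$-cell, so strict face identities are preserved.
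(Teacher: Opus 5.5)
Your overall plan matches the paper's: list the FS2 face, the two FS1 faces and the back associator face, then check that adjacent faces agree on their common $2$-dimensional overlaps. Your eventual choice of the literal structural associator shell as the back face also agrees with item (c) of the paper's proof. The gap is in how you produce and match the two WLWR faces.

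\textbf{The WLWR instantiation does not fit FS1.} You apply FS1 with $\eta=\ass(q,r,s)$ and $\eta=\ass(p,q,r)$, ``flanked by $p$ and $s$''. But FS1 only concerns a doubly whiskered term $\whiskerRight(\whiskerLeft(\alpha,\eta),\delta)$, and neither $p\ast\ass(q,r,s)$ nor $\ass(p,q,r)\ast s$ has that shape.
\begin{itemize}
  \item $\ass(q,r,s)$ ends at the target of $s$, so it cannot be right-whiskered by $s$.
  \item No edge of the quadruple is available to left-whisker $\ass(p,q,r)$.
\end{itemize}
Consequently, the shared face $\ass(p,q\cdot r,s)$ that you claim for the two WLWR faces does not come out of your own instantiation. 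In the paper, the mixed-whiskering terms are those of Lemma~\ref{lem:frontseed-wlwr-normalization}: $\alpha$ is the image of a head step and $\eta$ is a tail comparison, which is exactly the shape arising in the proof of Lemma~\ref{lem:frontseed-assoc}. One boundary of each WLWR face is then the mixed-whiskering composite shared with the adjacent FS2 or associator face.

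\textbf{The normalized boundary is left unmatched.} The other boundary of each WLWR face is the composite $\ass(\alpha,\beta,\delta)\cdot\whiskerLeft(\alpha,\whiskerRight(\eta,\delta))\cdot\ass(\alpha,\gamma,\delta)^{-1}$. This is not literally a piece of $L_{p,q,r,s}$ or $R_{p,q,r,s}$, so the matching cannot simply be ``definitional''. The paper closes this step by invoking Lemma~\ref{lem:frontseed-assoc} to identify these remaining boundary edges with the corresponding associator composites built from $L_{p,q,r,s}$ and $R_{p,q,r,s}$. You use that lemma only for the back face, and then withdraw it.

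Your fallback is also unavailable. Transporting along ``the equality supplied by the base case'' works only when $p=\mathsf{refl}$; for general $p$, Lemma~\ref{lem:frontseed-assoc} provides a $3$-cell, not an equality. Your worry that a $3$-cell does not by itself give a strict simplicial face identity is reasonable. But if you insist on literal identities, you must actually prove that coincidence for the correctly instantiated WLWR faces, and the proposal neither does this nor uses the paper's identification in its place.
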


\begin{proof}
The four prescribed faces are:
\begin{enumerate}[label=(\alph*)]
  \item the inner-right-front face supplied by Seed~(FS2),
  \item the two front mixed-whiskering faces normalized by Seed~(FS1),
  \item the remaining back face built from the structural associator shell.
\end{enumerate}
The inner-right-front face has target the reflexive $2$-cell on $P_4$.  Each
front WLWR face has as one boundary edge the same mixed-whiskering composite as
the adjacent FS2 or associator face, and
Lemma~\ref{lem:frontseed-assoc} identifies the remaining boundary edges of those
WLWR faces with the corresponding associator composites built from
$L_{p,q,r,s}$ and $R_{p,q,r,s}$.  The back face is already the structural
associator shell on the same parenthesization data.  Consequently every common
$2$-dimensional overlap is described by the same composite of associators and
whiskerings when read from either adjacent $3$-face.  Hence the four prescribed
faces glue to a well-defined $4$-horn in $K$.
\end{proof}

\begin{lemma}[Source-side associator bridge]\label{lem:frontseed-source-bridge}
Let $K$ be a front-seed coherent extensional Kan complex and let $p,q,r,s$ be
reduction sequences.  The interpreted source boundary of the explicit syntactic
pentagon is connected by a semantic $3$-cell to the structural semantic source
shell determined by $L_{p,q,r,s}$.
\end{lemma}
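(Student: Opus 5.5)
We would prove the source-side bridge by decomposing the explicit syntactic source boundary of the pentagon into its two constituent associator $2$-cells and transporting each one separately. In the syntax, the source of the pentagon $3$-cell is the vertical composite $\ass(p\cdot q,r,s)\cdot\ass(p,q,r\cdot s)$ of explicit $2$-cells from $P_0$ through $P_1$ to $P_4$. By Theorem~\ref{thm:baseline}(ii), its semantic interpretation is the vertical composite of the interpreted associators, because the lifting is structural on constructors. So the interpreted source boundary is, up to a reflexive $3$-cell, the composite of two interpreted associators. The target shell $L_{p,q,r,s}$ is built in the same shape from the structural semantic associators.

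The key step applies Lemma~\ref{lem:frontseed-assoc} twice: once to the triple $(p\cdot q,r,s)$ and once to the triple $(p,q,r\cdot s)$. Each application connects, by a semantic $3$-cell, the structural associator shell to the equality-generated associator coming from literal concatenation. The interpreted syntactic associator is itself an instance of that equality-generated comparison, because $\beta\eta$-soundness makes interpreted reduction sequences compose by literal concatenation of paths. Composing the first $3$-cell with the inverse of the corresponding structural comparison then yields, for each factor, a $3$-cell from the interpreted syntactic associator to the structural one. We do not invoke FS2 here, and FS1 enters only through Lemma~\ref{lem:frontseed-assoc}. This matches the asymmetry of the package: the source side involves no left-whiskered associator $p\ast\ass(q,r,s)$.

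Finally, we assemble the two factorwise $3$-cells into a single $3$-cell between the vertical composites. To do this we use horizontal composition of $3$-cells along the common $2$-dimensional boundary at $P_1$, which is the $3$-dimensional horn-filling structure of the Kan complex already used in Theorem~\ref{thm:baseline}(iii). We then pre- and post-compose with the reflexive comparison identifying the interpreted composite with the composite of interpreteds.

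We expect the main obstacle to be the middle step. There we must verify that the interpreted explicit associator and the "equality-generated associator from literal concatenation" in Lemma~\ref{lem:frontseed-assoc} agree on the nose, or at least up to a canonical $3$-cell. The concern is a mismatch between how concatenation $p\cdot q$ of reduction sequences is interpreted, as a chosen composite path in $K$, and how the structural shell parenthesizes. Our first attempt would be induction on $p$, mirroring the proof of Lemma~\ref{lem:frontseed-assoc}. In the base case $p=\mathsf{refl}$ both sides are reflexive, and the head-step case reduces to the same WLWR normalization of Lemma~\ref{lem:frontseed-wlwr-normalization}.
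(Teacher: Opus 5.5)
Your proposal follows the paper's proof: you apply Lemma~\ref{lem:frontseed-assoc} to the two source-side associator pieces $\ass(p\cdot q,r,s)$ and $\ass(p,q,r\cdot s)$, then compose the resulting (inverted) $3$-cells along $P_1$, which identifies the interpreted syntactic source boundary with the structural shell $L_{p,q,r,s}$. The paper uses the identification you flag as the main obstacle implicitly, reading the interpreted syntactic associator as the equality-generated comparison from literal concatenation that appears in Lemma~\ref{lem:frontseed-assoc}; so the extra induction on $p$ you contemplate is unnecessary and would only re-prove that lemma.
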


\begin{proof}
Apply Lemma~\ref{lem:frontseed-assoc} to the two associator pieces on the source
side of the explicit pentagon.  Their composite identifies the interpreted
syntactic source boundary with the structural source shell.
\end{proof}

\begin{lemma}[Target-side associator bridge]\label{lem:frontseed-target-bridge}
Let $K$ be a front-seed coherent extensional Kan complex and let $p,q,r,s$ be
reduction sequences.  The interpreted target boundary of the explicit syntactic
pentagon is connected by a semantic $3$-cell to the mixed target shell
determined by $R_{p,q,r,s}$.
\end{lemma}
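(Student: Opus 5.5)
The plan mirrors the source-side bridge (Lemma~\ref{lem:frontseed-source-bridge}), with one extra normalization step on the middle associator. The interpreted target boundary of the explicit syntactic pentagon is the composite of three interpreted associator pieces: the right-whiskered associator $\ass(p,q,r)\ast s$, the middle associator $\ass(p,q\cdot r,s)$, and the left-whiskered associator $p\ast\ass(q,r,s)$. We treat these one at a time, build a semantic $3$-cell for each, and then glue the three $3$-cells by vertical composition and whiskering of $3$-cells in the Kan complex. This mirrors the shape of $R_{p,q,r,s}$ in Definition~\ref{def:structural-pentagon}.

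\emph{Step 1: the middle factor.} Apply Lemma~\ref{lem:frontseed-assoc} to the triple $(p,q\cdot r,s)$. This connects the interpreted middle associator to the structural associator shell. Since the statement is an interpreted one, we first use that interpretation sends concatenation of reduction sequences to literal composition of semantic paths. This identifies the interpreted edge of $q\cdot r$ with the composite of the images of $q$ and $r$.

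\emph{Step 2: the two whiskered factors.} For $\ass(p,q,r)\ast s$ and $p\ast\ass(q,r,s)$, first apply Lemma~\ref{lem:frontseed-assoc} to the inner triple. Then transport the resulting $3$-cell along right-whiskering by the image of $s$, respectively left-whiskering by the image of $p$, using the Kan whiskering operations on $3$-cells. This transport is where the boundary does not land on the pure structural shell. Whiskering the structural associator shell of $(p,q,r)$ on the right, with $p$ decomposed by its head step, produces nested terms $\whiskerRight(\whiskerLeft(\alpha,\eta),\delta)$. We rewrite these with Lemma~\ref{lem:frontseed-wlwr-normalization}, i.e.\ FS1. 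After the rewrite the endpoint is the mixed shell $\ass\cdot\whiskerLeft(\alpha,\whiskerRight(\eta,\delta))\cdot\ass^{-1}$ rather than the bare $R$-composite. This explains why the statement asserts only a connection to the \emph{mixed} target shell determined by $R_{p,q,r,s}$.

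\emph{Step 3: composition.} Vertically compose the three $3$-cells, inserting reflexive $3$-cells on interface edges where necessary. Their boundaries match on the interior vertices $P_2$ and $P_3$, so the composite goes from the interpreted target boundary to the mixed target shell.

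The main obstacle is Step~2. One must check that the boundary produced by whiskering and then FS1-normalizing is literally the mixed target shell as defined, with associators oriented consistently and the inverse $\ass(\alpha,\gamma,\delta)^{-1}$ placed correctly. If this bookkeeping fails to match strictly, the fallback is to run the same induction on $p$ as in Lemma~\ref{lem:frontseed-assoc} directly on the whiskered term. This applies FS1 at each head step and takes the mixed shell as the definition of the inductive target.
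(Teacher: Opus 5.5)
Your skeleton (split the interpreted target boundary into its three associator pieces, apply Lemma~\ref{lem:frontseed-assoc}, compose the resulting $3$-cells) is the paper's, whose proof is exactly that composite. The gap is in Step~2, which is also where you try to explain the word ``mixed''.

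\textbf{What the mixed shell is.} In the paper, the mixed target shell is the one in which the right-whiskered factor $\ass(p,q,r)\ast s$ \emph{remains equality-generated}; this is how the Lean roadmap for Theorem~\ref{thm:B} describes it. The other associator pieces are traded for structural ones via Lemma~\ref{lem:frontseed-assoc}. That is why clause~(iii) of Theorem~\ref{thm:B} reaches only a mixed shell, and why the fully structural comparison $L_{p,q,r,s}\Rightarrow R_{p,q,r,s}$ is obtained separately, by the horn filling behind clause~(ii).

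\textbf{Why your Step~2 fails.} You instead push the associator comparison through semantic right-whiskering by $\llbracket s\rrbracket$. For the whiskered $3$-cell to start at the interpreted piece $\llbracket\ass(p,q,r)\ast s\rrbracket$, you would need to identify the interpretation of a syntactically right-whiskered $2$-cell with the Kan right-whiskering of the interpreted $2$-cell along $\llbracket s\rrbracket$. None of FS1, FS2, or Lemma~\ref{lem:frontseed-assoc} supplies that comparison.

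Your FS1 rewrite does not address this. FS1 is already consumed inside the head-step induction proving Lemma~\ref{lem:frontseed-assoc}. Re-applying it to the endpoint only moves you away from the factor $\ass(p,q,r)\ast s$ of $R_{p,q,r,s}$. The FS1-normal form $\ass\cdot\whiskerLeft(\alpha,\whiskerRight(\eta,\delta))\cdot\ass^{-1}$ you end at is therefore not the paper's mixed target shell.

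Your fallback, taking ``the mixed shell as the definition of the inductive target'', is circular. The target is fixed by the statement, and the induction must be shown to land there.

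\textbf{Step~1.} It rests on a claim you cannot use: interpretation does not send concatenation of reduction sequences to literal Kan composition of the interpreted paths. Concatenation is strictly associative, so that claim would force Kan composition to be strictly associative on interpreted paths. The gap between the two presentations is precisely what the structural versus equality-generated comparison of Lemma~\ref{lem:frontseed-assoc} measures. You do not need the claim anyway: $q\cdot r$ is itself a reduction sequence, so the lemma applies to $(p,q\cdot r,s)$ as it stands.

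\textbf{Repair.} Carry the right-whiskered factor along unchanged, with a reflexive $3$-cell on that factor, rather than converting it. Use Lemma~\ref{lem:frontseed-assoc} to replace the remaining associator pieces by their structural counterparts. Then compose as in your Step~3.
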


\begin{proof}
Apply Lemma~\ref{lem:frontseed-assoc} to the three associator pieces on the
target side of the explicit pentagon.  Their composite identifies the
interpreted syntactic target boundary with the mixed target shell.
\end{proof}

\begin{theorem}[Front-seed sufficiency]
\label{thm:B}
Let $K$ be a front-seed coherent extensional Kan complex.  Then:
\begin{enumerate}[label=(\roman*)]
  \item For every triple of reduction sequences $p,q,r$, there is a semantic
    $3$-cell from the structural associator shell of the interpreted composites
    $((p\cdot q)\cdot r)$ and $p\cdot(q\cdot r)$ to the
    equality-generated associator comparison coming from literal concatenation.
    Consequently any two parenthesizations of a finite composite are connected
    by a chain of such comparison $3$-cells.
  \item For every quadruple of reduction sequences $p,q,r,s$, there is a
    semantic $3$-cell
    \[
    \ass(p\cdot q,r,s)\;\cdot\;\ass(p,q,r\cdot s)
    \;\Longrightarrow\;
    (\ass(p,q,r)\ast s)\;\cdot\;\ass(p,q\cdot r,s)\;\cdot\;(p\ast\ass(q,r,s)),
    \]
    where $\ast$ denotes whiskering.
  \item For every quadruple $p,q,r,s$, the interpreted explicit syntactic
    pentagon admits a source bridge to the structural semantic source shell, a
    target bridge to the mixed target shell determined by~(i), and the
    resulting shell bridge between those two shells.
\end{enumerate}
\end{theorem}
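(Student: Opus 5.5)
The plan is to assemble Theorem~\ref{thm:B} directly from the lemmas just established, treating the three clauses in order, since each later clause uses the earlier ones.

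\textbf{Clause~(i).} This is Lemma~\ref{lem:frontseed-assoc}, proved by induction on the first reduction sequence $p$.
\begin{itemize}
  \item The reflexive base case gives a degenerate $3$-cell.
  \item In the head-step case the only non-formal term is $\whiskerRight(\whiskerLeft(\alpha,\eta),\delta)$. Lemma~\ref{lem:frontseed-wlwr-normalization}, i.e.\ Seed~(FS1), rewrites it into associator-corrected form.
  \item The induction hypothesis on the tail is then vertically composed with this normalization $3$-cell.
\end{itemize}
For the ``consequently'' part, I will induct on the length of the finite composite. Any two parenthesizations are related by a finite sequence of single re-bracketings $(x\cdot y)\cdot z\leadsto x\cdot(y\cdot z)$ applied inside a context. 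Each such move gets the $3$-cell from the first part, whiskered along the surrounding context, or inverted when the move runs backwards. Concatenating these gives the chain. This uses only the Kan $3$-dimensional composition, whiskering and inversion available in any extensional Kan complex.

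\textbf{Clause~(ii).} The plan is to fill the structural pentagon horn of Definition~\ref{def:structural-pentagon}. Lemma~\ref{lem:frontseed-pentagon} shows this horn is well defined, with four prescribed faces:
\begin{itemize}
  \item the FS2 contraction;
  \item the two WLWR faces;
  \item the back associator face.
\end{itemize}
Since $K$ is Kan, the chosen filler $\mathsf{fill}$ gives a $4$-simplex. Its missing face is a $3$-cell with boundary $L_{p,q,r,s}\Rightarrow R_{p,q,r,s}$, which is precisely the displayed $3$-cell once $\ast$ is read as $\whiskerLeft$ or $\whiskerRight$.

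The main obstacle is checking that the missing face has exactly these endpoints rather than endpoints equal to them only up to a further $3$-cell. Two pieces of boundary need care:
\begin{itemize}
  \item \emph{Target side.} FS2 collapses the inner-right-front face to the reflexive $2$-cell on $q\cdot(r\cdot s)$ before whiskering by $p$. I would therefore check that whiskering a reflexive cell by $p$ is reflexive on the nose, or else absorb the discrepancy with one more application of clause~(i).
  \item \emph{Source side.} The WLWR faces must be matched with the edges built from $L_{p,q,r,s}$. Here I would again use Lemma~\ref{lem:frontseed-assoc} to identify them.
\end{itemize}

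\textbf{Clause~(iii).} The source bridge is Lemma~\ref{lem:frontseed-source-bridge} and the target bridge is Lemma~\ref{lem:frontseed-target-bridge}; the latter lands in the mixed target shell produced by~(i). The shell bridge is the vertical composite
\[
(\text{source bridge})^{-1}\cdot(\text{pentagon }3\text{-cell from (ii)})\cdot(\text{target bridge}),
\]
read as a $3$-cell between the two shells. It connects the structural source shell to the mixed target shell, passing through the interpreted explicit syntactic pentagon supplied by Theorem~\ref{thm:baseline}(iii). Boundary compatibility at each junction holds by construction, because the bridges were built to end at the $L$ and $R$ shells.
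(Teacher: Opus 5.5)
Your clauses (i) and (ii) follow the paper's route. The triple comparison is Lemma~\ref{lem:frontseed-assoc}, proved by induction on $p$ with the FS1 head-step normalization as the only non-formal input. The finite-composite statement then follows one associator move at a time. Clause (ii) is the Kan filler of the structural pentagon horn assembled from the FS2 contraction, the FS1 faces and the back associator shell. One minor point in (ii): a discrepancy between $p\ast\mathsf{refl}$ and $\mathsf{refl}$ is not an associator comparison, so clause~(i) cannot absorb it. In the paper the $p$-whiskering enters the horn directly through face~$2$, the composition triangle for $p$ against $q\cdot(r\cdot s)$.

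The step that fails as written is the shell-bridge composite in (iii). Its middle factor must be the semantic image of the explicit syntactic pentagon $3$-cell, which exists in any bare extensional Kan complex by Theorem~\ref{thm:baseline}(iii). It cannot be the structural cell from (ii). Check the junctions, using the orientation of Lemmas~\ref{lem:frontseed-source-bridge} and~\ref{lem:frontseed-target-bridge}:
\begin{itemize}
\item $(\text{source bridge})^{-1}$ runs from the structural source shell $L_{p,q,r,s}$ to the interpreted syntactic source boundary.
\item The target bridge runs from the interpreted syntactic target boundary to the mixed target shell.
\item The cell from (ii) runs $L_{p,q,r,s}\Rrightarrow R_{p,q,r,s}$ between fully structural shells.
\end{itemize}
So the (ii) cell neither starts where $(\text{source bridge})^{-1}$ ends nor ends where the target bridge starts. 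Your justification that ``the bridges were built to end at the $L$ and $R$ shells'' is also wrong on the target side. The mixed target shell keeps an equality-generated right-whiskered factor coming from (i), so it is not $R_{p,q,r,s}$.

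The composite that actually typechecks is
\[
(\text{source bridge})^{-1}\cdot\llbracket\text{syntactic pentagon}\rrbracket\cdot(\text{target bridge}),
\]
which is what your next sentence (``passing through the interpreted explicit syntactic pentagon'') describes in words. It gives a $3$-cell from $L_{p,q,r,s}$ to the mixed target shell. The paper inverts the target bridge rather than the source bridge, but that is only an orientation convention, since $3$-cells are invertible. With this correction (iii) does not use (ii) at all, which is exactly the distinction the paper draws. Clause (ii) is the fully structural semantic pentagon, while (iii) connects the interpreted syntactic pentagon only to the mixed target shell.
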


Clause~(ii) gives the fully structural semantic pentagon comparison, whereas
clause~(iii) bridges the explicit syntactic pentagon only to the mixed target
shell arising from the front-seed associator comparison.

\begin{proof}
We prove the three claims in order.

\smallskip
\noindent
\emph{Step 1: from WLWR to the recursive associator comparison.}
Lemma~\ref{lem:frontseed-assoc} gives the comparison for every triple
$p,q,r$.  Iterating that comparison gives the statement for arbitrary
parenthesized composites of any finite list $p_1,\dots,p_k$: one changes
parenthesization one associator at a time, and each elementary move is
controlled by the triple comparison lemma.

\smallskip
\noindent
\emph{Step 2: from the front pentagon seed to the full semantic pentagon.}
By Lemma~\ref{lem:frontseed-pentagon}, the four known faces assemble to the
structural pentagon horn of Definition~\ref{def:structural-pentagon}.  In the
underlying simplicial set of $K$, this boundary is represented by the specific
\((2,1)\)-horn used in the formal proof: face~$0$ is the FS2 inner-right-front
contraction, face~$2$ is the composition triangle for $p$ against the already
parenthesized tail $q\cdot(r\cdot s)$, and face~$3$ is the composite of the two
FS1-normalized mixed-whiskering faces with the back associator shell.  The
missing face~$1$ is therefore exactly the desired pentagon comparison
$L_{p,q,r,s}\Rightarrow R_{p,q,r,s}$, and the Kan filling property supplies it.

\smallskip
\noindent
\emph{Step 3: bridge theorems from syntax to semantic shells.}
The explicit syntactic higher calculus already contains an associator $2$-cell
and a pentagon $3$-cell between explicit reduction sequences.  These objects can
be interpreted in any bare extensional Kan complex, so in particular in our
front-seed model $K$.  However, their semantic images land a priori in the
``syntactic'' presentation of the boundary, while Step~1 and Step~2 produce the
structural source shell and the mixed target shell built from the Kan-complex
associator and whiskering operations.

Lemma~\ref{lem:frontseed-source-bridge} supplies the source-side bridge and
Lemma~\ref{lem:frontseed-target-bridge} supplies the target-side bridge.
Finally, compose
\begin{enumerate}[label=(\alph*)]
  \item the source-side bridge,
  \item the interpreted syntactic pentagon $3$-cell, and
  \item the inverse of the target-side bridge.
\end{enumerate}
The result is a $3$-cell from the structural source shell to the mixed target
shell.  This is exactly the front-seed shell bridge theorem.  Thus the explicit
syntactic pentagon and the front-seed semantic pentagon boundary are not merely
parallel facts: they are connected by explicit comparison $3$-cells, while
Step~2 separately provides the full structural semantic pentagon comparison on
interpreted reduction sequences.

This proves (i), (ii), and (iii).
\end{proof}

\begin{remark}[Open strengthening target]
The remaining conceptual question is whether the two front-seed data can be
derived from the bare extensional Kan interface alone.  A positive answer would
show that no additional semantic coherence fields are needed even for the
  associator and pentagon package.  At present, what is proved is sharper but more
  modest: the full coherent interface is unnecessary, and one specific smaller
  boundary package is already sufficient, namely the WLWR comparison together
  with the inner-right-front pentagon contraction.
\end{remark}

\section{The \texorpdfstring{$\Kinf$}{K-infinity} Model}\label{sec:Kinfty}

The $\Kinf$ model was introduced in Paper~II~\cite{MRdQ:Kinfinity}, which
defined the inverse-limit construction, proved algebraicity and bounded
completeness (Proposition~4.1), and exhibited the separation phenomenon of
Example~\ref{ex:witness-span}.  The aspects of that baseline used here are
restated in this section and in Section~\ref{sec:witness}.  The main new point
of the present section is not the abstract fact that a projection-pair inverse
limit can be reflexive---a classical theme already present in Scott-style
semantics and later domain-theoretic accounts~\cite{Scott:continuous,AbramskyJung:domain}---but the
\emph{exact reflexive packaging}: explicit, globally continuous reify and
reflect maps together with a stagewise application formula that pins down the
relationship between the finite stages and the limit.

\subsection{Construction}

The object $\Kinf$ is obtained as the inverse limit of the reflexive tower
\[
  K_0,\ K_1,\ K_2,\ \ldots,
  \qquad
  K_0=N^+,\qquad K_{n+1}=[K_n\to K_n].
\]
At each stage there is a projection pair
\[
  f_n^+ : K_n \longrightarrow K_{n+1},
  \qquad
  f_n^- : K_{n+1} \longrightarrow K_n
\]
with
\[
  f_n^- \circ f_n^+ = \mathrm{id}_{K_n},
  \qquad
  f_n^+\bigl(f_n^-(u)\bigr)\leq u
  \quad (u\in K_{n+1}).
\]
Thus every successor stage is a reflexive enlargement of the previous one.

Here and below, $u\simeq v$ means mutual approximation
$u\le v$ and $v\le u$ in the underlying homotopy partial order.  We use
$\simeq$ only for inverse-limit coherence and density statements; the exact
formulas in Theorem~\ref{thm:D} are literal equalities of continuous maps and
coordinates.

\begin{definition}[The tower $K_0,K_1,\ldots$]
  Set $K_0=N^+$, the flat domain on the non-trivial homotopy-group carrier
  $N=\bigsqcup_{n\ge 0} S^n$ together with bottom, and define inductively
  \[
    K_{n+1}=[K_n\to K_n].
  \]
  The transition maps are the projection pairs $(f_n^+,f_n^-)$ above.
\end{definition}

\begin{definition}[$\Kinf$]
  The inverse limit $\Kinf$ is the set of coherent threads
  \[
    \Kinf=\varprojlim_n K_n
    =\Bigl\{(x_n)_{n\ge 0}\in \prod_{n\ge 0}K_n :
      f_n^-(x_{n+1})\simeq x_n\ \text{for all }n\Bigr\},
  \]
  ordered coordinatewise. We write
  \[
    \pi_n:\Kinf\to K_n
  \]
  for the $n$th projection.
\end{definition}

\begin{definition}[Canonical stage embedding]\label{def:stage-embedding}
  For $u\in K_n$, define the thread $f_{n,\infty}(u)\in\Kinf$ by its
  coordinates
  \[
    \pi_m(f_{n,\infty}(u))=
    \begin{cases}
      f_m^- \circ \cdots \circ f_{n-1}^-(u), & m<n,\\[1ex]
      u, & m=n,\\[1ex]
      f_{m-1}^+ \circ \cdots \circ f_n^+(u), & m>n.
    \end{cases}
  \]
\end{definition}

\begin{lemma}[Coherence of the canonical stage embedding]\label{lem:stage-embedding}
  The thread of Definition~\ref{def:stage-embedding} is coherent.  Moreover
  \[
    \pi_n\circ f_{n,\infty}=\mathrm{id}_{K_n},
    \qquad
    f_{n+1,\infty}\circ f_n^+ = f_{n,\infty}.
  \]
\end{lemma}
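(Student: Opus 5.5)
The plan is to verify the three claims directly from the coordinate formulas of Definition~\ref{def:stage-embedding}, using only the projection-pair identities $f_k^-\circ f_k^+=\mathrm{id}_{K_k}$ and $f_k^+(f_k^-(u))\le u$. I would write $x_m:=\pi_m(f_{n,\infty}(u))$. For coherence I have to show $f_m^-(x_{m+1})\simeq x_m$ for every $m$, and I will in fact get literal equality in every case. I split into three cases according to the position of $m$ relative to $n$.

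\emph{Case $m+1\le n-1$, i.e.\ $m<n-1$.} Here $x_{m+1}=f_{m+1}^-\circ\cdots\circ f_{n-1}^-(u)$. Applying $f_m^-$ gives $f_m^-\circ\cdots\circ f_{n-1}^-(u)=x_m$ by definition.

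\emph{Case $m=n-1$.} Here $x_n=u$, and $f_{n-1}^-(u)$ is exactly the one-step formula for $x_{n-1}$.

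\emph{Case $m\ge n$.} Here $x_{m+1}=f_m^+(x_m)$. Indeed, for $m=n$ we have $x_{n+1}=f_n^+(u)=f_n^+(x_n)$, and for larger $m$ this holds by unfolding the composite. Hence $f_m^-(x_{m+1})=f_m^-f_m^+(x_m)=x_m$ by the retraction identity.

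So the thread satisfies the coherence condition with equality, and a fortiori up to $\simeq$. The one point needing care is that the empty composite at $m=n$ is read as the identity, so that the three clauses of the definition agree at the boundary.

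The identity $\pi_n\circ f_{n,\infty}=\mathrm{id}_{K_n}$ is immediate from the middle clause of the definition.

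For $f_{n+1,\infty}\circ f_n^+=f_{n,\infty}$, I compare coordinates of both threads at each $m$, writing $v:=f_n^+(u)$.
\begin{itemize}
\item For $m>n+1$, both sides equal $f_{m-1}^+\circ\cdots\circ f_{n+1}^+\circ f_n^+(u)$, since prepending $f_n^+$ just extends the composite.
\item At $m=n+1$, the left side is $v=f_n^+(u)$, which matches the $m>n$ clause for $f_{n,\infty}(u)$.
\item At $m=n$, the left side is $f_n^-(v)=f_n^-f_n^+(u)=u$.
\item For $m<n$, the left side is $f_m^-\circ\cdots\circ f_{n-1}^-\circ f_n^-(f_n^+u)$. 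The innermost pair collapses by the retraction identity, leaving $f_m^-\circ\cdots\circ f_{n-1}^-(u)$.
\end{itemize}
Since $\Kinf$ is ordered coordinatewise and its elements are determined by their coordinates, agreement at every $m$ gives equality of threads.

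I expect no genuine obstacle here. The only delicate part is the index bookkeeping at the seams $m=n-1$, $m=n$ and $m=n+1$. A formal treatment would handle this by first proving two stepping identities, $x_{m+1}=f_m^+(x_m)$ for $m\ge n$ and $x_m=f_m^-(x_{m+1})$ for $m<n$, and then deducing both claims by induction on $|m-n|$.
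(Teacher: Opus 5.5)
Your proposal is correct and follows the paper's proof. Both argue coordinatewise: the unfolded $f^-$-composites handle $m<n$, the retraction identity $f_m^-f_m^+=\mathrm{id}$ handles $m\ge n$, and $f_{n+1,\infty}\circ f_n^+=f_{n,\infty}$ follows from coordinatewise comparison plus thread extensionality. The only difference is that you write out the seam cases $m=n-1$, $m=n$ and $m=n+1$, which the paper's two-case argument covers implicitly through the empty-composite convention.
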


\begin{proof}
  For $m<n$, the defining coordinate formulas satisfy
  \[
    f_m^-\bigl(\pi_{m+1}(f_{n,\infty}(u))\bigr)
      = f_m^-\bigl(f_{m+1}^- \circ \cdots \circ f_{n-1}^-(u)\bigr)
      = \pi_m(f_{n,\infty}(u)),
  \]
  while for $m\ge n$ the coherence is exactly the projection-pair equation
  $f_m^-f_m^+=\mathrm{id}$.  Thus the coordinates form a coherent thread.
  The identity $\pi_n\circ f_{n,\infty}=\mathrm{id}_{K_n}$ is immediate from the
  middle clause of the definition, and the compatibility
  $f_{n+1,\infty}\circ f_n^+ = f_{n,\infty}$ follows by comparing the two sides
  coordinatewise.
\end{proof}

\begin{remark}[Thread extensionality]\label{rem:thread-extensionality}
Thread equality in $\Kinf$ is coordinatewise: two elements of $\Kinf$ are equal
iff all of their projections $\pi_n$ agree.  We use this extensionality
repeatedly below.  By contrast, the relation $\simeq$ records only mutual
approximation in the homotopy order and is used to state coherence and density
before exact equalities are extracted.
\end{remark}

\begin{remark}[The canonical higher tower on $\Kinf$]\label{rem:kinfty-canonical-tower}
The higher-cell tower used for $\Kinf$ in Section~\ref{sec:witness} is the
canonical identity-type $\omega$-groupoid on the carrier of $\Kinf$: a $1$-cell
from $x$ to $y$ is an equality $x=y$, a $2$-cell is an equality between such
equalities, and higher cells are iterated equalities.  This choice adds no
extra $\Kinf$-specific higher-conversion axioms and is the specific tower
studied in the present paper.  Consequently, once two points of $\Kinf$ are
distinct, they admit no $1$-cell and hence no higher cell in this particular
tower.  Theorem~\ref{thm:C} should therefore be read as a fixed-span
persistence statement for the canonical tower, not as a classification of all
possible higher structures that one might place on $\Kinf$.
\end{remark}

The domain-theoretic content of the construction is the following.

\begin{lemma}[Finite-stage compact factorization]\label{lem:compact-factor}
  Let $x\in\Kinf$ and let $c\ll x$ be compact-below.  Then there exist
  $n\ge 0$ and $b\ll \pi_n(x)$ in $K_n$ such that
  \[
    c\le f_{n,\infty}(b)\le f_{n,\infty}(\pi_n(x)).
  \]
\end{lemma}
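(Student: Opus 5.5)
The plan is to reduce the statement to the standard density property of the canonical stage embeddings. The key fact is that every $x\in\Kinf$ is the directed supremum of its finite-stage approximations $f_{n,\infty}(\pi_n(x))$. Granting that, compactness of $c$ together with algebraicity of each finite stage $K_n$ should finish the job.

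\emph{Step 1: the approximations form an increasing chain below $x$.} I will first show that $f_{n,\infty}(\pi_n(x))$ is increasing in $n$ and bounded above by $x$. Lemma~\ref{lem:stage-embedding} gives $f_{n,\infty}=f_{n+1,\infty}\circ f_n^+$. Coherence of the thread gives $\pi_n(x)\simeq f_n^-(\pi_{n+1}(x))$. Combining these,
\[
f_{n,\infty}(\pi_n(x))\simeq f_{n+1,\infty}\bigl(f_n^+f_n^-(\pi_{n+1}(x))\bigr)\le f_{n+1,\infty}(\pi_{n+1}(x)),
\]
where the inequality uses $f_n^+f_n^-\le\mathrm{id}$ and monotonicity of $f_{n+1,\infty}$. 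For the bound below $x$, I compare coordinates:
\begin{itemize}
\item for $m\le n$ the $m$th coordinate of $f_{n,\infty}(\pi_n(x))$ is $f^-$-iterates of $\pi_n(x)$, which is $\simeq\pi_m(x)$ by coherence;
\item for $m>n$ it is $f^+\cdots f^+ f^-\cdots f^-(\pi_m(x))\le\pi_m(x)$, again by $f^+f^-\le\mathrm{id}$.
\end{itemize}

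\emph{Step 2: the supremum of the chain is $x$.} Suprema in $\Kinf$ are computed coordinatewise. The $m$th coordinate of $f_{n,\infty}(\pi_n(x))$ is already $\simeq\pi_m(x)$ once $n\ge m$, so the supremum is $\simeq x$ coordinatewise. Under thread extensionality (Remark~\ref{rem:thread-extensionality}), with $\simeq$ treated as in the ambient order, this is exactly $x$.

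\emph{Step 3: apply compactness twice.} Since $c\ll x=\bigsqcup_n f_{n,\infty}(\pi_n(x))$ is a directed supremum, there is some $n$ with $c\le f_{n,\infty}(\pi_n(x))$. Next, $K_n$ is algebraic; this is the stagewise content of Proposition~4.1 of Paper~II, and can also be proved by induction since function spaces of algebraic bounded-complete domains are again such. Hence $\pi_n(x)=\bigsqcup\{b : b\ll\pi_n(x)\}$ as a directed supremum. The embedding $f_{n,\infty}$ is continuous, being built coordinatewise from the continuous maps $f^\pm$, so
\[
c\le\bigsqcup_{b\ll\pi_n(x)} f_{n,\infty}(b).
\]
Compactness of $c$ then yields a single $b\ll\pi_n(x)$ with $c\le f_{n,\infty}(b)$. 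Finally $f_{n,\infty}(b)\le f_{n,\infty}(\pi_n(x))$ follows from $b\le\pi_n(x)$ and monotonicity.

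The main obstacle is Step 2, because the thread condition is stated only up to $\simeq$. I will have to check that suprema in $\Kinf$ exist and are computed coordinatewise, which requires the $f_m^-$ to preserve directed sups. I also have to ensure that the two relations used in Step 3, $c\le\cdot$ and $\ll$, are insensitive to replacing a point by a $\simeq$-equivalent one. I would handle this by working throughout in the preorder and noting that $\le$ and $\ll$ are invariant under $\simeq$.
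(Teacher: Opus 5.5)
Your proof is correct, but it takes a different route from the paper's.

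The paper makes the canonical choice $b:=\pi_n(c)$. It argues that a compact $c$ is ``finitely supported'', so for large $n$ the thread $f_{n,\infty}(\pi_n(c))$ dominates $c$; in fact it equals $c$, since $f_{n,\infty}(\pi_n(c))\le c$ always holds. The facts $b\le\pi_n(x)$ and $f_{n,\infty}(b)\le f_{n,\infty}(\pi_n(x))$ then follow from monotonicity.

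You proceed differently. You first prove density, $x\simeq\bigsqcup_n f_{n,\infty}(\pi_n(x))$, directly from the coordinate formulas and the projection-pair inequalities. You then use compactness of $c$ twice: once against this chain in $\Kinf$, and once against the approximants of $\pi_n(x)$ in the algebraic stage $K_n$, transported by the continuous map $f_{n,\infty}$.

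\textbf{What the paper's route buys.} It gives a canonical $b$. It proves the stronger fact that $c$ is itself an embedded stage-$n$ point. It does not need algebraicity of $K_n$ to locate $b$.

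\textbf{What your route buys.} The paper's ``finite support'' step is exactly density applied to $c$, and the paper leaves it informal. It cannot cite density here: Lemma~\ref{lem:density}(iii) is justified via Proposition~4.1, whose proof uses the present lemma. Your Step~2 supplies that density fact directly and without circularity. In fact, you can recover the paper's choice of $b$ from it, as follows.
\begin{itemize}
\item Applying your Step~2 to $c$ instead of $x$, compactness once gives $c=f_{n,\infty}(\pi_n(c))$ for some $n$.
\item Compactness of $\pi_n(c)$ in $K_n$ then follows from continuity of $f_{n,\infty}$ together with $\pi_n\circ f_{n,\infty}=\mathrm{id}$.
\end{itemize}
Your route in effect supplies the justification the paper's shorter route is missing.
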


\begin{proof}
  Compact-below information in the inverse limit is detected at a finite stage:
  because $c$ is compact, only finitely many coordinates are needed to witness
  the inequality $c\le x$.  Choose $n$ larger than that finite support and let
  $b:=\pi_n(c)$.  Then $b\ll \pi_n(x)$ in the algebraic stage $K_n$, and the
  coordinate formula of Definition~\ref{def:stage-embedding} shows that the
  embedded stage-$n$ thread $f_{n,\infty}(b)$ dominates $c$ on every
  coordinate while still lying below the stage-$n$ approximant
  $f_{n,\infty}(\pi_n(x))$.
\end{proof}

\begin{lemma}[Coordinatewise least upper bounds are coherent]\label{lem:coordinate-lub}
  Let $X\subseteq \Kinf$ be upper-bounded and set
  \[
    u_n:=\sup \pi_n(X)\in K_n.
  \]
  Then
  \[
    f_n^-(u_{n+1})=u_n
    \qquad\text{for all }n\ge 0.
  \]
  Consequently $(u_n)_{n\ge 0}$ is a coherent thread in $\Kinf$ and is the
  least upper bound of $X$.
\end{lemma}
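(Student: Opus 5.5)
The plan is to verify the coherence equation $f_n^-(u_{n+1})=u_n$ directly, using two facts: $f_n^-$ is continuous, so it preserves directed, and more generally existing, suprema of the relevant kind; and the threads of $X$ are themselves coherent. Once the equation holds, the thread property and the least-upper-bound property follow from the coordinatewise order.

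First, fix an upper bound $z\in\Kinf$ of $X$. Then $\pi_n(X)$ is bounded above by $\pi_n(z)$ in each $K_n$. Bounded completeness of $K_n$, which Paper~II gives as part of its Proposition~4.1 for every stage, shows that $u_n=\sup\pi_n(X)$ exists. Next I would show $f_n^-(u_{n+1})=\sup_{x\in X} f_n^-(\pi_{n+1}(x))$. The map $f_n^-$ is the lower adjoint of the projection pair only in the reverse direction: $f_n^+$ is the left adjoint, since $f_n^+(a)\le b\iff a\le f_n^-(b)$. So $f_n^-$ preserves infima, not suprema, in general. Here I would instead use the concrete form of $f_n^-$. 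For $n\ge1$ it is $u\mapsto f_{n-1}^-\circ u\circ f_{n-1}^+$, and suprema in the function spaces $K_{n+1}=[K_n\to K_n]$ are computed pointwise. Composition with fixed continuous maps preserves pointwise suprema, because $f_{n-1}^-$ is continuous and bounded joins of continuous functions are evaluated pointwise in the bounded-complete setting. For $n=0$, $f_0^-(u)=u(\bot)$, and evaluation at a point also preserves pointwise suprema. By induction on $n$, each $f_n^-$ therefore preserves suprema of bounded sets.

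Given that, coherence of each $x\in X$ yields $f_n^-(\pi_{n+1}(x))\simeq\pi_n(x)$. Because $\simeq$ is mutual approximation and $u_n$ is defined only through $\le$, mutually approximating sets have the same supremum. Hence $f_n^-(u_{n+1})\simeq u_n$, and antisymmetry in the domain $K_n$ upgrades this to equality. The thread $(u_n)$ therefore lies in $\Kinf$. It is an upper bound of $X$ coordinatewise, and it lies below any other upper bound $z$ because $u_n\le\pi_n(z)$ for each $n$. The order on $\Kinf$ is coordinatewise, so $(u_n)$ is the least upper bound.

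The main obstacle is the sup-preservation of $f_n^-$ for arbitrary bounded, not necessarily directed, sets. If the pointwise-join argument resists, I would fall back on an inequality sandwich. Monotonicity gives $f_n^-(u_{n+1})\ge\sup f_n^-(\pi_{n+1}(X))=u_n$. For the reverse inequality, I would use $f_n^+(u_n)\ge f_n^+(\pi_n(x))\ge f_n^+f_n^-(\pi_{n+1}(x))$, which I would then compare with $\pi_{n+1}(x)$. That last comparison needs the stronger fact that threads in $\Kinf$ satisfy $\pi_{n+1}(x)\le f_n^+(\pi_n(x))$ only up to approximation, and this is where I expect the detailed work to lie.
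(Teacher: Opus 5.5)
Your main argument is correct and is essentially the paper's: both hinge on $f_n^-$ preserving suprema of upper-bounded families, which the paper simply asserts, whereas you rightly note that this is not automatic for the upper adjoint of a projection pair and prove it by induction from the concrete formulas $f_0^-(g)=g(\bot)$ and $f_n^-(g)=f_{n-1}^-\circ g\circ f_{n-1}^+$, using that bounded joins in $[K_n\to K_n]$ are pointwise. Drop the fallback sandwich, though: it needs $\pi_{n+1}(x)\le f_n^+(\pi_n(x))$ for every thread, and together with $f_n^+f_n^-\le\mathrm{id}$ this would force $\pi_{n+1}(x)=f_n^+(\pi_n(x))$, which is false in general (and without join-preservation of $f_n^-$ the equation $f_n^-(u_{n+1})=u_n$ can fail for abstract projection pairs), so that route cannot close.
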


\begin{proof}
  For every $x\in X$, thread coherence gives
  \[
    f_n^-(\pi_{n+1}(x))=\pi_n(x).
  \]
  Since $f_n^-$ is monotone and preserves least upper bounds of upper-bounded
  families, applying it to $u_{n+1}=\sup \pi_{n+1}(X)$ yields an upper bound of
  $\pi_n(X)$.  Conversely, every upper bound of $\pi_n(X)$ pulls back along
  $f_n^-$ to an upper bound of $\pi_{n+1}(X)$ by the projection-pair equations,
  so leastness forces $f_n^-(u_{n+1})=u_n$.  The thread condition follows, and
  coordinatewise leastness then shows that $(u_n)_n$ is $\sup X$ in $\Kinf$.
\end{proof}

\begin{proposition}[Proposition~4.1]
  $\Kinf$ is a homotopy Scott domain: it is algebraic and bounded complete.
\end{proposition}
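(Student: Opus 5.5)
The plan is to establish the two clauses separately, bounded completeness first, since it follows almost immediately from Lemma~\ref{lem:coordinate-lub}. Given an upper-bounded $X\subseteq\Kinf$, I form the coordinatewise suprema $u_n=\sup\pi_n(X)$, which exist because each stage $K_n$ is bounded complete. The stages are bounded complete because $K_0=N^+$ is a flat domain, and the function space $[K_n\to K_n]$ of a bounded-complete algebraic domain is again bounded-complete algebraic; this is the standard Scott fact, applied by induction on $n$. Lemma~\ref{lem:coordinate-lub} then says that $(u_n)_n$ is a coherent thread and is the least upper bound of $X$. The empty family yields the bottom thread $(\bot_n)_n$, so $\Kinf$ is pointed. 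Directed sets are upper-bounded by their coordinatewise directed suprema, so the same lemma also gives directed completeness.

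For algebraicity, I take as candidate compact basis the set
\[
B=\{f_{n,\infty}(b): n\ge 0,\ b\text{ compact in }K_n\}.
\]
The argument then has three steps.

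\begin{enumerate}[label=(\alph*)]
\item \emph{Each $f_{n,\infty}(b)$ is compact.} Suppose $f_{n,\infty}(b)\le\sup D$ with $D$ directed. Projecting to coordinate $n$ and using that $\pi_n$ preserves directed suprema (by the coordinatewise description above) gives $b\le\sup\pi_n(D)$. Compactness of $b$ in $K_n$ yields some $d\in D$ with $b\le\pi_n(d)$. The coordinate formula of Definition~\ref{def:stage-embedding}, together with $f^+_mf^-_m\le\mathrm{id}$ and monotonicity of the $f^\pm$, then gives $f_{n,\infty}(b)\le d$ coordinatewise.

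\item \emph{Each thread is approximated by its stages.} For every $x\in\Kinf$ I show that $x=\sup_n f_{n,\infty}(\pi_n(x))$. This is a directed chain, because $f_{n,\infty}=f_{n+1,\infty}\circ f_n^+$ by Lemma~\ref{lem:stage-embedding} and $f_n^+(\pi_n x)=f_n^+f_n^-(\pi_{n+1}x)\le\pi_{n+1}x$. Each term lies below $x$. At coordinate $m$, the term with $n=m$ already equals $\pi_m(x)$, so thread extensionality (Remark~\ref{rem:thread-extensionality}) gives equality.

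\item \emph{Each thread is the directed supremum of its compacts below.} Each $\pi_n(x)$ is the directed supremum of compact elements of $K_n$, and $f_{n,\infty}$ is continuous. Combining this with step~(b), $x$ is the directed supremum of the elements of $B$ below it. Lemma~\ref{lem:compact-factor} confirms that conversely every compact $c\ll x$ factors through such an element, so $B$ consists exactly of the compacts up to this factorization.
\end{enumerate}

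The main obstacle I expect is step~(a) together with the continuity facts behind it. These are that $\pi_n$ commutes with directed suprema and that $f_{n,\infty}$ is continuous, and both depend on $f_n^-$ preserving suprema. Lemma~\ref{lem:coordinate-lub} already packages that property for upper-bounded families, so I would invoke it rather than reprove it.

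A separate point of care is the paper's use of $\simeq$ in the thread condition versus the literal equalities above. Since the order on the stages is antisymmetric on the relevant coordinates, I would note explicitly that mutual approximation collapses to equality there, so that coordinatewise suprema and embeddings can be compared literally.
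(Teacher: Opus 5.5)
Your proposal is correct. The bounded-completeness half matches the paper: the stages are bounded complete, and then Lemma~\ref{lem:coordinate-lub} applies. The only difference there is that you cite the standard closure of Scott domains under function spaces, where the paper argues with step maps.

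The algebraicity half takes a genuinely different route.
\begin{itemize}
\item The paper works only with the approximants $a_n(x)=f_{n,\infty}(\pi_n(x))$ and Lemma~\ref{lem:compact-factor}, which says every compact below $x$ is dominated by some $a_n(x)$. From this cofinality it concludes $x\simeq\sup_n a_n(x)$ and algebraicity.
\item You prove the opposite transfer: $f_{n,\infty}(b)$ is compact in $\Kinf$ whenever $b$ is compact in $K_n$. You establish $x=\sup_n a_n(x)$ directly by coordinates, since the $n=m$ term already attains $\pi_m(x)$. You then push stage algebraicity through the continuous maps $f_{n,\infty}$.
\end{itemize}

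Yours is the standard $D_\infty$ argument, and it is more self-contained. It supplies two facts that the paper's cofinality step leaves implicit. First, the approximating elements are actually compact; the $a_n(x)$ themselves need not be, because $\pi_n(x)$ need not be compact in $K_n$. Second, $x$ is their supremum, shown without first assuming that $x$ is the supremum of its compact approximants. What the paper's route additionally records is that every compact element of $\Kinf$ comes from a finite stage. You rightly treat that as an optional aside.

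One small repair: as you phrase it, deriving directed completeness from Lemma~\ref{lem:coordinate-lub} is circular. Calling the coordinatewise directed suprema an upper bound already presupposes that they form a thread in $\Kinf$, which is the lemma's hypothesis. That coherence follows directly from continuity of $f_n^-$, which is part of the projection-pair data.
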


\begin{proof}
  We first analyze the finite stages.

  \smallskip
  \noindent\emph{Bounded completeness of the stages.}
  The base $K_0=N^+$ is a flat domain, hence bounded complete.
  If $K_n$ is bounded complete, then the function space
  $K_{n+1}=[K_n\to K_n]$ is bounded complete under the pointwise order, since
  least upper bounds of upper-bounded families are computed pointwise.
  Therefore every $K_n$ is bounded complete.

  \smallskip
  \noindent\emph{Algebraicity of the stages.}
  The base stage $K_0$ is algebraic: every non-bottom element is compact, and
  every element is the least upper bound of the compact elements below it.
  Assume now that $K_n$ is algebraic. We show that
  $K_{n+1}=[K_n\to K_n]$ is algebraic.

  For $a,b\in K_n$ with $a$ compact and $b$ compact-below $f(a)$, let
  $[a\Rightarrow b]$ denote the usual compact step map which sends arguments
  above $a$ to $b$ and arguments below $a$ to bottom. Finite joins of such step
  maps are still compact. Let $\mathcal{A}_f$ be the set of all finite joins of
  compact step maps lying below $f$. Then $\mathcal{A}_f$ is directed under
  inclusion of the underlying finite families, and every member of
  $\mathcal{A}_f$ is compact-below $f$.

  To see that $f$ is the least upper bound of $\mathcal{A}_f$, fix
  $x\in K_n$. Since $K_n$ is algebraic, the value $f(x)$ is the least upper
  bound of the compact elements below it. Each such compact element is already
  captured by some step approximation built from a compact approximation to $x$.
  Hence evaluation at $x$ of the directed family $\mathcal{A}_f$ is cofinal in
  the compact-below approximants to $f(x)$, so its pointwise supremum is exactly
  $f(x)$. Since this holds for every $x$, $f=\sup \mathcal{A}_f$, and $K_{n+1}$
  is algebraic.

  Thus every finite stage $K_n$ is an algebraic bounded-complete domain.

  \smallskip
  \noindent\emph{Passage to the inverse limit.}
  For $x\in\Kinf$, define its $n$th finite-stage approximation by
  \[
    a_n(x):=f_{n,\infty}(\pi_n(x)),
  \]
  where $f_{n,\infty}:K_n\to\Kinf$ is the canonical exact embedding of the $n$th
  stage into the limit. By construction,
  \[
    \pi_m(a_n(x))=\pi_m(x)\qquad (m\le n),
  \]
  so each $a_n(x)$ agrees with $x$ on the first $n+1$ coordinates; in
  particular $a_n(x)\le x$. Moreover the family $(a_n(x))_{n\ge 0}$ is directed,
  since passing from stage $n$ to stage $n+1$ only refines the approximation.

  Now let $c\ll x$ be compact-below in $\Kinf$.  By
  Lemma~\ref{lem:compact-factor}, there is some stage $n$ and some compact
  $b\ll \pi_n(x)$ in $K_n$ such that
  \[
    c\le f_{n,\infty}(b)\le a_n(x).
  \]
  Hence the directed family $\{a_n(x)\}_{n\ge 0}$ is cofinal among compact
  approximants below $x$. Therefore $x$ is a least upper bound of this family,
  and the chosen supremum of the family is equivalent to $x$ in the induced
  homotopy order:
  \[
    x \simeq \sup_{n\ge 0} a_n(x).
  \]
  Thus every element of $\Kinf$ is a least upper bound of compact elements below
  it. This proves algebraicity of $\Kinf$.

  Finally, bounded completeness of $\Kinf$ is obtained coordinatewise.  If
  $X\subseteq \Kinf$ is upper-bounded, then each projected family
  $\pi_n(X)\subseteq K_n$ is upper-bounded and therefore has a least upper bound
  in $K_n$.  Lemma~\ref{lem:coordinate-lub} shows that these coordinatewise
  suprema again form a coherent thread, which is the least upper bound of $X$ in
  $\Kinf$.
\end{proof}

\subsection{Exact reflexive packaging}\label{sec:packaging}

The central structural fact is that the inverse limit is itself reflexive.  In
fact, the projection-pair machinery on the tower produces global reify and
reflect maps.  The proof of this fact assembles three technical ingredients that
are implicit in the inverse-limit construction but are worth isolating as
standalone statements, both because they clarify the argument and because they
are reused in the proof-relevant witness analysis of
Section~\ref{sec:witness}.

\begin{lemma}[Density of finite-stage approximants]\label{lem:density}
  For every $x\in\Kinf$, define the \emph{$n$th finite-stage approximant}
  \[
    a_n(x) := f_{n,\infty}(\pi_n(x)) \in \Kinf.
  \]
  Then:
  \begin{enumerate}[label=(\roman*)]
    \item $a_n(x)\le x$ for every $n$,
    \item the family $(a_n(x))_{n\ge 0}$ is directed, and
    \item $x \simeq \sup_{n\ge 0} a_n(x)$.
  \end{enumerate}
  In particular, $\Kinf$ is generated, up to chosen-supremum equivalence, by
  the embedded finite stages $\bigcup_n f_{n,\infty}(K_n)$.
\end{lemma}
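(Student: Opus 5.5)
The plan is to prove the three clauses in order, working coordinatewise throughout and using the explicit formulas of Definition~\ref{def:stage-embedding} together with the coherence statement of Lemma~\ref{lem:stage-embedding}. The argument largely repackages the passage-to-the-limit step already carried out in the proof of Proposition~4.1, so I would extract that step and make each inequality explicit.

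For (i), compare $a_n(x)$ and $x$ coordinate by coordinate. For $m=n$ we have $\pi_n(a_n(x))=\pi_n(x)$ by Lemma~\ref{lem:stage-embedding}. For $m<n$, the coordinate is $f_m^-\circ\cdots\circ f_{n-1}^-(\pi_n(x))$, which is $\simeq\pi_m(x)$ by iterating thread coherence; using monotonicity of the $f_k^-$, this gives in particular $\le\pi_m(x)$. For $m>n$, the coordinate is $f_{m-1}^+\circ\cdots\circ f_n^+(\pi_n(x))$. I would show it is $\le\pi_m(x)$ by induction on $m$. Coherence gives $\pi_n(x)\simeq f_n^-\circ\cdots\circ f_{m-1}^-(\pi_m(x))$, so monotonicity of the $f_k^+$ reduces the claim to repeated use of the projection-pair inequality $f_k^+f_k^-(u)\le u$.

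For (ii), it suffices to show $a_n(x)\le a_{n+1}(x)$, since a chain is directed. Lemma~\ref{lem:stage-embedding} gives $a_n(x)=f_{n+1,\infty}(f_n^+(\pi_n(x)))$. Coherence gives $\pi_n(x)\simeq f_n^-(\pi_{n+1}(x))$, so $f_n^+(\pi_n(x))\le f_n^+f_n^-(\pi_{n+1}(x))\le \pi_{n+1}(x)$. Monotonicity of $f_{n+1,\infty}$, which holds coordinatewise because every component map is monotone, then yields $a_n(x)\le a_{n+1}(x)$.

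For (iii), the family $(a_n(x))_n$ is bounded above by $x$ by (i), so Lemma~\ref{lem:coordinate-lub} lets me compute its supremum coordinatewise as $u_m=\sup_n\pi_m(a_n(x))$, and (i) gives $u_m\le\pi_m(x)$. For the reverse inequality, note that $\pi_m(a_n(x))=\pi_m(x)$ exactly once $n=m$. Hence $\pi_m(x)\le u_m$ for every $m$, which yields $x\simeq\sup_n a_n(x)$.

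The final "in particular" clause then follows immediately, since each $a_n(x)$ lies in $f_{n,\infty}(K_n)$. The main obstacle I expect is the bookkeeping between strict equality and mutual approximation $\simeq$: thread coherence is only stated up to $\simeq$, so each step in (i) must be routed through monotonicity rather than rewriting. I would handle this by proving once a small auxiliary claim, namely that $u\simeq v$ implies $g(u)\simeq g(v)$ for monotone $g$, and then applying it uniformly.
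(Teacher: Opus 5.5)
Your proposal is correct. Clauses (i) and (ii) follow the paper's proof: exact agreement via $\pi_n\circ f_{n,\infty}=\mathrm{id}$ at and below stage $n$, the iterated inequality $f_m^+f_m^-\le\mathrm{id}$ above $n$, and $a_n(x)\le a_{n+1}(x)$ from $f_{n+1,\infty}\circ f_n^+=f_{n,\infty}$. You also spell out the $\simeq$-versus-equality bookkeeping that the paper leaves implicit.

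For (iii) you take a genuinely different route. The paper does not argue coordinatewise. It refers back to the algebraicity argument of Proposition~4.1, where Lemma~\ref{lem:compact-factor} places every compact $c\ll x$ below some $a_n(x)$. From this it concludes that the chain is cofinal among the compact approximants of $x$, and hence has $x$ as a least upper bound.

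Your argument uses only (i) and the exact identity $\pi_m(a_m(x))=\pi_m(x)$, which squeezes the coordinatewise supremum to $\pi_m(x)$ at every $m$. This is more elementary and self-contained. It needs no compact elements and no Lemma~\ref{lem:compact-factor}. It also does not need $x$ to be the supremum of its compact approximants, which the paper's step from ``cofinal among compact approximants'' to ``$x$ is the least upper bound'' tacitly requires. What the paper's route buys is the link between the approximants and the compact elements below $x$, which is the setting of the algebraicity claim in Proposition~4.1.

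You can also drop Lemma~\ref{lem:coordinate-lub}. By (i), $x$ is an upper bound of the family. Any upper bound $z$ satisfies $\pi_m(z)\ge\pi_m(a_m(x))=\pi_m(x)$ for every $m$, so $x$ is itself a least upper bound, and any chosen supremum is $\simeq x$.
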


\begin{proof}
  Clause~(i) holds because $a_n(x)$ agrees with $x$ on coordinates $0$
  through $n$ (by the retract property $\pi_n\circ f_{n,\infty}=\mathrm{id}$)
  and is $\le x$ on all higher coordinates (by iterated application of the
  section inequality $f_m^+f_m^-\le\mathrm{id}$).  Clause~(ii) follows because
  passing from stage $n$ to $n+1$ only refines the approximation: $a_n(x)\le
  a_{n+1}(x)$ by the compatibility of the projection pairs.  Clause~(iii) was
  established in the proof of Proposition~4.1 (algebraicity of $\Kinf$): every
  compact element below $x$ is already below some $a_n(x)$, so the directed
  family is cofinal among the compact approximants to $x$, hence has $x$ as a
  least upper bound; the chosen supremum is therefore equivalent to $x$.
\end{proof}

\begin{lemma}[Coherence of stage restrictions]\label{lem:coherent-restrict}
  Let $g:\Kinf\to\Kinf$ be a continuous endomap and define its \emph{stage-$n$
  restriction} by
  \[
    r_n(g) := \pi_n\circ g\circ f_{n,\infty} \;\in\; [K_n\to K_n] = K_{n+1}.
  \]
  Then the family $(r_n(g))_{n\ge 0}$ satisfies the coherence condition
  \[
    f_n^-\bigl(r_{n+1}(g)\bigr) = r_n(g)
    \qquad\text{for all } n\ge 0.
  \]
  In particular, the shifted thread $\bigl(f_0^-(r_0(g)),\,r_0(g),\,r_1(g),\,
  \ldots\bigr)$ is a well-defined element of $\Kinf$.
\end{lemma}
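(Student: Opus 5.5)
The plan is a direct unfolding of the projection pairs at successor stages, followed by two substitutions: thread coherence and Lemma~\ref{lem:stage-embedding}. One point about indexing comes first. Since $r_n(g)\in K_{n+1}$ and $r_{n+1}(g)\in K_{n+2}$, the transition map in the coherence condition must be the one $K_{n+2}\to K_{n+1}$. That is, the equation to prove is the coherence condition of the shifted thread at coordinate $n+1$. I will write this map as $f_{n+1}^-$ and read the displayed equation accordingly.

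For the successor-stage projection pairs I use the standard Scott recipe for the tower $K_{n+1}=[K_n\to K_n]$:
\[
  f_{n+1}^+(h)=f_n^+\circ h\circ f_n^-,
  \qquad
  f_{n+1}^-(k)=f_n^-\circ k\circ f_n^+ .
\]
At the base, $f_0^+(x)$ is the constant map at $x$ and $f_0^-(h)=h(\bot)$. These are the only clauses of the projection pairs the argument needs. With this, the key computation is
\[
  f_{n+1}^-\bigl(r_{n+1}(g)\bigr)
  = f_n^-\circ\pi_{n+1}\circ g\circ f_{n+1,\infty}\circ f_n^+ .
\]
On the left of $g$, thread coherence $f_n^-\circ\pi_{n+1}=\pi_n$ applies, because $g$ takes values in $\Kinf$. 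On the right of $g$, Lemma~\ref{lem:stage-embedding} gives $f_{n+1,\infty}\circ f_n^+=f_{n,\infty}$. Substituting both yields $\pi_n\circ g\circ f_{n,\infty}=r_n(g)$, as required. Continuity of $r_n(g)$ is automatic, since it is a composite of continuous maps: $\pi_n$, $g$, and $f_{n,\infty}$.

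For the final clause, set $x_0:=f_0^-(r_0(g))=r_0(g)(\bot)\in K_0$ and $x_{n+1}:=r_n(g)$. Coherence at coordinate $0$ holds by definition of $x_0$. Coherence at each coordinate $n+1$ is exactly the identity proved above. Hence the shifted family is a coherent thread in $\Kinf$.

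The main obstacle is the gap between $\simeq$ and $=$. The definition of $\Kinf$ only asks for $f_n^-(x_{n+1})\simeq x_n$, whereas the lemma claims a literal equality. My first approach is to use antisymmetry of the order on each $K_n$, which is a domain, to upgrade mutual approximation to equality coordinatewise, in the spirit of Remark~\ref{rem:thread-extensionality}. The resulting pointwise equalities then assemble into an equality of continuous maps by function extensionality. If the underlying homotopy order is only a preorder, I would instead record the coherence of the stage-embedded threads as exact, since Lemma~\ref{lem:stage-embedding} gives it literally. I would then verify the identity pointwise on $K_n$, where only the exact coordinate formulas of $f_{n,\infty}$ are used.
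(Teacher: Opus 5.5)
Your argument is correct and is essentially the paper's proof. The paper's displayed identity $f_n^-(r_{n+1}(g))(a)=\pi_n\bigl(g(f_{n+1,\infty}(f_n^+(a)))\bigr)$ silently combines your two steps: the Scott formula for the retraction $K_{n+2}\to K_{n+1}$, followed by thread coherence $f_n^-\circ\pi_{n+1}=\pi_n$. It then substitutes $f_{n+1,\infty}\circ f_n^+=f_{n,\infty}$ exactly as you do, so reading the map as $f_{n+1}^-$ is the right repair of the index slip. Upgrading $\simeq$ to $=$ by antisymmetry at the finite stages matches what the paper itself assumes (e.g.\ in Lemma~\ref{lem:stage-agreement}). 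Your preorder fallback would not give literal equality, however, since thread coherence is applied to the arbitrary thread $g(f_{n,\infty}(a))$, not to a stage-embedded one.
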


\begin{proof}
  Fix $a\in K_n$.  Applying $f_n^-$ to the map $r_{n+1}(g)$ and evaluating at
  $a$ gives
  \[
    f_n^-\bigl(r_{n+1}(g)\bigr)(a)
    = \pi_n\bigl(g(f_{n+1,\infty}(f_n^+(a)))\bigr).
  \]
  Since $f_{n+1,\infty}\circ f_n^+$ and $f_{n,\infty}$ produce the same
  coherent thread in $\Kinf$ (both agree on all coordinates up to $n$ and
  satisfy the tower coherence above $n$), continuity of $g$ gives
  $g(f_{n+1,\infty}(f_n^+(a))) = g(f_{n,\infty}(a))$.  Projecting to level $n$
  yields $r_n(g)(a)$.  Since $a$ was arbitrary, $f_n^-(r_{n+1}(g))=r_n(g)$ as
  required.  The ``in particular'' clause follows because this coherence
  condition is exactly the defining property of a thread in the inverse limit
  $\Kinf$.
\end{proof}

\begin{lemma}[Monotonicity of application shadows]\label{lem:application-shadows}
  For $x,y\in\Kinf$, define
  \[
    k_n(x,y):=f_{n,\infty}\bigl(\pi_{n+1}(x)(\pi_n(y))\bigr)\in\Kinf.
  \]
  Then
  \[
    k_n(x,y)\le k_{n+1}(x,y)
    \qquad\text{for all }n\ge 0.
  \]
  Consequently the directed supremum
  \[
    k(x,y):=\sup_{n\ge 0}k_n(x,y)
  \]
  exists in $\Kinf$.
\end{lemma}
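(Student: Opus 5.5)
The plan is to prove the inequality $k_n(x,y)\le k_{n+1}(x,y)$ at stage $n+1$ and then transport it to $\Kinf$ along the monotone embedding $f_{n+1,\infty}$. By Lemma~\ref{lem:stage-embedding}, $f_{n,\infty}=f_{n+1,\infty}\circ f_n^+$, so
\[
k_n(x,y)=f_{n+1,\infty}\bigl(f_n^+(\pi_{n+1}(x)(\pi_n(y)))\bigr),
\qquad
k_{n+1}(x,y)=f_{n+1,\infty}\bigl(\pi_{n+2}(x)(\pi_{n+1}(y))\bigr).
\]
Since $f_{n+1,\infty}$ is monotone (it is built coordinatewise from the monotone maps $f_m^\pm$), it suffices to show
\[
f_n^+\bigl(\pi_{n+1}(x)(\pi_n(y))\bigr)\;\le\;\pi_{n+2}(x)\bigl(\pi_{n+1}(y)\bigr)\quad\text{in }K_{n+1}.
\]

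I will use the standard description of the projection pairs on function spaces: for $n\ge 0$,
\[
f_{n+1}^+(h)=f_n^+\circ h\circ f_n^-,\qquad f_{n+1}^-(g)=f_n^-\circ g\circ f_n^+.
\]
Thread coherence of $x$ gives $f_{n+1}^-(\pi_{n+2}(x))\simeq\pi_{n+1}(x)$. Applying $f_{n+1}^+$ and the deflation inequality $f_{n+1}^+f_{n+1}^-\le\mathrm{id}$ then gives
\[
f_n^+\circ\pi_{n+1}(x)\circ f_n^-\;\le\;\pi_{n+2}(x).
\]
I will evaluate this pointwise at $\pi_{n+1}(y)$. Coherence of $y$ gives $f_n^-(\pi_{n+1}(y))\simeq\pi_n(y)$, and monotonicity of $f_n^+$ and of $\pi_{n+1}(x)$ transfers the $\simeq$ through to the value. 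The result is
\[
f_n^+\bigl(\pi_{n+1}(x)(\pi_n(y))\bigr)\le\pi_{n+2}(x)(\pi_{n+1}(y)),
\]
which is the required inequality.

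For the consequence, the family $(k_n(x,y))_n$ is an increasing chain, hence directed. Each coordinate family $\pi_m(k_n(x,y))$ is then a chain in the bounded-complete, directed-complete stage $K_m$, so its supremum exists. The coordinatewise suprema form a coherent thread by the argument of Lemma~\ref{lem:coordinate-lub}, since $f_m^-$ preserves directed suprema. This thread is the supremum in $\Kinf$.

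The main obstacle is the bookkeeping with $\simeq$ versus $=$. Coherence is stated only up to mutual approximation, so I must check that every map applied to it (the maps $f_n^\pm$, function application, and the embedding $f_{n+1,\infty}$) is monotone, which makes it respect $\simeq$. I also need $\le$ in $K_{n+2}=[K_{n+1}\to K_{n+1}]$ to be the pointwise order, so that the function inequality can be evaluated at $\pi_{n+1}(y)$. The base case $n=0$ needs the explicit shape of $f_1^\pm$ in terms of $f_0^\pm$; this is the same formula, with $f_0^\pm$ being the constant-map embedding and evaluation at bottom. So no separate argument is needed as long as the function-space formula for $f_{n+1}^\pm$ holds for all $n\ge 0$.
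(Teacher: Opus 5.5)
Your proof is correct and follows the same route as the paper's: thread coherence of $x$, the deflation inequality at the function-space stage, pointwise evaluation at the lifted argument, and transport into $\Kinf$ via $f_{n,\infty}=f_{n+1,\infty}\circ f_n^+$. Your version is more explicit than the paper's sketch. The paper tacitly relies on the Scott-style formulas $f_{n+1}^+(h)=f_n^+\circ h\circ f_n^-$ and $f_{n+1}^-(g)=f_n^-\circ g\circ f_n^+$, and it glosses over the indexing and the $\simeq$ bookkeeping that you handle by monotonicity.
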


\begin{proof}
  By thread coherence of $x$, the coordinate $\pi_n(x)$ is the
  $f_n^-$-projection of $\pi_{n+1}(x)$.  Applying the section inequality
  $f_n^+f_n^-\le \mathrm{id}$ at stage $n+1$ shows that evaluation at the lifted
  argument in stage $n+1$ dominates the stage-$n$ value.  Embedding those
  values back into $\Kinf$ via the compatible maps $f_{n,\infty}$ and
  $f_{n+1,\infty}$ yields $k_n(x,y)\le k_{n+1}(x,y)$.
\end{proof}

\begin{lemma}[Agreement on embedded finite stages]\label{lem:stage-agreement}
  Let $f,g:\Kinf\to\Kinf$ be continuous endomaps.  Suppose that for every
  $n\ge 0$ and every $y\in K_n$ one has
  \[
    f(f_{n,\infty}(y))=g(f_{n,\infty}(y)).
  \]
  Then $f=g$.
\end{lemma}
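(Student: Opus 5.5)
The plan is to combine the density lemma (Lemma~\ref{lem:density}) with continuity of $f$ and $g$, and then pass from mutual approximation to literal equality using thread extensionality (Remark~\ref{rem:thread-extensionality}).

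\emph{Reduction.} Fix $x\in\Kinf$. Set $a_n(x)=f_{n,\infty}(\pi_n(x))$ and apply the hypothesis with $y:=\pi_n(x)\in K_n$. This gives $f(a_n(x))=g(a_n(x))$ for every $n$, and these are exact equalities in $\Kinf$. By Lemma~\ref{lem:density}, $(a_n(x))_n$ is directed with $x\simeq\sup_n a_n(x)$. Continuity means preservation of directed suprema, so
\[
f(x)\simeq f\bigl(\sup_n a_n(x)\bigr)=\sup_n f(a_n(x))=\sup_n g(a_n(x))=g\bigl(\sup_n a_n(x)\bigr)\simeq g(x).
\]
The first and last $\simeq$ use monotonicity of $f$ and $g$: the relation $x\simeq\sup_n a_n(x)$ means mutual $\le$, and monotone maps preserve both inequalities. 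Hence $f(x)\simeq g(x)$.

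\emph{From $\simeq$ to $=$.} This is the step I expect to be the main obstacle, because $\simeq$ is only mutual approximation in the homotopy order. I will first check it coordinatewise: $f(x)\le g(x)$ and $g(x)\le f(x)$ give, for each $n$, both $\pi_n f(x)\le\pi_n g(x)$ and $\pi_n g(x)\le \pi_n f(x)$ in $K_n$. Antisymmetry in $K_n$ would then give $\pi_n f(x)=\pi_n g(x)$, and Remark~\ref{rem:thread-extensionality} would finish the argument.

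Since the order may only be a preorder, I would try first to avoid $\simeq$ entirely. The coordinatewise supremum of Lemma~\ref{lem:coordinate-lub} is computed as a literal sup, and $\pi_m(a_n(x))=\pi_m(x)$ for $m\le n$. So the directed family $(a_n(x))$ is eventually constant equal to $\pi_m(x)$ in each coordinate $m$, and the coordinatewise sup is literally $x$. Thus $x=\sup_n a_n(x)$ with equality, not merely $\simeq$, provided suprema in each $K_m$ of eventually constant families are the constant value. Applying continuity with this exact supremum then gives $f(x)=g(x)$ on the nose.

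Since $x$ was arbitrary, $f=g$ by function extensionality.
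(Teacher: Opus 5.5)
Your proposal is correct and follows essentially the paper's route: density of the approximants $a_n(x)$, agreement of $f$ and $g$ on them, continuity to pass to the supremum, and antisymmetry at each finite stage plus thread extensionality to turn $\simeq$ into literal equality. The paper uses antisymmetry to turn $x\simeq\sup_n a_n(x)$ into $x=\sup_n a_n(x)$, whereas you push $\simeq$ through $f$ and $g$ by monotonicity and apply antisymmetry to $f(x)\simeq g(x)$, an inessential reordering; note, however, that your ``eventually constant'' fallback does not actually avoid antisymmetry, since identifying the chosen supremum in $K_m$ of a family with top element $\pi_m(x)$ with $\pi_m(x)$ itself is exactly that step.
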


\begin{proof}
  Fix $z\in\Kinf$.  By Lemma~\ref{lem:density}, the finite-stage approximants
  $a_n(z)=f_{n,\infty}(\pi_n(z))$ form a directed family whose chosen supremum
  is equivalent to $z$.  By hypothesis, $f$ and $g$ agree on every $a_n(z)$.
  Since both maps are continuous, they agree on the chosen supremum of that
  family.  Thread equality in $\Kinf$ is coordinatewise
  (Remark~\ref{rem:thread-extensionality}), and antisymmetry at each finite
  stage turns the chosen-supremum equivalence back into literal equality with
  $z$.  Hence $f(z)=g(z)$.  Since $z$ was arbitrary, $f=g$.
\end{proof}

\begin{lemma}[Coordinatewise continuity into the inverse limit]\label{lem:coordinatewise-cont}
  Let $A$ be a complete homotopy partial order and let
  $g_n:A\to K_n$ be continuous maps satisfying
  \[
    f_n^-(g_{n+1}(a))=g_n(a)
    \qquad\text{for every }a\in A\text{ and every }n\ge 0.
  \]
  Then the map
  \[
    a\longmapsto (g_n(a))_{n\ge 0}
  \]
  defines a continuous function $A\to\Kinf$.
\end{lemma}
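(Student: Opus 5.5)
The map $\Phi:A\to\Kinf$, $\Phi(a):=(g_n(a))_{n\ge 0}$, is well defined, and I would prove it is monotone and preserves directed suprema. Well-definedness is immediate. The hypothesis $f_n^-(g_{n+1}(a))=g_n(a)$ is an exact equality, so it gives in particular the mutual approximation $f_n^-(g_{n+1}(a))\simeq g_n(a)$ required of a thread. Hence $\Phi(a)\in\Kinf$, and by construction $\pi_n\circ\Phi=g_n$ for every $n$.

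\emph{Monotonicity.} The order on $\Kinf$ is coordinatewise. If $a\le a'$ in $A$, each $g_n$ is continuous and hence monotone, so $g_n(a)\le g_n(a')$ for all $n$. This is exactly $\Phi(a)\le\Phi(a')$.

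\emph{Directed suprema.} Let $D\subseteq A$ be directed with supremum $d=\sup D$, which exists because $A$ is a complete homotopy partial order. By monotonicity, $\Phi(D)$ is directed in $\Kinf$ and is bounded above by $\Phi(d)$. Lemma~\ref{lem:coordinate-lub} then says $\sup\Phi(D)$ exists and is the thread $(u_n)_n$ with $u_n=\sup\pi_n(\Phi(D))=\sup g_n(D)$. Continuity of $g_n$ gives $\sup g_n(D)=g_n(d)$, so $u_n=\pi_n(\Phi(d))$ for every $n$. By thread extensionality (Remark~\ref{rem:thread-extensionality}), $\sup\Phi(D)=\Phi(d)$, which is continuity of $\Phi$.

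\emph{Main obstacle.} The delicate step is reconciling the order-theoretic suprema with exact equalities. In a homotopy partial order, suprema are determined only up to $\simeq$, so continuity of $g_n$ a priori gives only $g_n(d)\simeq\sup g_n(D)$. Lemma~\ref{lem:coordinate-lub} is likewise phrased with chosen suprema. To get literal coordinatewise equality, I would use antisymmetry at each finite stage $K_n$, the same move used in the proof of Lemma~\ref{lem:stage-agreement}, to turn the mutual approximation into equality. If continuity into $\Kinf$ is only meant up to $\simeq$, this step is unnecessary: coordinatewise $\simeq$ already yields $\sup\Phi(D)\simeq\Phi(d)$.
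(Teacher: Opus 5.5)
Your proposal is correct and follows the paper's argument. The paper also observes that the coherence equations put each tuple $(g_n(a))_n$ in $\Kinf$, and then concludes continuity because the order and suprema of $\Kinf$ are coordinatewise; you make that last step explicit via Lemma~\ref{lem:coordinate-lub} and thread extensionality, and you handle the $\simeq$-versus-equality point as the paper does in Lemma~\ref{lem:stage-agreement}.
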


\begin{proof}
  The tuple $a\mapsto (g_n(a))_n$ is continuous coordinatewise by hypothesis.
  The displayed coherence equations show that its image lands in the inverse
  limit.  Since the order and topology on $\Kinf$ are coordinatewise, this
  coherent tuple defines a continuous map into $\Kinf$.
\end{proof}

The following theorem is therefore reduced to five explicit ingredients:
Lemma~\ref{lem:density} for density of finite-stage approximants,
Lemma~\ref{lem:coherent-restrict} for well-definedness of the reify map,
Lemma~\ref{lem:application-shadows} for the directed application shadow, and
Lemma~\ref{lem:stage-agreement} for equality of continuous maps from dense
finite-stage agreement, together with
Lemma~\ref{lem:coordinatewise-cont} for continuity of coherent coordinate
families into the inverse limit.

\begin{theorem}[Exact $\Kinf$ reflexive packaging]
\label{thm:D}
  There exist globally continuous maps
  \[
    h : [\Kinf \to \Kinf] \to \Kinf,
    \qquad
    k : \Kinf \to [\Kinf \to \Kinf]
  \]
  such that
  \[
    k \circ h = \mathrm{id}_{[\Kinf \to \Kinf]},
    \qquad
    h \circ k = \mathrm{id}_{\Kinf}.
  \]
  The induced application operation
  \[
    \mathsf{app}(x,y):=k(x)(y)
  \]
  is jointly continuous. If $f_{n,\infty}:K_n\to\Kinf$ denotes the canonical
  stage embedding, then for every $x\in\Kinf$, every $y\in K_n$, and every
  $n\ge 0$,
  \[
    \pi_n\bigl(\mathsf{app}(x,f_{n,\infty}(y))\bigr)
      = \pi_{n+1}(x)(y).
  \]
\end{theorem}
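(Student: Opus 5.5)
The plan is to define $h$ and $k$ directly from the lemmas just proved and then verify the two identities by reducing to finite stages.

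\emph{Defining the maps.} For $h$, take the shifted stage-restriction thread of Lemma~\ref{lem:coherent-restrict}:
\[
  h(g):=\bigl(f_0^-(r_0(g)),\,r_0(g),\,r_1(g),\,\ldots\bigr),
  \qquad\text{so that } \pi_{n+1}(h(g))=\pi_n\circ g\circ f_{n,\infty}.
\]
Each coordinate $g\mapsto r_n(g)$ is continuous, since composition and evaluation are continuous in the function-space order. Coherence is exactly Lemma~\ref{lem:coherent-restrict}, so Lemma~\ref{lem:coordinatewise-cont} makes $h$ continuous.

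For $k$, set $k(x)(y):=\sup_n k_n(x,y)$, using the directed family of Lemma~\ref{lem:application-shadows}. Each $k_n$ is jointly continuous, because projection, evaluation in $K_{n+1}=[K_n\to K_n]$ and $f_{n,\infty}$ are continuous. A directed supremum of jointly continuous maps is jointly continuous. This gives continuity of $\mathsf{app}$, of each $k(x)$, and of $k$ itself.

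\emph{The stagewise formula.} This is the step I expect to be the main obstacle. The plan is to compute $\pi_n$ of the supremum when $y=f_{n,\infty}(y_0)$ with $y_0\in K_n$.
\begin{itemize}
  \item For $m\ge n$, we have $\pi_m(y)=f^+_{m-1}\cdots f^+_n(y_0)$. Combining thread coherence of $x$ with the projection-pair equations, $\pi_{m+1}(x)(\pi_m y)$ projects down to $\pi_{n+1}(x)(y_0)$.
  \item The tail $m\ge n$ therefore contributes exactly $\pi_{n+1}(x)(y_0)$ at coordinate $n$.
  \item The terms with $m<n$ lie below it by the monotonicity in Lemma~\ref{lem:application-shadows}.
  \item Finally, use the fact that suprema in $\Kinf$ are computed coordinatewise (Lemma~\ref{lem:coordinate-lub}).
\end{itemize}
The delicate part is making this exact rather than merely $\simeq$. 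I would argue by antisymmetry at the finite stage $K_n$, so that no chosen-supremum equivalence remains, as in the proof of Lemma~\ref{lem:stage-agreement}.

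\emph{The two identities.} For $k\circ h=\mathrm{id}$, fix $g$. By Lemma~\ref{lem:stage-agreement} it suffices to compare $k(h(g))$ and $g$ on embedded points $f_{n,\infty}(y_0)$. By the stagewise formula,
\[
  \pi_n\bigl(k(h(g))(f_{n,\infty}y_0)\bigr)=\pi_{n+1}(h(g))(y_0)=\pi_n\bigl(g(f_{n,\infty}y_0)\bigr).
\]
The same holds at every coordinate $m$. For $m\le n$ we apply $f^-$'s. For $m>n$ we rewrite $f_{n,\infty}y_0=f_{m,\infty}(f^+\cdots y_0)$ using Lemma~\ref{lem:stage-embedding} and run the same computation at stage $m$. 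Thread extensionality (Remark~\ref{rem:thread-extensionality}) then gives equality on embedded points, and Lemma~\ref{lem:stage-agreement} gives $k(h(g))=g$.

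For $h\circ k=\mathrm{id}$, fix $x$ and compare coordinates. For $n\ge0$,
\[
  \pi_{n+1}(h(k(x)))=\pi_n\circ k(x)\circ f_{n,\infty}=\pi_{n+1}(x),
\]
which is precisely the stagewise formula read as an equality of maps $K_n\to K_n$. Coordinate $0$ follows by applying $f_0^-$ and using coherence. Extensionality then gives $h(k(x))=x$.
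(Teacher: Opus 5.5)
Your proposal is correct and follows the paper's proof essentially step for step. You use the same shifted restriction thread for $h$ and the same directed supremum of application shadows for $k$. You obtain the stagewise formula by showing the $n$th coordinates of the shadows are eventually constant, and you derive both identities from the stage-agreement lemma and coordinatewise extensionality. Your check that $k(h(g))$ and $g$ agree at every coordinate $m$ on $f_{n,\infty}(y_0)$, not just at coordinate $n$, before invoking Lemma~\ref{lem:stage-agreement} makes explicit a step the paper passes over.
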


\begin{proof}
  We construct $h$ and $k$ directly from the finite stages.

  \smallskip
  \noindent\emph{Step 1: exact finite-stage embeddings.}
  Definition~\ref{def:stage-embedding} and
  Lemma~\ref{lem:stage-embedding} provide the canonical embeddings
  \[
    f_{n,\infty}:K_n\to\Kinf
  \]
  together with the exact identities
  \[
    \pi_n\circ f_{n,\infty}=\mathrm{id}_{K_n},
    \qquad
    f_{n+1,\infty}\circ f_n^+ = f_{n,\infty}.
  \]

  \smallskip
  \noindent\emph{Step 2: finite-stage restrictions of endomaps.}
  Given a continuous endomap $g:\Kinf\to\Kinf$, define its stage-$n$ restriction
  by
  \[
    r_n(g):=\pi_n\circ g\circ f_{n,\infty} \in [K_n\to K_n]=K_{n+1}.
  \]
  These restrictions are compatible:
  \[
    f_n^-\bigl(r_{n+1}(g)\bigr)=r_n(g).
  \]
  Indeed, for $a\in K_n$,
  \[
    f_n^-\bigl(r_{n+1}(g)\bigr)(a)
      = \pi_n\bigl(g(f_{n+1,\infty}(f_n^+(a)))\bigr)
      = \pi_n\bigl(g(f_{n,\infty}(a))\bigr)
      = r_n(g)(a),
  \]
  because $f_{n+1,\infty}\circ f_n^+$ and $f_{n,\infty}$ define the same thread
  in the inverse limit.

  The coherence condition of Lemma~\ref{lem:coherent-restrict} guarantees
  that we may assemble the coherent thread
  \[
    h(g)_0:=f_0^-(r_0(g)),\qquad h(g)_{n+1}:=r_n(g)\ \ (n\ge 0).
  \]
  Each coordinate $g\mapsto h(g)_n$ is continuous because it is built by
  composing continuous maps, and Lemma~\ref{lem:coherent-restrict} supplies the
  required thread equations.  Lemma~\ref{lem:coordinatewise-cont} therefore
  yields a continuous map
  \[
    h:[\Kinf\to\Kinf]\to\Kinf.
  \]

  \smallskip
  \noindent\emph{Step 3: finite-stage application shadows.}
  For $x,y\in\Kinf$, define the $n$th binary approximation to application by
  \[
    k_n(x,y):=f_{n,\infty}\bigl(\pi_{n+1}(x)(\pi_n(y))\bigr)\in\Kinf.
  \]
  Each $k_n$ is jointly continuous, and Lemma~\ref{lem:application-shadows}
  shows that the family $(k_n)_{n\ge 0}$ is monotone.  Hence the directed
  supremum
  \[
    k(x,y):=\sup_{n\ge 0}k_n(x,y)
  \]
  exists in $\Kinf$.  For each fixed coordinate $m$, the map
  \[
    (x,y)\longmapsto \pi_m(k(x,y))
  \]
  is the directed supremum of the continuous coordinate maps
  $(x,y)\mapsto \pi_m(k_n(x,y))$, hence is continuous.  The stage-embedding
  identities of Lemma~\ref{lem:stage-embedding} and the monotonicity from
  Lemma~\ref{lem:application-shadows} ensure that these coordinates satisfy the
  inverse-limit coherence equations.  Lemma~\ref{lem:coordinatewise-cont}
  therefore gives a jointly continuous map
  \[
    (x,y)\longmapsto k(x,y)\in\Kinf,
  \]
  and currying gives
  \[
    k:\Kinf\to[\Kinf\to\Kinf].
  \]

  \smallskip
  \noindent\emph{Step 4: the stagewise formula.}
  Fix $x\in\Kinf$ and $y\in K_n$. Restrict the map $k(x)$ along the exact
  embedding $f_{n,\infty}$. On the one hand, the $n$th approximant
  $k_n(x,f_{n,\infty}(y))$ has $n$th coordinate exactly
  $\pi_{n+1}(x)(y)$. On the other hand, every higher approximant has $n$th
  coordinate below the same value, so after projecting to level $n$ the entire
  directed family is bounded above by $\pi_{n+1}(x)(y)$ and contains an element
  attaining that upper bound. Since projection preserves directed suprema,
  \[
    \pi_n\bigl(k(x)(f_{n,\infty}(y))\bigr)=\pi_{n+1}(x)(y).
  \]
  This is exactly the stated stagewise application formula.

  \smallskip
  \noindent\emph{Step 5: the section law $k\circ h=\mathrm{id}$.}
  Let $g:\Kinf\to\Kinf$. For every stage $n$ and every $y\in K_n$, the previous
  formula applied to $h(g)$ gives
  \[
    \pi_n\bigl(k(h(g))(f_{n,\infty}(y))\bigr)
      = \pi_{n+1}(h(g))(y)
      = r_n(g)(y)
      = \pi_n\bigl(g(f_{n,\infty}(y))\bigr).
  \]
  Thus $k(h(g))$ and $g$ agree on every embedded finite stage, and
  Lemma~\ref{lem:stage-agreement} implies
  \[
    k(h(g))=g.
  \]

  \smallskip
  \noindent\emph{Step 6: the retract law $h\circ k=\mathrm{id}_{\Kinf}$.}
  For $x\in\Kinf$ and $n\ge 0$,
  \[
    \pi_{n+1}(h(k(x)))=r_n(k(x)).
  \]
  Evaluating this map at $y\in K_n$ and using the stagewise formula just proved,
  we obtain
  \[
    \pi_{n+1}(h(k(x)))(y)
      = \pi_n\bigl(k(x)(f_{n,\infty}(y))\bigr)
      = \pi_{n+1}(x)(y).
  \]
  Therefore the $(n+1)$st coordinates of $h(k(x))$ and $x$ coincide for all
  $n\ge 0$. At level $0$ one gets
  \[
    \pi_0(h(k(x)))=f_0^-(\pi_1(x))=\pi_0(x)
  \]
  by thread coherence. Hence, by Remark~\ref{rem:thread-extensionality},
  \[
    h(k(x))=x.
  \]

  Finally, $\mathsf{app}(x,y)=k(x)(y)$ is jointly continuous because it is the
  composite of the continuous map
  \[
    (x,y)\longmapsto (k(x),y)
  \]
  with the continuous evaluation map on the function space
  $[\Kinf\to\Kinf]$.
\end{proof}

\begin{remark}[Relation to classical projection-pair models]
Classical projection-pair inverse-limit constructions already motivate
reflexive objects such as $D_\infty$~\cite{Scott:continuous,AbramskyJung:domain}.
The two-sided isomorphism $h\circ k=\mathrm{id}$ and
$k\circ h=\mathrm{id}$ is therefore not claimed here as a new abstract
inverse-limit theorem.  The specific new content of Theorem~\ref{thm:D} is the
stagewise evaluation formula
\[
\pi_n\bigl(\mathsf{app}(x,f_{n,\infty}(y))\bigr)=\pi_{n+1}(x)(y),
\]
which characterizes application at every finite stage and is used in
Section~\ref{sec:witness} to pin down the witness interpretations at the level
of $K_0$ coordinates.  The explicit formulas for $h$ and $k$ given in the
proof also provide a constructive, self-contained verification of this concrete
$\Kinf$ instance rather than an appeal to the general inverse-limit theorem
alone.
\end{remark}

The maps $h$ and $k$ are implemented in the Lean formalization by projection-pair
data at each finite stage.  The following code fragment illustrates that
interface; Section~\ref{sec:witness} then uses the stagewise formula of
Theorem~\ref{thm:D} to separate proof-relevant witness classes inside $\Kinf$.

\begin{lstlisting}[caption={Projection pairs for the $K_n$ tower
  (from \texttt{KInfinity/Construction.lean}).}]
structure ProjectionPair
    (A : CompleteHomotopyPartialOrder)
    (B : CompleteHomotopyPartialOrder) where
  emb : ContinuousMap A B
  proj : ContinuousMap B A
  retract : ContinuousMap.comp proj emb = ContinuousMap.id A
  section_le :
    forall x : B.Obj, B.Rel (emb (proj x)) x
\end{lstlisting}

\section{Fixed-Span Witness Classification and Separation}\label{sec:witness}

Paper~II~\cite{MRdQ:Kinfinity} established, via the witness-separation span
restated below as Example~\ref{ex:witness-span}, that the $\Kinf$
model is nontrivial: a distinguished $\beta/\eta$ witness pair from the same
redex receives semantically distinct images.  What it did not do was analyze
the fixed forward witness language on that span or trace, in the canonical
higher tower on $\Kinf$, how the resulting point-level separation propagates
through all higher dimensions.  The present section remains deliberately local:
it is a theorem about this fixed span, not a global witness calculus for
arbitrary source/target pairs.

The present section has two logically distinct parts.  First, we introduce a
\emph{witness language} that remembers which contraction was chosen and, on the
fixed witness-separation span, classifies every witness in the chosen forward
fragment by a canonical $\beta$ or $\eta$ tag.  This is the model-specific
input.  Second, we show that once the
resulting point-level dichotomy is established, the absence of connecting
$1$-cells and higher cells is forced by the equality-generated structure of the
$\Kinf$ higher-cell tower, including the recursively completed dimensions.

\subsection{A basic witness-separation example}

Paper~II~\cite{MRdQ:Kinfinity} proved the non-triviality of $\Kinf$ by means of
the following concrete span, which we restate here for direct reference.

\begin{example}[Witness-separation span]\label{ex:witness-span}
Consider the source
\[
  M=((\lambda z.\,xz)\,y)
\]
and the common target
\[
  N=xy.
\]
The two basic one-step contractions are displayed explicitly as
\[
  ((\lambda z.\,xz)\,y)\;\to_\beta\;xz[y/z]=xy
\]
and
\[
  ((\lambda z.\,xz)\,y)\;\to_\eta\;xy,
\]
where in the second line the $\eta$-contraction is performed in the function
part $\lambda z.\,xz\to_\eta x$.  Thus the same source and target admit two
distinguished one-step witnesses: the direct $\beta$-contraction and the direct
$\eta$-contraction through the function part.
\end{example}

\begin{remark}[Notation for the running span]
Although the formal development works uniformly with de~Bruijn indices, we
display Example~\ref{ex:witness-span} in named-variable notation for
readability.  The free variables $x$ and $y$ should therefore be read only as
mnemonic placeholders for a fixed open term; all subsequent classification and
semantic arguments remain statements about the de~Bruijn formalism fixed in
Section~\ref{sec:prelim}.
\end{remark}

The point of the proof-relevant refinement is that we do not merely remember
that the endpoints are propositionally $\beta\eta$-equal; we remember
\emph{which contraction witness was chosen}.  A general proof-relevant explicit
$1$-witness from $M$ to $N$ may contain reflexive padding before or after the
essential contraction, but within this witness-separation span the witness
language still falls into exactly two reachability classes: the $\beta$-class
and the $\eta$-class.  Thus every proof-relevant
$1$-witness on this span carries a canonical tag
\[
  \mathrm{tag}(t)\in\{\beta,\eta\}.
\]
Figure~\ref{fig:witness-separation} records this fixed span and the two
canonical semantic images attached to its witness classes.

\begin{figure}[htbp]
\centering
\begin{tikzpicture}[
  term/.style={draw, rounded corners, align=center, inner sep=6pt, fill=gray!8},
  semantic/.style={draw, rounded corners, align=center, text width=4.2cm, inner sep=5pt, fill=gray!4},
  witness/.style={-{Latex[length=2mm]}, thick},
  sem/.style={-{Latex[length=2mm]}, thick, dashed}
]
  \node[term] (M) {$M=((\lambda z.\,xz)\,y)$};
  \node[term, right=5.5cm of M] (N) {$N=xy$};

  \draw[witness] (M) to[bend left=22] node[above] {$t_\beta$} (N);
  \draw[witness] (M) to[bend right=22] node[below] {$t_\eta$} (N);

  \path (M) -- coordinate[midway] (center) (N);
  \path (M) to[bend left=22] coordinate[midway] (betamid) (N);
  \path (M) to[bend right=22] coordinate[midway] (etamid) (N);

  \node[semantic, below=2.35cm of center, xshift=-2.55cm] (R)
    {$\beta$-class\\[2pt]{\footnotesize $\llbracket t \rrbracket_{\Kinf}$}\\[-1pt]{\footnotesize $= f_{0,\infty}(s^R_1)$}};
  \node[semantic, below=2.35cm of center, xshift=2.55cm] (L)
    {$\eta$-class\\[2pt]{\footnotesize $\llbracket t \rrbracket_{\Kinf}$}\\[-1pt]{\footnotesize $= f_{0,\infty}(s^L_1)$}};

  \draw[sem] (betamid) -- (R.north);
  \draw[sem] (etamid) -- (L.north);

  \node at ($(R.east)!0.5!(L.west)+(0,0.5)$) {$\neq$};
\end{tikzpicture}
\caption{The witness-separation span of Example~\ref{ex:witness-span}.  The same source and target admit a $\beta$-witness and an $\eta$-witness, but their canonical $\Kinf$ interpretations land at distinct points.}
\label{fig:witness-separation}
\end{figure}
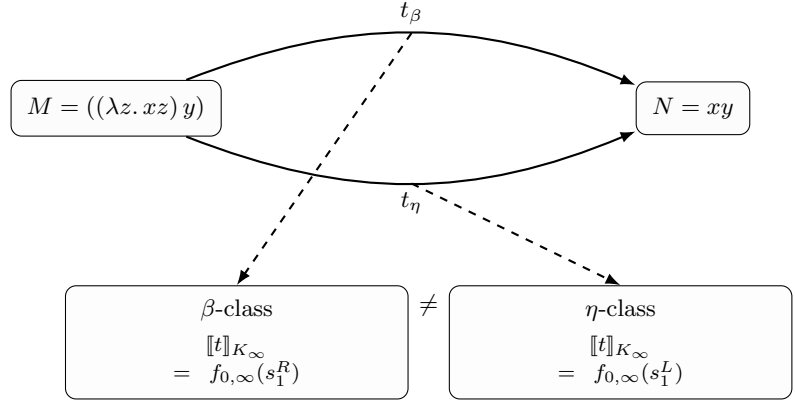

\subsection{Witness languages}

\begin{definition}[Fixed-span $1$-term witnesses]
  Fix the source $M$ and target $N$ of Example~\ref{ex:witness-span}.  Let
  $W(A,B)$, for $A,B\in\{M,N\}$, be the inductively generated typed family of
  witnesses generated by:
  \begin{enumerate}[label=(\arabic*)]
    \item the direct witnesses $t_\beta:W(M,N)$ and $t_\eta:W(M,N)$;
    \item the reflexive witnesses $\mathsf{refl}_M:W(M,M)$ and
      $\mathsf{refl}_N:W(N,N)$;
    \item composition: if $u:W(A,B)$ and $v:W(B,C)$, then
      $u\cdot v:W(A,C)$.
  \end{enumerate}
  A \emph{$1$-term witness} from $M$ to $N$ is an element of $W(M,N)$.  There is
  no inverse constructor in this language.
\end{definition}

\begin{remark}[Scope of the witness language]\label{rem:witness-language-scope}
The language $W(A,B)$ is intentionally much smaller than the full
reduction-sequence language or the full low-dimensional higher-conversion
syntax.  It keeps only the two forward one-step witnesses on the fixed span,
reflexivity, and composition.  In particular, it does not attempt to classify
arbitrary reduction sequences, inverses, whiskering, or congruence data.
Theorem~\ref{thm:C} is therefore a fixed-span statement about this forward
fragment only.
\end{remark}

In this section we restrict attention to the fixed witness-separation span of
Example~\ref{ex:witness-span}
\[
  M=((\lambda z.\,xz)\,y),
  \qquad
  N=xy.
\]
To each explicit $1$-term witness $t$ from $M$ to $N$ we attach two pieces of
semantic data.  This interpretation is intentionally local: it classifies the
fixed-span witness language $W(M,N)$ by the two distinguished points of
$\Kinf$ attached to Example~\ref{ex:witness-span}.

\begin{enumerate}[label=(\arabic*), leftmargin=*, labelsep=0.6em]
  \item a continuous equality $\epsilon_t$ witnessing that the source and target
  terms have the same ordinary denotation in the reflexive semantics, and
  \item a distinguished point of $\Kinf$ recording whether the witness belongs
  to the $\beta$-class or to the $\eta$-class.
\end{enumerate}

The second piece is what detects proof relevance.  Let $s^R_1$ and $s^L_1$ be
the right and left poles in the $S^1$-summand of the non-trivial base object
used to define $K_0=N^+$.  If $\mathrm{tag}(t)=\beta$, we set
\[
  \llbracket t\rrbracket_{\Kinf}:=f_{0,\infty}(s^R_1),
\]
and if $\mathrm{tag}(t)=\eta$, we set
\[
  \llbracket t\rrbracket_{\Kinf}:=f_{0,\infty}(s^L_1).
\]
Thus the interpretation depends only on the essential witness class, not on the
amount of reflexive padding surrounding it.  We now make this classification and
its stability precise before stating the main separation theorem.

\begin{lemma}[Witness classification on the witness-separation span]\label{lem:witness-normal}
  Let $t$ be an explicit $1$-term witness from the source $M =
  ((\lambda z.\,xz)\,y)$ to the target $N = xy$.  Then exactly one of the
  following holds:
  \begin{enumerate}[label=(\alph*)]
    \item $t$ is obtained from the direct $\beta$-contraction by inserting
      reflexive witnesses on the left and/or right;
    \item $t$ is obtained from the direct $\eta$-contraction by inserting
      reflexive witnesses on the left and/or right.
  \end{enumerate}
  In particular, the canonical tag $\mathrm{tag}(t)\in\{\beta,\eta\}$ is
  well-defined.
\end{lemma}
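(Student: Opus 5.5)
The proof goes by structural induction on the inductive family $W(A,B)$. We prove a strengthened statement that covers all four index pairs at once, since composition can route through $M$ or $N$.

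\textbf{Generalized claim.} For every $u:W(A,B)$:
\begin{enumerate}[label=(\roman*)]
  \item if $(A,B)=(M,M)$ or $(A,B)=(N,N)$, then $u$ is built only from reflexive witnesses and composition (call it \emph{padding});
  \item if $(A,B)=(M,N)$, then $u$ contains exactly one occurrence of a direct witness $t_\beta$ or $t_\eta$, with padding on either side;
  \item $W(N,M)$ is empty.
\end{enumerate}
The first thing to settle is that $M\neq N$ as de~Bruijn terms. This holds because $M$ is an application whose head is a $\lambda$-abstraction, while $N$ is an application of variables, so syntactic equality fails. Without this, the indices could not separate the cases.

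\textbf{Inductive steps.} We run the induction on $u$.
\begin{itemize}
  \item \emph{Generators.} The generators $t_\beta,t_\eta$ land in $W(M,N)$ and are their own normal forms. The reflexive generators land in the diagonal fibres and are padding.
  \item \emph{Composition.} For $u\cdot v$ with $u:W(A,B)$ and $v:W(B,C)$, we case on the middle index $B$.
  \begin{itemize}
    \item By the induction hypothesis (iii), $W(N,M)$ is empty. So if $(A,C)=(M,N)$, exactly one of $u,v$ is an $(M,N)$-witness and the other is padding on $M$ or on $N$. The single direct witness is therefore inherited, and padding is added on the left or right.
    \item If $(A,C)$ is diagonal, both factors are diagonal, since a route through the other endpoint would need a $W(N,M)$ factor. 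The composite is then padding.
    \item Clause (iii) is preserved because every generator and every composite with target $M$ has source $M$.
  \end{itemize}
\end{itemize}

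\textbf{Main obstacle: exclusivity.} The harder half is the ``exactly one'' part of the lemma, namely that a witness cannot be written both as padded $t_\beta$ and as padded $t_\eta$. We handle this by defining $\mathrm{tag}$ directly by recursion on $W(M,N)$, which makes well-definedness immediate and avoids any normal-form uniqueness argument. The recursion is:
\begin{itemize}
  \item $\mathrm{tag}(t_\beta)=\beta$ and $\mathrm{tag}(t_\eta)=\eta$;
  \item on a composite $u\cdot v$, $\mathrm{tag}$ returns the tag of whichever factor lies in $W(M,N)$. By the dichotomy above, exactly one factor does; the other is padding.
\end{itemize}
This recursion needs dependent elimination on the indices, which uses $M\neq N$ again. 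We then show by the same induction that any presentation of $u$ as padded $t_\beta$ forces $\mathrm{tag}(u)=\beta$, and likewise for $\eta$. Since $\beta\neq\eta$, alternatives (a) and (b) cannot both hold, and the canonical tag is well-defined.
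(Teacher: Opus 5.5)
Your proof is correct. Its skeleton matches the paper's: structural induction on the witness family, with $M\neq N$ excluding reflexive witnesses from $W(M,N)$, and the distinctness of the two direct generators giving exclusivity.

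Where you genuinely differ is in how the composite case is justified. The paper argues it from the reduction behaviour of the concrete terms: the source has only the outer $\beta$- and inner $\eta$-contraction, and $xy$ admits no further forward step. It phrases the induction as if on $W(M,N)$ alone, leaving implicit what the factors $u:W(M,B)$, $v:W(B,N)$ of a composite can be. You instead strengthen the induction to all four index pairs. You then isolate the purely combinatorial fact that $W(N,M)$ is empty, because there is no generator from $N$ to $M$ and no inverse constructor. From this, diagonal fibres contain only padding, and every composite in $W(M,N)$ splits as exactly one $(M,N)$-factor plus padding.

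Your version buys a self-contained argument whose only dependence on the $\lambda$-terms is $M\neq N$. It also shows transparently that the classification relies on the absence of inverses: with an inverse, $t_\beta\cdot t_\eta^{-1}\cdot t_\eta$ would break it. The paper's version is shorter and stays closer to the $\lambda$-calculus intuition behind the choice of generators, at the cost of leaving the generalized induction hypothesis unstated. Likewise, your recursively defined tag, together with the lemma that a padded $t_\beta$ has tag $\beta$, turns the paper's appeal to ``syntactically distinct'' steps into an explicit well-definedness proof.
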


\begin{proof}
  Proceed by structural induction on the explicit witness language.  The base
  constructors are the direct $\beta$- and $\eta$-witnesses together with
  reflexivity.  The reflexive case cannot occur between the chosen source and
  target because $M\neq N$, so any witness from $M$ to $N$ must be classified
  by one of the two nontrivial contractions.

  For a composite witness, once a nontrivial $\beta$- or $\eta$-witness has
  appeared, the remaining factor can contribute only reflexive padding.  Indeed,
  the witness-separation span admits no third nontrivial contraction pattern: the
  source has exactly the direct outer $\beta$-contraction and the inner
  $\eta$-contraction through the function part, and the target $xy$ supports no
  further forward $\beta$- or $\eta$-step.  Hence every sequential composite
  factors as one distinguished nontrivial contraction together with reflexive
  witnesses on the left and/or right.

  The two classes are disjoint because the direct $\beta$- and $\eta$-steps are
  syntactically distinct and have different chosen semantic tags.  Therefore
  exactly one of (a) or (b) holds, and the tag is well-defined.
\end{proof}

\begin{corollary}[Stability under reflexive padding]\label{cor:tag-stable}
  Let $t$ and $t'$ be explicit $1$-term witnesses from $M$ to $N$ that differ
  only by insertion or removal of reflexive subpaths.  Then
  $\mathrm{tag}(t)=\mathrm{tag}(t')$, and hence
  $\llbracket t \rrbracket_{\Kinf} = \llbracket t' \rrbracket_{\Kinf}$.
\end{corollary}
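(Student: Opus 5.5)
The plan is to reduce the corollary to Lemma~\ref{lem:witness-normal} together with the definition of $\llbracket\cdot\rrbracket_{\Kinf}$, which depends only on the tag. No semantic computation in $\Kinf$ should be needed. The work is to check that inserting or removing one reflexive subpath preserves the essential contraction, and then to iterate.

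First I would make precise the elementary move ``insertion of a reflexive subpath'': replacing a subterm $u:W(A,B)$ by $\mathsf{refl}_A\cdot u$ or by $u\cdot\mathsf{refl}_B$, or the reverse replacement. I would also allow reassociation, since Lemma~\ref{lem:witness-normal} speaks of padding on the left and right without fixing a bracketing. Writing $t\sim t'$ for the equivalence relation generated by these moves, it suffices to show that the tag is invariant under a single move; invariance under finitely many moves then follows by induction on their number.

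For the single-move step, define an auxiliary function $e$ on all of $W(A,B)$. It sends $\mathsf{refl}_M,\mathsf{refl}_N$ to ``none'', sends $t_\beta,t_\eta$ to $\beta,\eta$, and sends $u\cdot v$ to the unique non-``none'' value among $e(u)$ and $e(v)$, or ``none'' if both are ``none''. This is well defined because the language has no inverses. Any composite passing from $M$ to $N$ crosses the boundary exactly once, since there are no constructors $W(N,M)$. Hence both factors cannot carry a contraction, so $e(u)$ and $e(v)$ are never both non-``none''. By the induction in Lemma~\ref{lem:witness-normal}, for $t:W(M,N)$ the value $e(t)$ is exactly $\mathrm{tag}(t)$.

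A padding move changes a subterm $u$ to $u\cdot\mathsf{refl}$ or $\mathsf{refl}\cdot u$, and $e$ is unchanged because $e(\mathsf{refl})=\text{none}$. Reassociation also leaves $e$ unchanged, by the symmetric definition of $e$ on composites. Since $e$ is computed compositionally, changing a subterm without changing its $e$-value leaves the $e$-value of the whole term unchanged. Hence $\mathrm{tag}(t)=\mathrm{tag}(t')$. The equality $\llbracket t\rrbracket_{\Kinf}=\llbracket t'\rrbracket_{\Kinf}$ is then immediate: both are $f_{0,\infty}(s^R_1)$ if the common tag is $\beta$, and both are $f_{0,\infty}(s^L_1)$ if it is $\eta$.

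The main obstacle I expect is the well-definedness of $e$, that is, ruling out a composite containing two nontrivial contractions. I would handle it first through the typing argument: there are no witnesses in $W(N,M)$, and $W(M,M)$ and $W(N,N)$ contain only reflexive composites. I would prove the latter by induction on $W$, because the endpoints of $t_\beta,t_\eta$ are fixed as $(M,N)$ and $M\neq N$.
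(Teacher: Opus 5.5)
Your proof is correct, but it relies on a different device than the paper's. The paper argues at the level of the two classes from Lemma~\ref{lem:witness-normal}. Each class is by definition ``$t_\beta$ (resp.\ $t_\eta$) plus reflexive padding'', so padding cannot move a witness between classes. The conclusion $\mathrm{tag}(t)=\mathrm{tag}(t')$ then rests on the lemma's exclusivity clause, which the paper justifies only by noting that $t_\beta$ and $t_\eta$ are syntactically distinct.

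Your compositional invariant $e$ does both jobs at once. Because it is a function, disjointness of the classes is automatic. Invariance under a move at an arbitrary subterm position, and under reassociation, follows by congruence from $e(\mathsf{refl})=\text{none}$. Your typing argument ($W(N,M)$ is empty, and $W(M,M)$, $W(N,N)$ contain only reflexive composites) is also the precise reason a witness carries at most one contraction. The paper phrases that point instead through the absence of further reduction steps from $xy$.

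The paper's route is shorter because it leans on the lemma. Yours is self-contained and in effect reproves that the tag is well defined.

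One presentational point: define $e$ on composites totally, for example as the first non-``none'' value, which is an associative operation with unit ``none''. Then prove separately that two non-``none'' factors never occur, rather than building that fact into the definition.
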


\begin{proof}
  By Lemma~\ref{lem:witness-normal}, the two classes are defined precisely by
  adjoining reflexive witnesses to the direct $\beta$- or $\eta$-contraction.
  Inserting or removing reflexive padding therefore cannot change whether a
  witness is of $\beta$-type or $\eta$-type.  Hence
  $\mathrm{tag}(t)=\mathrm{tag}(t')$, and the $\Kinf$-interpretation was
  defined in terms of the tag alone.
\end{proof}

With these preliminaries in hand, we can state the main fixed-span persistence
theorem.

\begin{theorem}[Fixed-span witness classification and separation in the canonical $\Kinf$ tower]
\label{thm:C}
  Let $t_\beta$ and $t_\eta$ be the explicit $\beta_1$- and $\eta_1$-witnesses
  from Example~\ref{ex:witness-span}.  In the canonical higher tower on
  $\Kinf$ described in Remark~\ref{rem:kinfty-canonical-tower}, their canonical
  interpretations are:
  \begin{enumerate}[label=(\roman*)]
    \item distinct as $\Kinf$-points,
    \item not connected by any $1$-cell, i.e., by any equality in the carrier of
      $\Kinf$,
    \item consequently not connected by any higher cell in any positive
      dimension of that tower.
  \end{enumerate}
  The same conclusion holds for any pair of witnesses in the fixed-span
  language $W(M,N)$ whose canonical tags are $\beta$ and $\eta$,
  respectively.
\end{theorem}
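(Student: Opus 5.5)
The plan is to reduce everything to a single point-level inequality in $\Kinf$ and then let the canonical identity-type tower of Remark~\ref{rem:kinfty-canonical-tower} do the rest. By definition, the canonical interpretations are $\llbracket t_\beta\rrbracket_{\Kinf}=f_{0,\infty}(s^R_1)$ and $\llbracket t_\eta\rrbracket_{\Kinf}=f_{0,\infty}(s^L_1)$. The tags are well defined because of Lemma~\ref{lem:witness-normal}: $t_\beta$ is in the $\beta$-class and $t_\eta$ is in the $\eta$-class. For the final clause, take any pair $t,t'\in W(M,N)$ with $\mathrm{tag}(t)=\beta$ and $\mathrm{tag}(t')=\eta$. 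By definition of the interpretation (equivalently, Corollary~\ref{cor:tag-stable}), these have exactly the same two interpretations. So that clause follows once (i)--(iii) are proved for the direct pair.

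For (i), suppose $f_{0,\infty}(s^R_1)=f_{0,\infty}(s^L_1)$ in $\Kinf$. Apply the projection $\pi_0$ and use the exact identity $\pi_0\circ f_{0,\infty}=\mathrm{id}_{K_0}$ from Lemma~\ref{lem:stage-embedding}. This gives $s^R_1=s^L_1$ in $K_0=N^+$. Thread extensionality (Remark~\ref{rem:thread-extensionality}) makes the equality of threads literal, so applying $\pi_0$ is legitimate. It then remains to check that the right and left poles of the $S^1$-summand are distinct elements of the flat domain $N^+$. Flatness means the only order relations are with bottom, so two distinct non-bottom points are not even $\simeq$-equivalent. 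The exact stagewise packaging of Theorem~\ref{thm:D} is what guarantees that we are comparing genuine coordinates rather than chosen suprema.

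For (ii), a $1$-cell in the canonical tower is by definition an equality in the carrier of $\Kinf$, so (ii) is a restatement of (i). For (iii), I argue by induction on dimension $n\ge 1$ that the two points admit no $n$-cell between them. Any $n$-cell with $n\ge 2$ lies over a pair of parallel $(n-1)$-cells, and by iterating the source map one obtains a $1$-cell between the two points. Globularity ($s\circ s=s\circ t$, etc.) ensures that this iterated boundary really does connect the two original points. Such a $1$-cell would contradict (ii), so no higher cell can exist. The same holds in the recursively completed dimensions: there the cells are $\HigherDeriv$ triples, and each one still determines equality of its endpoints.

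The step I expect to be the main obstacle is the base-point distinctness $s^R_1\neq s^L_1$ in $K_0$. It depends on how the $S^1$-summand of $N=\bigsqcup_n S^n$ is encoded. I would first try to settle it by decidable equality of the chosen pole representatives. Failing that, I would use the construction of the flat carrier to show that the two poles are distinct elements and hence incomparable.
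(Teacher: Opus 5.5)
Your proposal is correct and follows essentially the same route as the paper. You classify witnesses by tag via Lemma~\ref{lem:witness-normal} and Corollary~\ref{cor:tag-stable}, separate the two interpreted points using $\pi_0\circ f_{0,\infty}=\mathrm{id}_{K_0}$ together with $s^R_1\neq s^L_1$ (which the paper simply takes as given), and then note that an empty $1$-cell type in the identity-type tower rules out all higher cells. The appeals to thread extensionality and to Theorem~\ref{thm:D} in your step (i) are unnecessary, since applying $\pi_0$ to an equality of threads is just congruence.
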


\begin{proof}
  By Lemma~\ref{lem:witness-normal}, every witness in the fixed-span language
  $W(M,N)$ lies in exactly one of the two classes obtained from the direct
  $\beta$- or $\eta$-contraction by reflexive padding, and
  Corollary~\ref{cor:tag-stable} shows that the tag is invariant under
  insertion or removal of those reflexive subpaths.  Thus every witness with
  $\beta$-tag is interpreted by the point $f_{0,\infty}(s^R_1)$, while every
  witness with $\eta$-tag is interpreted by $f_{0,\infty}(s^L_1)$.

  For the distinguished witnesses $t_\beta$ and $t_\eta$, we therefore have
  \[
    \pi_0(\llbracket t_\beta\rrbracket_{\Kinf})=s^R_1,
    \qquad
    \pi_0(\llbracket t_\eta\rrbracket_{\Kinf})=s^L_1.
  \]
  Since $s^R_1$ and $s^L_1$ are distinct poles of the chosen $S^1$-summand,
  the two interpreted points are distinct in $\Kinf$.  This proves~(i).

  By Remark~\ref{rem:kinfty-canonical-tower}, a $1$-cell in the canonical
  $\Kinf$ tower is literally an equality in the carrier of $\Kinf$.  Hence the
  distinct points $\llbracket t_\beta\rrbracket_{\Kinf}$ and
  $\llbracket t_\eta\rrbracket_{\Kinf}$ admit no $1$-cell, proving~(ii).
  Higher cells in that tower are iterated equalities between parallel lower
  cells, so once the boundary $1$-cell type is empty there can be no higher cell
  in any positive dimension with these endpoints.  This proves~(iii).

  The final sentence follows because every witness in $W(M,N)$ with $\beta$-tag
  is interpreted by the same canonical $\beta$-point and every witness with
  $\eta$-tag by the same canonical $\eta$-point.
\end{proof}

\begin{remark}
  Theorem~\ref{thm:C} should be read as a fixed-span persistence statement.
  Its model-specific input is the point-level $\beta/\eta$ distinction on the
  witness language of Example~\ref{ex:witness-span}; the higher-dimensional
  non-connection is then forced by the canonical identity-type tower chosen on
  $\Kinf$.  In particular, the theorem does not claim a global compositional
  witness semantics for arbitrary pairs of terms, nor does it classify
  alternative higher structures that one might place on $\Kinf$.
\end{remark}

\section{Discussion and Future Work}\label{sec:discussion}

\subsection{Summary of results}

We have presented four theorem packages on the theory of higher
$\lambda$-models:
\begin{enumerate}[label=(\Alph*)]
  \item Structural comparison with the recursive completion for higher
    $\lambda$-conversions (Theorem~\ref{thm:A}).
  \item A smaller sufficient semantic interface for associator/pentagon reasoning
    (Theorem~\ref{thm:B}).
  \item Fixed-span witness classification and separation in the canonical
    $\Kinf$ tower
    (Theorem~\ref{thm:C}).
  \item Exact reflexive packaging and application for $\Kinf$
    (Theorem~\ref{thm:D}).
\end{enumerate}

\noindent
The four theorems do not all have the same rôle.  Theorems~\ref{thm:B}
and~\ref{thm:D} contain the main generic-semantic and concrete-model input.
Theorem~\ref{thm:A} records the finite packaging boundary between the explicit
syntax and its recursive continuation, while Theorem~\ref{thm:C} explains what
the fixed-span point-level separation implies inside the canonical higher tower
on $\Kinf$.  The net claim of the paper is therefore not four co-equal
breakthroughs, but a division of labour between a small semantic input and the
structural consequences forced by the chosen towers.

\subsection{Comparison with related work}

We organize the discussion into three threads: direct antecedents within the
higher $\lambda$-model programme, the broader higher-categorical and
homotopical background, and the classical domain-theoretic lineage.

\medskip
\noindent\textbf{Direct antecedents.}
The immediate foundation is the two-paper programme of Mart\'{\i}nez-Rivillas
and de~Queiroz.  Paper~I~\cite{MRdQ:theory} introduced the extensional Kan
semantics, defined the explicit tower of $n$-conversions, and proved the main
inclusion $\THeq\subseteq\HoTFT$.  Paper~II~\cite{MRdQ:Kinfinity} constructed
$\Kinf$ and already isolated the distinguished $\beta/\eta$ witness pair
restated here as Example~\ref{ex:witness-span}.  The present work begins from
those results but addresses four specific remaining gaps.  For
Theorem~\ref{thm:A}, the missing point is not the existence of recursive
higher-derivation syntax, but the comparison between the explicit
low-dimensional tower and the equality-generated continuation together with the
proof of strict globular compatibility for that continuation.  For
Theorem~\ref{thm:B}, the missing point is the isolation of the smaller
WLWR-plus-front-seed interface sufficient for the later semantic arguments.
For Theorem~\ref{thm:C}, Paper~II gives only point-level separation; what was
missing was a witness-level classification on the witness-separation span together
with an account of how that separation behaves in the canonical higher tower on
$\Kinf$.  For
Theorem~\ref{thm:D}, the missing point is the exact reflexive packaging of
$\Kinf$ by globally continuous reify, reflect, and application maps with
explicit stagewise formulas and exact coordinate identities.

Earlier in the same programme, Mart\'{\i}nez-Rivillas and
de~Queiroz~\cite{MRdQ:homotopydomain} developed a homotopy domain theory that
enriches the classical Scott semantics with an explicit homotopical layer, and
in~\cite{MRdQ:topological} they studied the $\infty$-groupoid generated by an
arbitrary topological $\lambda$-model.  Both papers anticipate the idea that
  the conversion structure of the $\lambda$-calculus has a nontrivial
  higher-categorical life, but neither constructs the realization comparison of
  Theorem~\ref{thm:A} or the fixed-span witness classification of
  Theorem~\ref{thm:C}.

A closely related syntactic thread is the computational-paths programme of
de~Veras, Ramos, de~Queiroz, and de~Oliveira~\cite{deVeras:computational},
which showed that the paths generated by directed reduction steps in a rewriting
system form a weak groupoid.  Their construction provides a syntactic template
for the 1-dimensional layer of our tower: the reduction sequences between
$\lambda$-terms are precisely computational paths, and the explicit 2-cells
(homotopies between reduction sequences) can be viewed as the coherence data
that upgrade the resulting path groupoid to a $2$-groupoid.  The present paper
pushes that picture beyond the path-groupoid layer by comparing the recursive
completion of the low-dimensional core with the explicit higher-conversion tower
(Theorem~\ref{thm:A}).

\medskip
\noindent\textbf{Higher-categorical and homotopical background.}\quad
The broader higher-categorical references are conceptual rather than technical
antecedents.  Batanin~\cite{Batanin:monoidal} provides the globular language
relevant to our use of source/target compatibility, whiskering, and coherence,
but we do not import his machinery directly.  Likewise,
Grothendieck~\cite{Grothendieck:pursuing}, homotopy type theory~\cite{HoTT},
and Lurie~\cite{Lurie:htt} are cited only as background reminders that higher
identification data can carry nontrivial coherent structure.  The present paper
does not derive results from those frameworks; it studies a concrete
syntactic/semantic construction for the untyped $\lambda$-calculus.  We intend
these comparisons as orientation, not as claims of direct technical overlap.

\medskip
\noindent\textbf{Classical domain-theoretic lineage.}\quad
The $\Kinf$ construction (Section~\ref{sec:Kinfty}) is a direct descendant of
Scott's continuous lattice semantics~\cite{Scott:continuous} and the later
projection-pair account surveyed by Abramsky and
Jung~\cite{AbramskyJung:domain}.  Scott showed that every countably based
continuous lattice admits a reflexive structure, and the resulting models
interpret the untyped $\lambda$-calculus.  The $\Kinf$ tower replaces Scott's
single-step reflexivity with an iterated function-space tower
$K_{n+1}=[K_n\to K_n]$ whose inverse limit inherits both the classical
algebraic-domain structure and a new homotopical layer.

Barendregt's~\cite{Barendregt:lambda} systematic treatment of $\lambda$-models
provides the classical reference frame: the notions of extensionality, the
$\beta\eta$-theory, Church--Rosser, and confluence that appear in our
preliminaries and baseline sections are standard in the sense
of~\cite{Barendregt:lambda}.  The novelty of the present approach is that these
classical ingredients are enriched with higher-dimensional semantic structure
that was invisible in the purely propositional setting.

\medskip
\noindent\textbf{Nature of the contribution.}\quad
The rôle of the formal development is methodological and certifying rather than
expository.  The mathematical burden of the paper is carried by the arguments in
Sections~\ref{sec:all-dim}--\ref{sec:witness}.  Among those, Theorem~\ref{thm:D}
is the strongest model-specific statement, Theorem~\ref{thm:B} identifies a
smaller coherence input currently proved sufficient for the later semantic
comparison arguments, and
Theorems~\ref{thm:A} and~\ref{thm:C} describe structural consequences of the
recursive higher tower once the low-dimensional and point-level data are fixed.
Lean~4~\cite{Lean4} provided a disciplined setting in which these arguments were
discovered and checked, but the paper is intended to stand or fall on its
mathematics rather than on the existence of the formalization.
In the broader logic-and-computation tradition, the results illustrate a
recurring theme: that the proof-relevant refinement of a classical equivalence
relation (here, $\beta\eta$-convertibility) reveals computational
structure---witness classes, coherence data, semantic separation---that the
propositional collapse erases.  The constructive character of the proofs,
particularly the explicit witness constructions, aligns with the programme of
making the computational content of logical derivations visible and tractable.
In short: higher $\lambda$-models can now be read as a small coherence seed, an
exact concrete inverse-limit model, and the recursive higher consequences
forced by those data.

\subsection{Open problems}

\begin{enumerate}[label=\arabic*., leftmargin=*, labelsep=0.6em]
  \item \textbf{Eliminating the front-seed.}
    Derive the front-seed coherence package (seeds FS1 and FS2 in
    Theorem~\ref{thm:B}) directly from the bare extensional Kan complex interface.
    A positive answer would unify the bare and coherent semantic regimes,
    showing that every extensional Kan complex is automatically coherent
    in the sense needed for associator and pentagon reasoning.

  \item \textbf{Alternative higher towers on $\Kinf$.}
    The present paper uses the canonical identity-type tower on the carrier of
    $\Kinf$.  It would be interesting to compare this choice with more geometric
    higher structures, for example towers built from path objects or from the
    Scott topology, and to ask whether the fixed-span separation persists or
    strengthens there.

  \item \textbf{Typed extensions.}
    Extend the recursive-completion comparison and its higher-dimensional
    consequences to typed $\lambda$-calculi, where the type structure could
    provide additional constraints on the higher-cell tower.  Simply-typed and
    polymorphic calculi are natural first targets.

  \item \textbf{Comparison questions with HoTT.}
    Investigate whether the low-dimensional core $\mathscr{G}_\lambda$ or its
    recursive completion admits a useful comparison map to the fundamental
    higher-groupoid constructions of homotopy type theory~\cite{HoTT}.

  \item \textbf{Explicit strictification.}
    Determine whether the low-dimensional core $\mathscr{G}_\lambda$ or its
    recursive completion admits a strictification to a strict $\omega$-category,
    and if so, what information about the $\lambda$-calculus is lost in the
    process.
\end{enumerate}

\subsection{Formal verification note}

All results proved in this paper have been formally verified in the Lean~4
theorem prover~\cite{Lean4}; the complete development is available at
\begin{center}
  \url{https://github.com/Arthur742Ramos/HigherLambdaModel}
\end{center}
\noindent From the repository root, \texttt{lake build} checks the full
development.  The file \path{docs/theorem_index.md} is the exhaustive
paper-to-Lean claim matrix, and the project sources contain no local uses of
\texttt{sorry}, \texttt{admit}, or \texttt{axiom}.
From the mathematical point of view, the formalization plays an evidential role:
it certifies that every case split, boundary compatibility check, and recursive
induction invoked above has been verified mechanically.  In particular, the
recursive inductions in the proofs of Theorems~\ref{thm:A} and~\ref{thm:D}
involve several dozen case-by-case verifications that are straightforward but
tedious to reproduce on paper; the formal proof provides independent assurance
that none were dropped.
From a logic-and-computation perspective, the formalization also serves as a
computational certificate: the Lean kernel's type-checking algorithm
independently verifies every logical step, providing a degree of assurance that
complements traditional peer review.
The formalization also covers all four main theorems,
the baseline results of Section~\ref{sec:baseline}, and the shared
low-dimensional core through dimension~6 together with the recursively
completed tower used in Theorem~\ref{thm:A}.

To keep the present paper focused on the mathematics, we record only one short
code fragment in the main text---the \texttt{ProjectionPair} interface from
Section~\ref{sec:Kinfty}.  For readers who want a direct bridge back to the
formal development, Appendix~\ref{app:lean-map} records representative Lean
entry points for the four main theorems.


\appendix

\section{Lean roadmap for the main theorems}\label{app:lean-map}

This appendix gives a compact map from the four headline results of the paper
to representative declarations in the Lean development, together with short
illustrative excerpts.  It is not an exhaustive index; the full paper-to-Lean
claim matrix is maintained in \path{docs/theorem_index.md}.  The excerpts below
are intentionally short and omit some argument lists with \texttt{...} when the
omitted data are not mathematically central to the discussion in the paper.

\subsection*{Theorem A: comparison with the recursive completion}

\noindent\textbf{Primary files.}
\begin{quote}\raggedright
\path{HigherLambdaModel/Lambda/Coherence.lean}\\
\path{HigherLambdaModel/Lambda/TruncationCore.lean}\\
\path{HigherLambdaModel/Lambda/Truncation.lean}
\end{quote}
\noindent\textbf{Representative declarations.}
\begin{quote}\raggedright
\path{AllDimensionalHigherConversionCoherence}\\
\path{lambdaOmegaConstructiveRealize}\\
\path{lambdaConstructiveHigherConversionCoherence}
\end{quote}

\begin{lstlisting}[caption={Representative Lean excerpt for Theorem A
  (from \texttt{Lambda/TruncationCore.lean}).}]
def lambdaConstructiveHigherConversionCoherence :
    HigherLambdaModel.Lambda.Coherence.AllDimensionalHigherConversionCoherence
      HigherLambdaModel.Lambda.NTerms.lambdaTower
      lambdaOmegaReflexiveTower where
  realize := lambdaOmegaConstructiveRealize
  source_comm := by
    intro n x
    exact lambdaOmegaConstructiveRealize_source_comm x
  target_comm := by
    intro n x
    exact lambdaOmegaConstructiveRealize_target_comm x
\end{lstlisting}

\subsection*{Theorem B: front-seed sufficiency}

\noindent\textbf{Primary file.}
\begin{quote}\raggedright
\path{HigherLambdaModel/Lambda/ExtensionalKanHigher.lean}
\end{quote}
\noindent\textbf{Representative declarations.}
\begin{quote}\raggedright
\path{FrontSeedCoherentExtensionalKanComplex}\\
\path{FrontSeedCoherentExtensionalKanComplex.reductionSeq_comp_associator_in_Theory3}\\
\path{FrontSeedCoherentExtensionalKanComplex.reductionSeq_pentagon_in_Theory3}\\
\path{FrontSeedCoherentExtensionalKanComplex.homotopy2_pentagon_source_bridge_in_Theory3}\\
\path{FrontSeedCoherentExtensionalKanComplex.homotopy2_pentagon_target_bridge_in_Theory3}\\
\path{FrontSeedCoherentExtensionalKanComplex.homotopy2_pentagon_shell_bridge_in_Theory3}
\end{quote}

The source bridge lands in the structural semantic source shell.  The target and
shell bridges land in the mixed target shell where the right-whiskered factor
remains equality-generated; the fully structural pentagon comparison itself is
recorded separately by
\path{FrontSeedCoherentExtensionalKanComplex.reductionSeq_pentagon_in_Theory3}.

\begin{lstlisting}[caption={Representative Lean excerpt for Theorem B
  (from \texttt{Lambda/ExtensionalKanHigher.lean}).}]
structure FrontSeedCoherentExtensionalKanComplex
    extends ExtensionalKanComplex where
  pentagonInnerRightFrontReflPath3 : ...
  wlwrPath3 : ...

noncomputable def
    FrontSeedCoherentExtensionalKanComplex.reductionSeq_comp_associator_in_Theory3
    (K : FrontSeedCoherentExtensionalKanComplex) ... :=
  reductionSeq_comp_associator_in_Theory3_ofPentagonInnerRightFrontRefl
    K.toExtensionalKanComplex
    K.pentagonInnerRightFrontReflPath3
    K.wlwrPath3 ...
\end{lstlisting}

\subsection*{Theorem C: fixed-span persistence in the canonical $\Kinf$ tower}

\noindent\textbf{Primary files.}
\begin{quote}\raggedright
\path{HigherLambdaModel/KInfinity/Examples.lean}\\
\path{HigherLambdaModel/KInfinity/Properties.lean}\\
\path{HigherLambdaModel/KInfinity/ContinuousSemantics.lean}\\
\path{HigherLambdaModel/Lambda/NTerms.lean}
\end{quote}
\noindent\textbf{Representative declarations.}
\begin{quote}\raggedright
\path{example42NTerm1WitnessTag}\\
\path{example42NTerm1Witness_interpretation}\\
\path{betaEtaPaper_beta1Witness_interpretation}\\
\path{betaEtaPaper_eta1Witness_interpretation}\\
\path{betaEtaPaper_nterm1Witness_interpretations_distinct}\\
\path{betaEtaPaper_nterm1Witness_interpretations_no_path}\\
\path{betaEtaPaper_nterm1Witness_interpretations_no_2cell}\\
\path{betaEtaPaper_nterm1Witness_interpretations_no_3cell}\\
\path{betaEtaPaper_nterm1Witness_interpretations_no_4cell}\\
\path{betaEtaPaper_nterm1Witness_interpretations_no_5cell}\\
\path{betaEtaPaper_nterm1Witness_interpretations_no_recursive_higher_cell}
\end{quote}

\begin{lstlisting}[caption={Representative Lean excerpt for Theorem C
  (from \texttt{KInfinity/Examples.lean}).}]
noncomputable def betaEtaPaper_beta1Witness_interpretation :
    Example42NTerm1WitnessInterpretation :=
  example42NTerm1Witness_interpretation betaEtaPaper_beta1Witness

noncomputable def betaEtaPaper_eta1Witness_interpretation :
    Example42NTerm1WitnessInterpretation :=
  example42NTerm1Witness_interpretation betaEtaPaper_eta1Witness

theorem betaEtaPaper_nterm1Witness_interpretations_no_recursive_higher_cell
    (n : Nat) :
    not (Nonempty {x : kInfinityTower.Cell (n + 5) // ... }) := ...
\end{lstlisting}

\subsection*{Theorem D: exact reflexive packaging of $\Kinf$}

\noindent\textbf{Primary file.}
\begin{quote}\raggedright
\path{HigherLambdaModel/KInfinity/Properties.lean}
\end{quote}
\noindent\textbf{Representative declarations.}
\begin{quote}\raggedright
\path{hInfinity}\\
\path{kInfinityContinuous}\\
\path{kInfinityReflexiveCHPO}\\
\path{hInfinity_kInfinityContinuous_apply}\\
\path{kInfinityContinuous_hInfinityContinuous_apply}\\
\path{remark_4_3}\\
\path{Proposition43Witness}\\
\path{applicationContinuous}
\end{quote}

\begin{lstlisting}[caption={Representative Lean excerpt for Theorem D
  (from \texttt{KInfinity/Properties.lean}).}]
noncomputable def kInfinityReflexiveCHPO : ReflexiveCHPO KInfinityCHPO where
  reify := hInfinityContinuous
  reflect := kInfinityContinuous
  retract := by
    ext x
    exact hInfinity_kInfinityContinuous_apply x

theorem kInfinityContinuous_hInfinityContinuous_apply
    (f : ContinuousMap KInfinityCHPO KInfinityCHPO) :
    kInfinityContinuous (hInfinityContinuous f) = f := by
  ext x
  exact kInfinityContinuous_hInfinityContinuous_restrict_apply f x

noncomputable def applicationContinuous :
    ContinuousMap (Product.chpo KInfinityCHPO KInfinityCHPO) KInfinityCHPO := ...
\end{lstlisting}


\end{document}